\DeclareMathDelimiter{\orbrack}{\mathopen}{operators}{"5D}{largesymbols}{"03}
\DeclareMathDelimiter{\clbrack}{\mathclose}{operators}{"5B}{largesymbols}{"02}
\def\intcc#1{\ensuremath{[#1]}}
\def\intoo#1{\ensuremath{\orbrack#1\clbrack}}
\def\intoc#1{\ensuremath{\orbrack#1]}}
\def\intco#1{\ensuremath{[#1\clbrack}}
\def\defas{\ensuremath{\mathrel{:=}}}
\DeclareMathOperator{\id}{id}
\def\Set#1#2{\ensuremath{
\left\{#1\,\middle|\,#2\right\}
}}
\def\e{\mathrm{e}}
\def\Zonotope#1#2{\ensuremath{\mathcal{Z}(#1,#2)}}
\def\Gen#1{\ensuremath{\mathrm{Gen}(#1)}}
\def\norm#1{\|  #1 \| }
\def\abs#1{\left|  #1 \right| }
\newtheorem{theorem}{Theorem}
\newtheorem{lemma}[theorem]{Lemma}
\newtheorem{remark}{Remark}
\newtheorem{corollary}{Corollary}
\newtheorem{definition}{Definition}
\title{ Under-Approximate   Reachability Analysis for a Class of Linear Systems with Inputs}
\author{Mohamed Serry and  Jun Liu \thanks{Mohamed Serry is with the Department of Mechanical and Mechatronics Engineering, University of Waterloo, Waterloo, Ontario, Canada (email: mserry@uwaterloo.ca).}  
\thanks{Jun Liu is with the Department of Applied Mathematics, University of Waterloo, Waterloo, Ontario, Canada  (e-mail: j.liu@uwaterloo.ca).} \thanks{This work was funded by NSERC DG, CRC, and ERA programs.}  }   
\date{}   
\begin{document}
\maketitle
\begin{abstract}              
Under-approximations of reachable sets and tubes have been receiving  growing research attention due to their important roles in control synthesis and verification.  Available under-approximation methods applicable to continuous-time linear  systems  typically assume the ability to compute transition matrices and their integrals exactly, which is not feasible in general, and/or suffer from high computational costs. In this note, we attempt to overcome these drawbacks for a class of linear time-invariant (LTI) systems, where we propose a novel method to under-approximate finite-time forward reachable sets and tubes, utilizing approximations of the matrix exponential and its integral. In particular, we  consider the class  of continuous-time LTI systems with an identity input matrix and  initial and input values belonging to full dimensional sets that are affine transformations of closed unit balls. The proposed method yields computationally efficient under-approximations of reachable sets and tubes, when implemented using zonotopes,  with first-order convergence  guarantees  in the sense of the Hausdorff distance. To illustrate its  performance,  we implement our approach in three numerical examples, where linear systems of dimensions ranging between 2 and 200 are considered.  
\end{abstract}
\begin{IEEEkeywords}
Under-approximations, linear uncertain systems, matrix lower bounds, Hausdorff distance.
\end{IEEEkeywords}
\section{Introduction}
Reachable sets and tubes of dynamical systems 
 are  central in control synthesis and verification applications, especially in the presence of uncertainties and { constraints} \cite{i14sym,Althoff10}.  Mere approximations of reachable sets and tubes are not sufficient in such frameworks. Instead,  conservative estimations, i.e., over (outer)-approximations,  are typically utilized to ensure all possible behaviors of a given control system are accounted for, which  explains the sheer number of over-approximation methods in the literature  \cite{AsarinDangFrehseGirardLeGuernicMaler06,AlthoffFrehseGirard21}.    

 { In the last few years, there has been a growing interest in additionally   under-approximating reachable sets and tubes for synthesis and verification (see, e.g., \cite{ SheLi20,YinArcakPackardSeiler21,YangOzay21,GoubaultPutot20}), because under-approximations can be used in   estimating  subsets  of the  states that are  attainable under given control constraints \cite{GirardLeGuernicMaler06}, obtaining  subsets  of the initial states  from which all trajectories fulfill safety and reachability specifications   \cite{XueFranzleZhan19}, solving  falsification problems: verifying if reachable sets/tubes intersect with unsafe sets \cite{BhatiaFrazzoli04}, and  measuring the accuracy of computed over-approximations. Motivated by the aforementioned applications, we aim in this paper at investigating under-approximations of forward reachable sets and tubes for continuous-time LTI systems with uncertainties or constraints on the initial and input values, where we attempt to  overcome some of the limitations associated with available methods in the literature. }


Optimal control-based  polytopic approaches \cite{PecsvaradiNarendra71,varaiya2000reach} were proposed for linear systems with uncertain inputs and initial conditions. These methods rely on obtaining boundary points of reachable sets, associated with specified direction vectors,  and then computing the convex hull of the obtained boundary points.  Given a reachable set, the convergence of the polytopic approaches requires computing an increasing number of boundary points until the whole reachable set boundary is obtained, which is computationally expensive, especially if the dimension of the  reachable set is high. An ellipsoidal method was proposed in \cite{KurzhanskiVaraiya00b} for controllable linear systems with ellipsoidal initial and input sets. The method relies on solving initial value problems, derived from maximal principles similar to those presented in \cite{varaiya2000reach}, to obtain ellipsoidal subsets that touch a given reachable set at some boundary points, depending on specified direction vectors. The accuracy of the ellipsoidal method of \cite{KurzhanskiVaraiya00b} in under-approximating a given reachable set is increased by evaluating an increasing number of ellipsoids, which necessitates considering an increasing number of direction vectors, and then taking their union. The aforementioned optimal control-based approaches \cite{PecsvaradiNarendra71,varaiya2000reach,KurzhanskiVaraiya00b} assume the ability to compute transition matrices  and their  integrals  exactly, and that is not feasible in general. In addition, when under-approximating a reachable tube, the mentioned approaches use non-convex representations for the under-approximations, which are challenging to analyze in the contexts of verification and synthesis (see the introduction in \cite{SerryReissig21}).

In \cite{Serry21},  a set-propagation technique was proposed,  yielding convergent  under-approximations of forward reachable sets and tubes for a  general class of linear systems with uncertainties or constraints on the initial and input values, where the under-approximations of reachable sets and tubes  are given as  convex sets and  finite unions of convex  sets, respectively,  which can be analysed with relative ease. However, the method in \cite{Serry21}, like the approaches in \cite{HamadehGoncalves08,PecsvaradiNarendra71,varaiya2000reach,KurzhanskiVaraiya00b}, assumes the ability to compute transition matrices and their integrals exactly. A similar set-propagation approach was proposed in \cite{FaureCieslakHenryVerhaegenAnkersen21} for LTI systems, which relies on  computing Minkowski differences to under-approximate reachable tubes. The method in \cite{FaureCieslakHenryVerhaegenAnkersen21} also suffers from the issue of the method in \cite{Serry21}, { in addition to the computational hurdle of evaluating Minkowski differences (see, e.g., \cite{YangZhangJeanninOzay22})}.

{
The issue of evaluating transition matrices and their integrals exactly can be solved by adopting formally correct under-approximation methods that are designed for general nonlinear systems. For example,  a formally correct interval arithmetic under-approximation approach was proposed in \cite{GoubaultPutot20}; however, such a method  lacks convergence guarantees, and may produce empty under-approximations. 
In \cite{KochdumperAlthoff20},  a novel method was proposed for nonlinear systems with uncertain initial conditions that depends on computing over-approximations and then scaling them down to obtain under-approximations. A drawback of the approach in \cite{KochdumperAlthoff20} is that the scaling necessitates solving  (sub-optimal) optimization problems that involve enclosures  of boundaries of reachable sets, which can be computationally expensive  (see the comparison in Section \ref{sec:Comparison}). Finally, a  recent approach has been proposed in \cite{ShafaOrnik22} to under-approximate reachable sets  when the system parameters (e.g. system matrices) are not known exactly , where collected trajectories (i.e., data) are utilized to estimate the system dynamics. Such an approach is highly valuable in applications when  system identification cannot be attained, due, e.g., to failure or damage mid-operation. However, this approach is conservative (i.e., convergence cannot be attained in general)  as it considers the set of all systems that can  generate the collected trajectories, while fulfilling some specified  assumptions.}

In this work, we present a novel efficient approach that results in   under-approximations of forward finite-time reachable sets and tubes for a class of  LTI  systems with inputs, where approximations of the matrix exponential  and its integral are used, first-order convergence guarantees are provided,  and approximations of  reachable sets and tubes are given as convex sets and finite unions of convex sets, respectively. Our approach is fundamentally based on set-based recursive relations (see, e.g., \cite{Veliov92,Serry21}), where truncation errors are accounted for in an under-approximating manner, utilizing matrix lower bounds \cite{Grcar10}.


\section{Preliminaries}
\label{sec:Preliminaries}
Let $\mathbb{R}$, $\mathbb{R}_+$, $\mathbb{Z}$, $\mathbb{Z}_{+}$, and $\mathbb{C}$ denote
the sets of real numbers, non-negative real numbers, integers,
non-negative integers, and complex numbers, respectively, and
$\mathbb{N} \defas \mathbb{Z}_{+} \setminus \{ 0 \}$.
Let $\intcc{a,b}$, $\intoo{a,b}$,
$\intco{a,b}$, and $\intoc{a,b}$
denote closed, open, and half-open
intervals, respectively, with end points $a$ and $b$, and
 $\intcc{a;b}$, $\intoo{a;b}$,
$\intco{a;b}$, and $\intoc{a;b}$ stand for their discrete counterparts,
e.g.,~$\intcc{a;b} = \intcc{a,b} \cap \mathbb{Z}$, and
$\intco{1;4} = \{ 1,2,3 \}$.
Given any map $f \colon A \to B$, the image of a subset $C \subseteq
A$ under $f$ is given by $f(C) = \Set{ f(c) }{ c \in C }$.
The  identity map $X \to X \colon x \mapsto x$ is denoted by $\id$,
where the domain of definition $X$ will always be
clear form the context.
The Minkowski sum of  $M, N \subseteq \mathbb{R}^{n}$ is defined as
$M + N \defas \Set{ y + z }{ y \in M, z \in N }$.
By $\| \cdot \|$ we denote any norm on $\mathbb{R}^{n}$, the norm of a non-empty
subset $M \subseteq \mathbb{R}^{n}$ is defined by
$\| M \| \defas \sup_{x \in M} \| x \|$,   $\mathbf{B}_{n} \subseteq \mathbb{R}^{n}$ is the closed unit ball w.r.t.~$\| \cdot \|$, and the maximum norm
on $\mathbb{R}^n$ is denoted by $\| \cdot \|_{\infty}$ ($\| x \|_{\infty} = \max \Set{ | x_i | }{ i \in \intcc{1;n} },~x \in \mathbb{R}^n$).   Given norms on $\mathbb{R}^n$ and $\mathbb{R}^m$, $\mathbb{R}^{n \times m}$ is endowed with the
usual matrix norm,
$\| A \| = \sup_{\| x \| \le 1} \| A x \|$ for
$A \in \mathbb{R}^{n \times m}$, e.g.,  the matrix norm of $A$ induced by the maximum norm is $\norm{A}_{\infty}=\max_{i\in \intcc{1;n}}\sum_{j=1}^{m}\abs{A_{i,j}}$.  Given a square matrix $A\in \mathbb{R}^{n\times n}$,  $\rho(A)$ denotes the  spectral radius of $A$, i.e.,  $\rho(A)\defas \max \{\abs{\lambda},~\lambda \textrm{ is an eigenvalue of }A \}$. 
The spectral radius satisfies the  property below, which follows from  \cite[proof~of~Lemma~5.6.10,~p.~348]{HornJohnson12}).
\begin{lemma}\label{lem:SpectralRadius}
Let $A\in \mathbb{R}^{n\times n}$. For each $\epsilon>0$, there exists an induced matrix norm $\norm{\cdot}_{\epsilon}$ such that  $\norm{A}_{\epsilon}\leq \rho(A)+\epsilon$. 
\end{lemma}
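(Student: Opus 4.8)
The plan is to follow the classical argument based on Schur triangularization together with a diagonal similarity that drives the strictly‑upper‑triangular mass of the triangular factor to zero. First I would invoke Schur's theorem to write $A = UTU^{*}$ with $U\in\mathbb{C}^{n\times n}$ unitary and $T\in\mathbb{C}^{n\times n}$ upper triangular whose diagonal entries are the eigenvalues $\lambda_{1},\dots,\lambda_{n}$ of $A$; working momentarily over $\mathbb{C}$ is harmless, since a norm on $\mathbb{C}^{n}$ restricts to a norm on $\mathbb{R}^{n}$. Next, for a scaling parameter $\delta\in\intoo{0,1}$ put $D_{\delta}\defas\mathrm{diag}(1,\delta,\delta^{2},\dots,\delta^{n-1})$ and consider $D_{\delta}^{-1}TD_{\delta}$: a direct entrywise computation gives $(D_{\delta}^{-1}TD_{\delta})_{i,j}=\delta^{\,j-i}T_{i,j}$ for $i\le j$ and $0$ otherwise, so the diagonal is unchanged while every strictly‑upper‑triangular entry acquires a positive power of $\delta$.

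Then I would bound the maximum‑row‑sum norm. Since $\delta^{\,j-i}\le\delta$ whenever $j>i$ and $\delta\le 1$, summing absolute values along each row yields
\[
\norm{D_{\delta}^{-1}TD_{\delta}}_{\infty}\;\le\;\max_{i}\abs{\lambda_{i}}+\delta\sum_{i<j}\abs{T_{i,j}}\;=\;\rho(A)+\delta\,c,
\]
where $c\defas\sum_{i<j}\abs{T_{i,j}}$ depends only on $A$. Hence, given $\epsilon>0$, choose $\delta$ small enough that $\delta c\le\epsilon$ and set $S\defas UD_{\delta}$, an invertible matrix with $S^{-1}AS=D_{\delta}^{-1}U^{*}AUD_{\delta}=D_{\delta}^{-1}TD_{\delta}$. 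Define the vector norm $\norm{x}_{\epsilon}\defas\norm{S^{-1}x}_{\infty}$ and let $\norm{\cdot}_{\epsilon}$ also denote the matrix norm it induces. Substituting $y=S^{-1}x$ in the definition of the induced norm gives, for every matrix $M$, $\norm{M}_{\epsilon}=\sup_{\norm{y}_{\infty}\le 1}\norm{S^{-1}MS\,y}_{\infty}=\norm{S^{-1}MS}_{\infty}$ (for real $M$ restricted to real $x$ one at least obtains $\norm{M}_{\epsilon}\le\norm{S^{-1}MS}_{\infty}$, which suffices here). Taking $M=A$ yields $\norm{A}_{\epsilon}=\norm{D_{\delta}^{-1}TD_{\delta}}_{\infty}\le\rho(A)+\epsilon$, as claimed.

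The only genuinely delicate point is the bookkeeping at the real/complex interface: when $A$ has non‑real eigenvalues, Schur's form forces complex $U$, $T$, and $S$, so one must confirm that $\norm{x}_{\epsilon}=\norm{S^{-1}x}_{\infty}$ is still a bona fide norm on $\mathbb{R}^{n}$ and that the matrix norm it induces, restricted to real matrices, is dominated by $\norm{S^{-1}AS}_{\infty}$. Everything else — the entrywise formula for $D_{\delta}^{-1}TD_{\delta}$, the row‑sum estimate, and the similarity invariance of the induced norm under the change of variables — is routine. One could instead remain entirely over $\mathbb{R}$ via the real Schur form, but then the $2\times2$ blocks attached to conjugate eigenvalue pairs are not contracted by $D_{\delta}$ and would require an additional rotation/rescaling, so the complex route is cleaner.
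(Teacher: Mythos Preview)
Your argument is correct and is exactly the classical Schur-triangularization-plus-diagonal-scaling construction: the paper does not supply its own proof but simply cites \cite[proof of Lemma~5.6.10, p.~348]{HornJohnson12}, and that reference carries out precisely the steps you describe. Your handling of the real/complex interface (restricting the complex norm $\norm{S^{-1}\cdot}_{\infty}$ to $\mathbb{R}^{n}$ and observing that the induced operator norm is then dominated by the full complex $\norm{S^{-1}AS}_{\infty}$) is the right way to close the gap and is all that is needed here.
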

Given $A\in \mathbb{R}^{n\times m}$, $\mathrm{rank}(A)$, $\mathrm{col}(A)$, and $A^{\dagger}\in \mathbb{R}^{m\times n}$   denote the rank,  the column space, and the Moore–Penrose inverse of $A$, respectively ($A^{\dagger}=A^{-1}$ if $A$ is invertible). 
The following lemma provides a sufficient condition to check invertibility. %
\begin{lemma}\label{Lem:ContinuityOfMatrixInverse}
Let $P,\tilde{P}\in \mathbb{R}^{n\times n}$, and $P$ be invertible. If  $\norm{P-\tilde{P}}\norm{{P}^{-1}}<1$, then $\tilde{P}$ is invertible.
\end{lemma}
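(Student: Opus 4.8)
The plan is to reduce the invertibility of $\tilde P$ to the invertibility of a matrix of the form $\id - E$ with $\norm{E}<1$, a classical fact. First I would factor $\tilde P = P - (P - \tilde P) = P\bigl(\id - P^{-1}(P-\tilde P)\bigr)$, which is legitimate since $P$ is invertible. Setting $E \defas P^{-1}(P - \tilde P)$, submultiplicativity of the induced matrix norm together with the hypothesis gives $\norm{E} \le \norm{P^{-1}}\,\norm{P - \tilde P} < 1$.

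Next I would show that $\id - E$ is invertible. The cleanest route is a short contradiction argument: if $\id - E$ were singular, there would exist $x \ne 0$ with $(\id - E)x = 0$, i.e.\ $x = Ex$, whence $\norm{x} = \norm{Ex} \le \norm{E}\,\norm{x} < \norm{x}$, a contradiction. (Equivalently, one could invoke convergence of the Neumann series $\sum_{k\ge 0} E^{k}$, which yields $(\id-E)^{-1}$ explicitly.) Since $P$ and $\id - E$ are both invertible, their product $\tilde P$ is invertible, with $\tilde P^{-1} = (\id - E)^{-1}P^{-1}$.

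There is essentially no serious obstacle here; the only point requiring care is that the norm on $\mathbb{R}^{n\times n}$ is the operator norm induced by the chosen vector norm on $\mathbb{R}^n$, so that both $\norm{Ex}\le\norm{E}\,\norm{x}$ and submultiplicativity hold — this is exactly the convention fixed in the Preliminaries, so the argument goes through verbatim for any such pair of norms.
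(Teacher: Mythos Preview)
Your argument is correct and is precisely the standard perturbation/Neumann-series proof; the paper itself does not spell out a proof but merely cites \cite[proof of Theorem~5.7,~p.~111]{Maccluer09}, which contains essentially the same reasoning you give.
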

\begin{proof}
See, e.g., \cite[proof of Theorem 5.7,~p.~111]{Maccluer09}.
\end{proof}
   Given $A\in \mathbb{R}^{n\times m}\setminus \{0\}$, $\norm{A}_{l}$ denotes the matrix lower bound of $A$ w.r.t.  $\norm{\cdot}$, which is defined as
$
\norm{A}_{l}\defas \max \{m\in \mathbb{R}~|~\forall y\in \mathrm{col}(A), \exists x\in \mathbb{R}^{m}~ \mathrm{s.t.}~ Ax=y~\mathrm{and}~ m\norm{x}\leq \norm{y}\}, 
$ see \cite{Grcar10}.
Matrix lower bounds satisfy the following properties, which are essential in our derivation of the proposed method.
\begin{lemma}\label{lem:MatrixLowerBounds}
Let $A\in \mathbb{R}^{n\times m}$ and $B\in \mathbb{R}^{m\times p}$, where $\mathrm{rank}(A)=n$ and $\mathrm{rank}(B)=m$ (full row rank). Then:
\begin{enumerate}
\item[(a)] $\norm{A}_{l} \mathbf{B}_{n}\subseteq A \mathbf{B}_{m}$ (follows from \cite[Lemma~2.3]{Grcar10}).

\item[(b)] $
{1}/{\norm{A^{\dagger}}}\leq \norm{A}_{l}$ (follows from \cite[Lemma~2.2]{Grcar10}).
\item [(c)]
$\norm{A}_{l}\norm{B}_{l}\leq \norm{AB}_{l}$ (follows from \cite[Lemma~4.4]{Grcar10}).

\end{enumerate}
\end{lemma}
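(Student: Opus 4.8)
The plan is to obtain each of the three items as a direct unwinding of the defining formula for the matrix lower bound $\norm{\cdot}_{l}$, invoking the corresponding statements in \cite[Lemmas~2.2,~2.3,~4.4]{Grcar10} only where convenient. The common preliminary observation is that the full row rank hypotheses make all the column spaces in sight as large as possible: $\mathrm{rank}(A)=n$ gives $\mathrm{col}(A)=\mathbb{R}^{n}$, $\mathrm{rank}(B)=m$ gives $\mathrm{col}(B)=\mathbb{R}^{m}$, and hence $\mathrm{col}(AB)=A\,\mathrm{col}(B)=A\,\mathbb{R}^{m}=\mathrm{col}(A)=\mathbb{R}^{n}$, so $AB$ has full row rank as well; moreover $\norm{A}_{l}>0$ (equivalently $\norm{A^{\dagger}}<\infty$) for a full row rank $A$, since the quotient-norm $y\mapsto\min\{\norm{x}:Ax=y\}$ is continuous and positive on the compact unit sphere of $\mathbb{R}^{n}$. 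These facts let us apply the ``for all $y\in\mathrm{col}(\cdot)$'' clause in the definition to exactly the vectors we need.

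For (a) I would take any $y$ with $\norm{y}\le\norm{A}_{l}$; since $y\in\mathrm{col}(A)=\mathbb{R}^{n}$, the definition of $\norm{A}_{l}$ supplies $x$ with $Ax=y$ and $\norm{A}_{l}\norm{x}\le\norm{y}\le\norm{A}_{l}$, so $\norm{x}\le1$ (here $\norm{A}_{l}>0$ is used) and therefore $y=Ax\in A\mathbf{B}_{m}$; this is precisely the inclusion $\norm{A}_{l}\mathbf{B}_{n}\subseteq A\mathbf{B}_{m}$. For (b) I would verify that the constant $1/\norm{A^{\dagger}}$ is admissible in the maximum defining $\norm{A}_{l}$: for $y\in\mathrm{col}(A)$ set $x\defas A^{\dagger}y$, so that $Ax=AA^{\dagger}Ax_{0}=Ax_{0}=y$ whenever $y=Ax_{0}$, and $\norm{x}=\norm{A^{\dagger}y}\le\norm{A^{\dagger}}\norm{y}$ rearranges to $(1/\norm{A^{\dagger}})\norm{x}\le\norm{y}$; hence $1/\norm{A^{\dagger}}\le\norm{A}_{l}$. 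For (c) I would chain the two preimage constructions: given $z\in\mathrm{col}(AB)$, the definition of $\norm{A}_{l}$ yields $y$ with $Ay=z$ and $\norm{A}_{l}\norm{y}\le\norm{z}$; then $y\in\mathbb{R}^{m}=\mathrm{col}(B)$, so the definition of $\norm{B}_{l}$ yields $x$ with $Bx=y$ and $\norm{B}_{l}\norm{x}\le\norm{y}$; combining, $ABx=z$ and $\norm{A}_{l}\norm{B}_{l}\norm{x}\le\norm{A}_{l}\norm{y}\le\norm{z}$, so $\norm{A}_{l}\norm{B}_{l}$ is admissible in the maximum defining $\norm{AB}_{l}$, giving $\norm{A}_{l}\norm{B}_{l}\le\norm{AB}_{l}$.

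I do not expect any genuine difficulty here; the only thing to be careful about — more bookkeeping than obstacle — is keeping the full row rank hypotheses in play throughout, namely using $\mathrm{col}(A)=\mathbb{R}^{n}$, $\mathrm{col}(B)=\mathbb{R}^{m}$, $\mathrm{col}(AB)=\mathbb{R}^{n}$ so that the quantifiers in the definition reach the vectors $y$, $z$, $x$ we construct, and using $\norm{A}_{l}>0$ (finiteness of $\norm{A^{\dagger}}$) so that the division step in part (a) and the estimate in part (b) are legitimate. With those in hand, all three claims reduce to one-line manipulations consistent with \cite{Grcar10}.
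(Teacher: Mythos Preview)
Your proposal is correct. The paper itself does not give a proof of this lemma at all: each of (a)--(c) is stated with a parenthetical citation to the corresponding lemma in \cite{Grcar10}, and no further argument is supplied. What you have done is to make those citations self-contained by unwinding the definition of $\norm{\cdot}_{l}$ directly under the full row rank hypotheses, which is exactly how the cited lemmas in \cite{Grcar10} proceed; so your route is not genuinely different, just more explicit. The only point worth flagging is that in part~(a) you silently use that the maximum in the definition of $\norm{A}_{l}$ is attained (so that $c=\norm{A}_{l}$ is itself an admissible constant); this is true, by the same compactness-of-the-unit-sphere argument you invoke for $\norm{A}_{l}>0$, but you may want to say so explicitly.
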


The collection of  full-dimensional subsets of $\mathbb{R}^{n}$ that are affine transformations of closed unit balls is denoted by $\mathbb{A}_{n}$,
where, in this work, saying  $\Omega=c+G\mathbf{B}_{p}\in \mathbb{A}_{n}$ implies that $c\in \mathbb{R}^{n}$, $G\in \mathbb{R}^{n\times p}$, and $\mathrm{rank}(G)=n$. { Integration of single-valued functions presented herein is
always understood in the sense of Bochner}. Given a non-empty subset $X \subseteq \mathbb{R}^n$ and a measurable subset $S \subseteq \mathbb{R}$, $X^{S}$ denotes the set of Lebesgue measurable maps with domain  $S$ and  values in $X$.
Given a non-empty compact subset $W\subseteq \mathbb{R}^{m}$, a compact interval $\intcc{a,b}\subset \mathbb{R}$, and an integrable matrix-valued  function
$F: \intcc{a,b} \to \mathbb{R}^{n\times m}$, we define the set-valued integral
$
\int_{a}^{b} F(t) W \mathrm{d}t\defas \bigcup_{w \in W^{\intcc{a,b}}}\int_{a}^{b} F(t) w(t) \mathrm{d}t.
$ 
The Hausdorff distance $\mathfrak{d}(\Omega,\Gamma)$ of two non-empty
bounded subsets $\Omega, \Gamma \subseteq \mathbb{R}^n$
w.r.t.~$\| \cdot \|$ is defined as
$
\mathfrak{d}(\Omega,\Gamma)
\defas
\inf
\Set{ \varepsilon > 0}{%
\Omega \subseteq \Gamma + \varepsilon \mathbf{B}_{n},
\Gamma \subseteq \Omega + \varepsilon \mathbf{B}_{n}
}.
$
 The Hausdorff distance satisfies the triangle
inequality, in addition to the following  set
of  properties.
%
%
%
\begin{lemma}[Hausdorff distance]
\label{lem:HausdorffDistance}
Let $\Omega, \Omega', \Gamma, \Gamma' \subseteq \mathbb{R}^n$ be
non-empty and bounded, and let $A,B \in \mathbb{R}^{m \times n}$. Then, the following hold (see \cite[Lemma~A.2]{SerryReissig21}):
\begin{enumerate}
\item[(a)]
\label{lem:HausdorffDistance:1}
$\mathfrak{d}( \Omega + \Gamma, \Omega' + \Gamma' )
\le
\mathfrak{d}( \Omega, \Omega' ) + \mathfrak{d}( \Gamma, \Gamma' )$.
\item[(b)]
\label{lem:HausdorffDistance:2}
$
\mathfrak{d}( A \Omega, A \Gamma )
\le
\| A \| \mathfrak{d}( \Omega, \Gamma )
$.
\item[(c)]
\label{lem:HausdorffDistance:3}
$
\mathfrak{d}( A \Omega, B \Omega )
\le
\| A-B \| \| \Omega \|$ (implying $\mathfrak{d}(\Omega,0)\leq \norm{\Omega}$).

\item[(d)]
\label{lem:HausdorffDistance:4}
Let
$(\Omega_i)_{i\in I}$ and
$(\Gamma_i)_{i\in I}$ be families of non-empty subsets of
$\mathbb{R}^n$. Then, 
$
\mathfrak{d} \left(
\cup_{i \in I} \Omega_i,
\cup_{i \in I} \Gamma_i
\right)
\le
\sup_{i \in I} \mathfrak{d}( \Omega_i, \Gamma_i )
$.

\end{enumerate}
\end{lemma}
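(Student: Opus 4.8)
The plan is to derive all four items directly from the definition of $\mathfrak{d}$ through its ``$\varepsilon$-enlargement'' characterization, using only elementary facts about norm balls: positive homogeneity, monotonicity of Minkowski addition, the identity $\varepsilon_{1}\mathbf{B}_{n}+\varepsilon_{2}\mathbf{B}_{n}=(\varepsilon_{1}+\varepsilon_{2})\mathbf{B}_{n}$ for $\varepsilon_{1},\varepsilon_{2}\ge 0$ (a consequence of the triangle inequality and homogeneity of $\norm{\cdot}$), and the defining inclusion $A\mathbf{B}_{n}\subseteq\norm{A}\mathbf{B}_{m}$ of the induced matrix norm. In each case I would exhibit the two symmetric inclusions witnessing a candidate enlargement radius, and then pass to the infimum over such radii.

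For (a), fix $\varepsilon_{1}>\mathfrak{d}(\Omega,\Omega')$ and $\varepsilon_{2}>\mathfrak{d}(\Gamma,\Gamma')$, so $\Omega\subseteq\Omega'+\varepsilon_{1}\mathbf{B}_{n}$ and $\Gamma\subseteq\Gamma'+\varepsilon_{2}\mathbf{B}_{n}$; adding and using the ball identity gives $\Omega+\Gamma\subseteq(\Omega'+\Gamma')+(\varepsilon_{1}+\varepsilon_{2})\mathbf{B}_{n}$, and symmetrically with the primes swapped, whence $\mathfrak{d}(\Omega+\Gamma,\Omega'+\Gamma')\le\varepsilon_{1}+\varepsilon_{2}$; letting $\varepsilon_{1}\downarrow\mathfrak{d}(\Omega,\Omega')$ and $\varepsilon_{2}\downarrow\mathfrak{d}(\Gamma,\Gamma')$ closes it. For (b), fix $\varepsilon>\mathfrak{d}(\Omega,\Gamma)$; then $A\Omega\subseteq A\Gamma+\varepsilon A\mathbf{B}_{n}\subseteq A\Gamma+\varepsilon\norm{A}\mathbf{B}_{m}$, and likewise with $\Omega$ and $\Gamma$ exchanged, so $\mathfrak{d}(A\Omega,A\Gamma)\le\varepsilon\norm{A}$, and $\varepsilon\downarrow\mathfrak{d}(\Omega,\Gamma)$ finishes it. For (c), I would argue pointwise: for each $x\in\Omega$, $\norm{Ax-Bx}\le\norm{A-B}\,\norm{x}\le\norm{A-B}\,\norm{\Omega}$, so $Ax\in Bx+\norm{A-B}\norm{\Omega}\mathbf{B}_{m}\subseteq B\Omega+\norm{A-B}\norm{\Omega}\mathbf{B}_{m}$; hence $A\Omega\subseteq B\Omega+\norm{A-B}\norm{\Omega}\mathbf{B}_{m}$ and, by the symmetry of the bound in $A$ and $B$, the reverse inclusion, giving the claim, with the parenthetical being the special case $A=\id$, $B=0$. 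For (d), put $s\defas\sup_{i\in I}\mathfrak{d}(\Omega_{i},\Gamma_{i})$ (nothing to prove if $s=\infty$); for every $\varepsilon>s$ and every $i\in I$ we have $\Omega_{i}\subseteq\Gamma_{i}+\varepsilon\mathbf{B}_{n}\subseteq(\cup_{j\in I}\Gamma_{j})+\varepsilon\mathbf{B}_{n}$, so taking the union over $i$ yields $\cup_{i\in I}\Omega_{i}\subseteq(\cup_{j\in I}\Gamma_{j})+\varepsilon\mathbf{B}_{n}$, symmetrically for the other inclusion, and $\varepsilon\downarrow s$ gives the bound.

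There is no genuine obstacle here --- these are standard properties of the Hausdorff metric --- but the point that deserves a little care is the passage from the strict inequalities $\varepsilon>\mathfrak{d}(\cdot,\cdot)$ back to the infimum in (a) and (d): one must note that the displayed inclusions hold for \emph{every} such $\varepsilon$, so that the infimum of admissible enlargement radii for the left-hand side is bounded by the infimum (respectively supremum) of those on the right, which is exactly what the definition of $\mathfrak{d}$ delivers. I would also flag that every inclusion step silently uses monotonicity of Minkowski addition and that $\mathbf{B}_{n}$ is convex and balanced, so that scaling and adding balls behaves as expected. Alternatively one may simply invoke \cite[Lemma~A.2]{SerryReissig21}, but the self-contained argument above is short enough to record in full.
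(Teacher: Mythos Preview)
Your proposal is correct. The paper itself does not give a proof of this lemma; it simply cites \cite[Lemma~A.2]{SerryReissig21} and moves on. Your self-contained argument via $\varepsilon$-enlargements is the standard one and goes through without issue: the ball identity $\varepsilon_1\mathbf{B}_n+\varepsilon_2\mathbf{B}_n=(\varepsilon_1+\varepsilon_2)\mathbf{B}_n$, the inclusion $A\mathbf{B}_n\subseteq\norm{A}\mathbf{B}_m$, and monotonicity of Minkowski addition are exactly the ingredients needed, and you correctly handle the passage from strict inequalities back to the infimum. The only difference from the paper is that you actually write out the proof rather than deferring to the external reference, which is a strict improvement in self-containedness at negligible cost in length.
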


\section{System Description and Problem Formulation}
\label{sec:SystemDescription}
In this paper, we  consider the LTI system
\begin{equation}\label{eq:LinearSystem}
\dot{x}(t)=A x(t)+u(t)
\end{equation}
 over the time interval $\intcc{0,T}$, where $x(t)\in \mathbb{R}^{n}$ is the system state, $u(t)\in \mathbb{R}^{n}$ is the input, and $A\in \mathbb{R}^{n\times n}$  is the system matrix. The initial value $x(0)$  and the input $u(t),~t\in\intcc{0,T}$  belong to known sets $X_{0}$ and $U$, respectively.  Let $T$,  $A$, $X_0$, and $U$ be fixed and assume that:
 \begin{enumerate}
  \item The time interval $\intcc{0, T}$ is compact and $T>0$.

\item $X_{0}=c_{x}+G_{x}\mathbf{B}_{p_x}\in \mathbb{A}_{n}$ and $U=c_{u}+G_{u}\mathbf{B}_{p_u}\in \mathbb{A}_{n}$,  where $c_{x}, c_{u}, G_{x}$, and $G_{u}$ are known. 
 \end{enumerate}

 Given an initial value $x(0)=x_{0}$ and an integrable input signal $v: \intcc{0,T}\rightarrow \mathbb{R}^{n}$,  the unique solution, $\varphi(\cdot, x_0,v)$, to system \eqref{eq:LinearSystem}, generated by  $x_0$ and $v(\cdot)$, on $\intcc{0,T}$ is given by \cite[Theorem~6.5.1,~p.~114]{Lukes82}
$$
\varphi(t, x_0,v)=\e^{tA}x_{0}+\int_{0}^{t}\e^{(t-s)A}v(s)\mathrm{d}s,~ t\in \intcc{0 , T}.
$$
Herein,  $\e^{(\cdot)A}$ (or $\exp((\cdot)A)$) is the matrix exponential function, which has the Taylor series expansion
$
\exp(t A)=\sum_{j=0}^{\infty}{(tA)^j}/{j!}.
$
Define 
$$
\mathcal{L}(t,k)\defas \sum_{j=0}^{k-1}\frac{(tA)^j}{j!},\quad \mathcal{T}(t,k)\defas \int_{0}^{t}\mathcal{L}(s,k)\mathrm{d}s=\sum_{j=0}^{k-1}\frac{t^{j+1}A^j}{(j+1)!},
$$
where $\mathcal{L}(t,k)$ is the truncated $(k-1)$th-order Taylor expansion of $\exp(t A)$ and $\mathcal{T}(t,k)$ is its definite integral. 
{
It is easy to verify that, for all $t\in \mathbb{R}_{+}$ and $k\in \mathbb{N}$, 
\begin{align}\label{eq:BoundonExpAndL}
\norm{\e^{tA}}&\leq \e^{t\norm{A}},~ \norm{\mathcal{L}(t,k)}\leq \e^{t\norm{A}},\\\label{eq:BoundonApproxError}
\norm{\e^{tA}-\mathcal{L}(t,k)}&\leq \theta(t\norm{A},k)\leq  \frac{(t\norm{A})^k}{k!}\e^{t\norm{A}},
\end{align}
where $\theta\colon \mathbb{R}_{+}\times \mathbb{N}\rightarrow \mathbb{R}_{+}$ is defined as
\begin{equation}\label{eq:Theta}
\theta(r,p)\defas \e^{r}-\sum_{j=0}^{p-1}\frac{r^{j}}{j!}=\sum_{j=p}^{\infty}\frac{r^{j}}{j!},~r\in \mathbb{R}_{+},~p\in \mathbb{N}.
\end{equation}
The function $\theta$ is  infinitely differentiable and monotonically increasing in its first argument, and monotonically decreasing in its second argument, with a greatest lower bound of zero.
}

Let $\mathcal{R}(t)$ denote the forward reachable set of system $\eqref{eq:LinearSystem}$ at time $t\in \intcc{0,T}$, with  initial values in $X_{0}$, and input signals with values in $U$. In other words, 
\begin{equation}\label{eq:R(t)}
\mathcal{R}(t)\defas\e^{t A}X_{0}+\int_{0}^{t}\e^{sA}U\mathrm{d}s,~t\in \intcc{0,T}.
\end{equation}
 The set $\exp(t A)X_{0}$ is referred to as the homogeneous reachable set at time $t$ and is denoted by $\mathcal{R}_{h}(t)$, and the set $\int_{0}^{t}\exp(sA)U\mathrm{d}s$ is referred to as the input reachable set at time $t$ and is denoted by $\mathcal{R}_{u}(t)$. 
 Furthermore, let $\intcc{a,b} \subseteq \intcc{0,T}$.  Then,
 $
 \mathcal{R}(\intcc{a,b})=\bigcup_{t\in \intcc{a,b}}\mathcal{R}(t)
 $
 is the reachable tube over the time interval $\intcc{a,b}$. In this paper, we aim to compute arbitrarily precise under-approximations of { $\mathcal{R}_{h}(T)$, $\mathcal{R}_{u}(T)$, $\mathcal{R}(T)$, and $\mathcal{R}(\intcc{0,T})$}, utilizing the approximations $\mathcal{L}$ and $\mathcal{T}$. 

\begin{remark}[Applications of under-approximations] \label{rem:Applications}
{ In this work,  we focus  on under-approximating  forward finite reachable sets for linear  systems with   uncertainties or constraints on the initial values and inputs. Under-approximations can be  beneficial  in control synthesis and verification applications. For example, let us consider the case when the input set $U$ corresponds to a disturbance set, and  $X_\mathrm{US}\subset \mathbb{R}^{n}$ be an unsafe set. If $
\mathcal{R}([0,T]) 
$, or an under-approximation of it, intersects with $X_\mathrm{US}$, then this indicates that the initial set $X_{0}$, for sure, does not satisfy safety specifications (see the framework of falsification, e.g., in \cite{BhatiaFrazzoli04}). 

Moreover, let $X_\mathrm{target}\subseteq \mathbb{R}^{n} $ be a target set, and let the set $U$ correspond to a control input set. Define the backward reachable set (see, e.g., \cite{Mitchell07}) 
$$
\mathcal{R}_{\mathrm{bw}} (X_\mathrm{target},T)\defas
\e^{T(-A)}X_\mathrm{target}+\e^{T(-A)}(-\mathcal{R}_{u}(T)).
$$
 If  an initial value of interest  $x_{0}\in \mathbb{R}^{n}$ belongs to $\mathcal{R}_{\mathrm{bw}} (X_\mathrm{target},T)$, or an under-approximation of it, then the existence of a control signal with values in $U$, driving $x_{0}$ to $X_\mathrm{target}$ in time $T$, is guaranteed, and such a control signal can be obtained by, e.g., solving an associated constrained optimal control problem. Note that  under-approximating $\mathcal{R}_{\mathrm{bw}} (X_\mathrm{target},T)$  requires under-approximating $\mathcal{R}_{u}(T)$, $\exp(T(-A))(-\mathcal{R}_{u}(T))$, and $\exp(T(-A))X_\mathrm{target}$. In this work, we address directly how to under-approximate $\mathcal{R}_{u}(T)$. Moreover, the tools in this work, and in particular Lemma \ref{lem:UnderApproximatingImageofLinearMap}, can be easily applied to under-approximate $\exp(T(-A))(-\mathcal{R}_{u}(T))$, and $\exp(T(-A))X_\mathrm{target}$.}  
\end{remark}

 \section{Proposed method}
 \label{sec:ProposedMethod}
In this section, we thoroughly derive the proposed method. The convergence guarantees of the method are discussed in Section \ref{sec:Convergence}.

 We start with the following theoretical recursive relation, which is  algorithmically similar to  efficient over-approximation methods in the literature \cite{GirardLeGuernicMaler06}, and is the basis of the proposed method of this work. 


\begin{lemma}\label{lem:TheoreticalRecursiveRelation}
 Given $N\in \mathbb{N}$, define $\tau=T/N$,   
\begin{subequations}
\label{eq:TheoreticalRecursiveRelation}
\begin{align}\label{eq:S}
S_{0}^{N}&=X_{0},&S_{i}^{N}&=\e^{\tau A}S_{i-1}^{N}, ~i\in \intcc{1;N},\\\label{eq:V}
V_{0}^{N}&=\mathcal{R}_{u}(\tau),&V_{i}^{N}&=\e^{\tau A}V_{i-1}^{N},~i\in \intcc{1;N},\\\label{eq:W}
W_{0}^{N}&=\{0\},& W_{i}^{N}&=W_{i-1}^{N}+V_{i-1}^{N},~i\in \intcc{1;N},\\
\label{eq:Gamma}
&&\Gamma_{i}^{N}&=S_{i}^{N}+W_{i}^{N},~i\in \intcc{0;N}.
\end{align}
\end{subequations}
Then,  for all $i\in \intcc{0;N},$
 $
 S_{i}^{N}= \mathcal{R}_{h}(i\tau), 
 $
 $
 W_{i}^{N}= \mathcal{R}_{u}(i\tau), 
 $ 
 and
 $
 \Gamma_{i}^{N}= \mathcal{R}(i\tau) 
 $. Moreover,   $\bigcup_{i=0}^{N}\Gamma_{i}^{N}\subseteq \mathcal{R}(\intcc{0,T})$.
 \end{lemma}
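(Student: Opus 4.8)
The plan is to prove the three equalities $S_i^N = \mathcal{R}_h(i\tau)$, $W_i^N = \mathcal{R}_u(i\tau)$, $\Gamma_i^N = \mathcal{R}(i\tau)$ by induction on $i$, exploiting the semigroup property of the matrix exponential and the additivity of the set-valued integral over subintervals. First I would settle $S_i^N = \mathcal{R}_h(i\tau) = \e^{i\tau A} X_0$: this is immediate from \eqref{eq:S} by induction, using $\e^{\tau A}\e^{(i-1)\tau A} = \e^{i\tau A}$, i.e.\ $S_i^N = \e^{\tau A} S_{i-1}^N = \e^{\tau A}\e^{(i-1)\tau A} X_0 = \e^{i\tau A} X_0$.

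Next I would handle $V_i^N$ and $W_i^N$. From \eqref{eq:V}, induction gives $V_i^N = \e^{i\tau A} \mathcal{R}_u(\tau) = \e^{i\tau A}\int_0^\tau \e^{sA} U \,\mathrm{d}s$. A change of variables $s \mapsto s - i\tau$ together with linearity of the exponential inside the set-valued integral shows $V_i^N = \int_{i\tau}^{(i+1)\tau} \e^{sA} U\,\mathrm{d}s$; this is the reachable-input contribution accumulated over the $i$-th time step. Then \eqref{eq:W} telescopes: $W_i^N = \sum_{j=0}^{i-1} V_j^N = \sum_{j=0}^{i-1}\int_{j\tau}^{(j+1)\tau}\e^{sA}U\,\mathrm{d}s$. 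The key point here is that the set-valued integral is additive over a partition of the interval of integration, namely $\int_0^{i\tau}\e^{sA}U\,\mathrm{d}s = \sum_{j=0}^{i-1}\int_{j\tau}^{(j+1)\tau}\e^{sA}U\,\mathrm{d}s$, which holds because $U$ is constant in time so any measurable selection on $[0,i\tau]$ restricts to selections on the subintervals and, conversely, selections on the subintervals concatenate to a selection on $[0,i\tau]$. This yields $W_i^N = \mathcal{R}_u(i\tau)$. The equality $\Gamma_i^N = \mathcal{R}(i\tau)$ then follows directly from \eqref{eq:Gamma}, \eqref{eq:R(t)}, and the two facts just established.

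For the final inclusion $\bigcup_{i=0}^N \Gamma_i^N \subseteq \mathcal{R}([0,T])$, observe that $\Gamma_i^N = \mathcal{R}(i\tau)$ and each $i\tau \in [0,T]$ since $\tau = T/N$ and $i \le N$; hence $\Gamma_i^N \subseteq \bigcup_{t\in[0,T]}\mathcal{R}(t) = \mathcal{R}([0,T])$ for every $i$, and taking the union over $i$ preserves the inclusion.

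The main obstacle is the careful justification of the interchange and splitting of the set-valued (Bochner) integral: showing $\e^{i\tau A}\int_0^\tau \e^{sA}U\,\mathrm{d}s = \int_{i\tau}^{(i+1)\tau}\e^{sA}U\,\mathrm{d}s$ and the additivity $\int_0^{i\tau} = \sum_j \int_{j\tau}^{(j+1)\tau}$ at the level of sets of selections rather than single functions. This requires arguing that measurable selections concatenate and restrict correctly and that the outer linear map $\e^{i\tau A}$ commutes with the union over selections — routine but the one place where the measurability bookkeeping must be stated explicitly; everything else is a short induction.
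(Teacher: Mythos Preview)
Your proof is correct and follows essentially the same route as the paper: induction for $S_i^N$, the identification $W_i^N=\mathcal{R}_u(i\tau)$, and then the definition of $\Gamma_i^N$ and the obvious inclusion for the tube. The only difference is that where you unpack $V_i^N=\int_{i\tau}^{(i+1)\tau}\e^{sA}U\,\mathrm{d}s$ and justify the additivity of the set-valued integral explicitly, the paper simply cites \cite[Corollary~3.6.2]{Sontag98} for $W_i^N=\mathcal{R}_u(i\tau)$; your version is the spelled-out content of that citation.
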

 \begin{proof}
By induction, $S_{i}^{N}=\exp(i\tau A)X_{0}=\mathcal{R}_{h}(i\tau),~i\in \intcc{0;N}$. According to \cite[Corollary~3.6.2,~p.~118]{Sontag98},  ${W}_{i}^{N}=\mathcal{R}_{u}(i\tau),~i\in \intcc{0;N}$. Therefore, $\Gamma_{i}^{N}=S_{i}^{N}+W_{i}^{N}=\mathcal{R}_{h}(i\tau)+\mathcal{R}_{u}(i\tau)=\mathcal{R}(i\tau),~i\in \intcc{0;N}$. The last claim follows from 
$\bigcup_{i\in \intcc{0;N}}\Gamma_{i}^{N}=\bigcup_{i\in \intcc{0;N}}\mathcal{R}(i\tau)\subseteq \bigcup_{t\in \intcc{0,T}}\mathcal{R}(t)= \mathcal{R}(\intcc{0,T})$.
 \end{proof}
{ The algorithm in Lemma \ref{lem:TheoreticalRecursiveRelation}, in theory, addresses the under-approximation problem of this work; however, this algorithm cannot be implemented exactly in general.} In the next sections, we address the challenges in implementing this theoretical algorithm and propose a practically implementable method, which is the main contribution of this work.

\subsection{Under-approximating  the image of the matrix exponential}
 The first obstacle in implementing the algorithm in Lemma \ref{lem:TheoreticalRecursiveRelation} is that recursive computations of the images of the  matrix exponential are required (see the definitions of $S_{i}^{N}$ and $V_{i}^{N}$ in Equations \eqref{eq:S} and \eqref{eq:V}, respectively), and exact computations of such images are not feasible in general as the exact value of the matrix exponential is generally not known. The following  technical lemma provides an insight into how to replace the aforementioned images with  under-approximations, where approximations of the matrix exponential can be utilized. 

\begin{lemma}\label{lem:UnderApproximatingImageofLinearMap}
Let $P,\tilde{P}\in \mathbb{R}^{n\times m}$ and $\Omega=c+G \mathbf{B}_{p} \in  \mathbb{A}_{m}$. Assume that  $P$ is of full row rank and that $\norm{(\tilde{P}-P)c} \leq \norm{PG}_l$. Then,
$
\tilde{P}(c+\alpha G \mathbf{B}_{p})\subseteq  P\Omega
$
for any $\alpha \in \intcc{0,\alpha_{m}(\Omega,P,\tilde{P})}$, where
\begin{equation}\label{eq:alpha_m}
\alpha_{m}(\Omega,P,\tilde{P})\defas \frac{\norm{PG}_{l}-\norm{(\tilde{P}-P)c}}{\norm{PG}_{l}+\norm{(\tilde{P}-P)G}}.
\end{equation}
\end{lemma}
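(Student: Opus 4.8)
The plan is to show the inclusion $\tilde{P}(c+\alpha G\mathbf{B}_p)\subseteq P\Omega = Pc + PG\mathbf{B}_p$ by taking an arbitrary point of the left-hand set and exhibiting a preimage-style decomposition. A generic element of $\tilde{P}(c+\alpha G\mathbf{B}_p)$ has the form $\tilde{P}c + \alpha \tilde{P}Gb$ with $\|b\|\le 1$. I would rewrite this as $Pc + \big[(\tilde{P}-P)c + \alpha(\tilde{P}-P)Gb + \alpha PGb\big]$, so it suffices to prove the bracketed ``error plus scaled generator'' term lies in $PG\mathbf{B}_p$. Since $\mathrm{rank}(P)=n$ and $\Omega\in\mathbb{A}_m$ forces $\mathrm{rank}(G)=m$ (wait --- here $\Omega=c+G\mathbf{B}_p\in\mathbb{A}_m$ means $\mathrm{rank}(G)=m$), the product $PG$ has full row rank $n$, so by Lemma~\ref{lem:MatrixLowerBounds}(a) we have $\norm{PG}_l\mathbf{B}_n\subseteq PG\mathbf{B}_p$. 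Hence it is enough to bound the error term in norm by $\norm{PG}_l$.

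The norm estimate is the routine part: by the triangle inequality and submultiplicativity,
\[
\norm{(\tilde{P}-P)c + \alpha(\tilde{P}-P)Gb + \alpha PGb} \le \norm{(\tilde{P}-P)c} + \alpha\norm{(\tilde{P}-P)G} + \alpha\norm{PG},
\]
using $\|b\|\le 1$. Wait --- I should be careful: the last term is $\alpha\norm{PGb}$ which I want bounded by something involving $\norm{PG}_l$, not $\norm{PG}$. Let me instead keep $\alpha PGb$ inside the ``good'' set directly. The cleaner route: write the generic point as $Pc + PG(\alpha b) + \big[(\tilde P-P)c+\alpha(\tilde P-P)Gb\big]$, and I want the whole thing to be $Pc + PG b'$ for some $\|b'\|\le 1$. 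This requires solving $PG b' = PG(\alpha b) + e$ where $e:=(\tilde P-P)c+\alpha(\tilde P-P)Gb$. Since $PG(\alpha b)+e$ lies in $\mathrm{col}(PG)=\mathbb{R}^n$, and $\norm{e}\le \norm{(\tilde P-P)c}+\alpha\norm{(\tilde P-P)G}$, the definition of the matrix lower bound guarantees a $b''$ with $PG b'' = e$ and $\norm{PG}_l\norm{b''}\le\norm{e}$, i.e. $\norm{b''}\le \norm{e}/\norm{PG}_l$. Then $b' = \alpha b + b''$ works if $\norm{b'}\le \alpha + \norm{e}/\norm{PG}_l \le 1$.

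So the final step is to check that $\alpha\le\alpha_m(\Omega,P,\tilde P)$ implies $\alpha + \big(\norm{(\tilde P-P)c}+\alpha\norm{(\tilde P-P)G}\big)/\norm{PG}_l \le 1$. Clearing denominators, this is $\alpha\norm{PG}_l + \norm{(\tilde P-P)c} + \alpha\norm{(\tilde P-P)G} \le \norm{PG}_l$, i.e. $\alpha(\norm{PG}_l+\norm{(\tilde P-P)G}) \le \norm{PG}_l - \norm{(\tilde P-P)c}$, which is exactly the defining inequality of $\alpha_m$; the hypothesis $\norm{(\tilde P-P)c}\le\norm{PG}_l$ ensures the right side is nonnegative so that $\alpha_m\ge 0$ and the interval $[0,\alpha_m]$ is nonempty. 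I would also remark at the outset that $PG$ has full row rank so that $\norm{PG}_l>0$ and Lemma~\ref{lem:MatrixLowerBounds}(a),(c) apply. The only mild subtlety --- the ``main obstacle,'' such as it is --- is invoking the matrix lower bound correctly: one must note that $e\in\mathbb{R}^n=\mathrm{col}(PG)$ so that a preimage with the controlled norm is guaranteed to exist, rather than naively trying to apply a right-inverse bound. Everything else is bookkeeping with the triangle inequality.
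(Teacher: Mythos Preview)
Your proof is correct and follows essentially the same approach as the paper: decompose $\tilde P(c+\alpha Gb)$ as $Pc + \alpha PGb + e$ with $e=(\tilde P-P)c+\alpha(\tilde P-P)Gb$, bound $\norm{e}\le (1-\alpha)\norm{PG}_l$ using the defining inequality of $\alpha_m$, and absorb $e$ into $PG\mathbf{B}_p$ via the matrix lower bound. The only cosmetic difference is that the paper argues set-theoretically (using Lemma~\ref{lem:MatrixLowerBounds}(a) to get $(1-\alpha)\norm{PG}_l\mathbf{B}_n\subseteq (1-\alpha)PG\mathbf{B}_p$ and then $(1-\alpha)PG\mathbf{B}_p+\alpha PG\mathbf{B}_p=PG\mathbf{B}_p$), whereas you work pointwise by invoking the definition of $\norm{PG}_l$ to produce an explicit preimage $b'=\alpha b+b''$; the underlying computation is identical.
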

\begin{proof}
Fix $\alpha\in \intcc{0,\alpha_{m}(\Omega,P,\tilde{P})}$. Note that $\alpha$ satisfies $\norm{(\tilde{P}-P)c}+\alpha\norm{(\tilde{P}-P)G }  \leq (1-\alpha) \norm{PG}_{l}$.  Define $F\defas \tilde{P}(c+\alpha G \mathbf{B}_{p})$. Then, using Lemma \ref{lem:MatrixLowerBounds}(a),
\begin{align*}
F
\subseteq& Pc+(\tilde{P}-P)c+\alpha(\tilde{P}-P)G \mathbf{B}_{p}+\alpha PG \mathbf{B}_{p} \\
\subseteq  & Pc+\norm{(\tilde{P}-P)c}\mathbf{B}_{n}+\alpha\norm{(\tilde{P}-P)G }\mathbf{B}_{n}+\alpha PG \mathbf{B}_{p}\\
= & Pc+\left(\norm{(\tilde{P}-P)c}+\alpha\norm{(\tilde{P}-P)G } \right)\mathbf{B}_{n}+\alpha PG \mathbf{B}_{p}\\ \subseteq&   Pc+ (1-\alpha) \norm{PG}_{l} \mathbf{B}_{n}+\alpha PG \mathbf{B}_{p} \\
\subseteq & Pc +(1-\alpha)P G \mathbf{B}_{p}+ \alpha PG \mathbf{B}_{p}=P(c+G \mathbf{B}_{p})=P \Omega.
\end{align*}
\end{proof}
{Lemma \ref{lem:UnderApproximatingImageofLinearMap} can be explained intuitively as follows. The set  $P(c+G\mathbf{B}_{p})$ needs to be under-approximated utilizing an approximation of $P$ (our approximation is $\tilde{P}$ in this case). The set $\tilde{P}(c+G\mathbf{B}_{p})$ resembles an approximation of $P(c+G\mathbf{B}_{p})$; however, it is not an under-approximation. By utilizing  estimates of the  approximation errors, the set $P(c+G\mathbf{B}_{p})$ is  deflated, by introducing the parameter $\alpha \in \intcc{0,\alpha_{m}(c+G\mathbf{B}_{p}, P,\tilde{P})}$, where $\alpha_{m}$ is defined by \eqref{eq:alpha_m}, making the set $P(c+\alpha G\mathbf{B}_{p})$ the desired under-approximation (see Figure \ref{fig:Schematic1}). }
\begin{figure}
    \centering
    \includegraphics[width=\columnwidth]{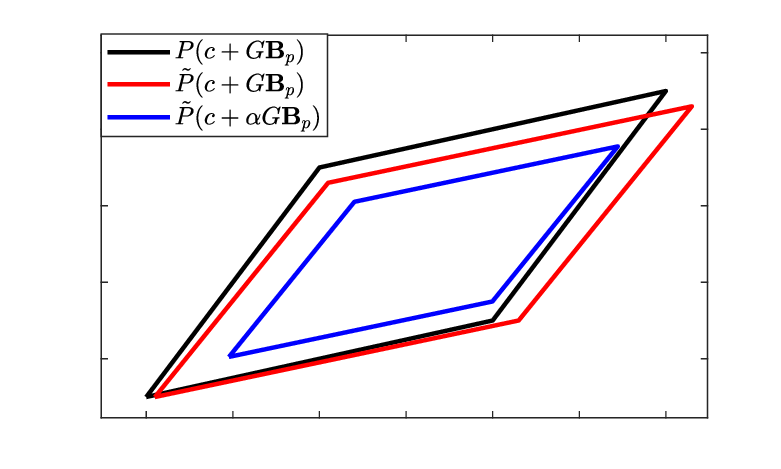}
    \caption{ Schematic representation of the result of Lemma \ref{lem:UnderApproximatingImageofLinearMap}: the set $P(c+G\mathbf{B}_{p})$ (black), its approximation $\tilde{P}(c+G\mathbf{B}_{p})$ (red), and the under-approximation $\tilde{P}(c+\alpha G\mathbf{B}_{p})$ (blue) (unit balls herein are w.r.t. the maximum norm), where $\alpha \in \intcc{0,\alpha_{m}(c+G\mathbf{B}_{p}, P,\tilde{P})}$ and $\alpha_{m}$ is defined by \eqref{eq:alpha_m} .}
    \label{fig:Schematic1}
\end{figure}

Utilizing Lemma \ref{lem:UnderApproximatingImageofLinearMap}, we can obtain under-approximations of the images of the matrix exponential using truncated Taylor expansions as shown in Corollary \ref{cor:UnderApproximatingRX} below. 
{Before doing so, we first introduce the deflation function $\lambda$, which  plays a major role in  determining the extent of which the considered sets in our analysis need to be ``shrunk''  in order to obtain under-approximations.} 
\begin{definition}[Deflation coefficient]
Given $t\in \mathbb{R}_{+}$, $\Omega=c+G\mathbf{B}_{p}\in \mathbb{A}_{n}$, and $k\in \mathbb{N}$, define the deflation parameter 
\begin{equation}\label{eq:lambda}
\lambda(t,\Omega,k)\defas  \frac{ 1-\e^{t\norm{A}}\theta(t\norm{A},k)\norm{G^{\dagger}}\norm{c}}{ 1+\e^{t\norm{A}}\theta(t\norm{A},k)\norm{G^{\dagger}}\norm{G}}.
\end{equation}
\end{definition}

\begin{corollary}\label{cor:UnderApproximatingRX}
Given  $\Omega=c+G\mathbf{B}_{p}\in \mathbb{A}_{n}$, and $t\in \mathbb{R}_{+}$, then for all $k\in \mathbb{N}$ such that $\lambda(t,\Omega,k) \geq 0$,
we have
$
\mathcal{L}(t,k)[c+\lambda(t,\Omega,k) G]\subseteq \exp(t A)\Omega.
$
\end{corollary}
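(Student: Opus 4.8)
The plan is to derive this as a direct specialization of Lemma~\ref{lem:UnderApproximatingImageofLinearMap}, taking $P = \e^{tA}$ and $\tilde{P} = \mathcal{L}(t,k)$, both regarded as $n \times n$ matrices, and $m = n$. Since $\e^{tA}$ is invertible it has full row rank, so the rank hypothesis of Lemma~\ref{lem:UnderApproximatingImageofLinearMap} is automatically satisfied. The remaining hypothesis to check is $\norm{(\tilde{P}-P)c} \le \norm{PG}_l$, i.e.\ $\norm{(\mathcal{L}(t,k)-\e^{tA})c} \le \norm{\e^{tA}G}_l$, and then the conclusion will give $\mathcal{L}(t,k)(c + \alpha G \mathbf{B}_p) \subseteq \e^{tA}\Omega$ for $\alpha \in \intcc{0,\alpha_n(\Omega,\e^{tA},\mathcal{L}(t,k))}$. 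The final step is to show that $\lambda(t,\Omega,k) \le \alpha_n(\Omega,\e^{tA},\mathcal{L}(t,k))$ whenever $\lambda(t,\Omega,k) \ge 0$, so that $\alpha = \lambda(t,\Omega,k)$ is an admissible choice.

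For the two estimates I would lower/upper bound the quantities appearing in $\alpha_n$ using the bounds already collected in the excerpt. First, from \eqref{eq:BoundonApproxError} and submultiplicativity, $\norm{(\mathcal{L}(t,k)-\e^{tA})c} \le \norm{\e^{tA}-\mathcal{L}(t,k)}\norm{c} \le \e^{t\norm{A}}\theta(t\norm{A},k)\norm{c}$, and likewise $\norm{(\mathcal{L}(t,k)-\e^{tA})G} \le \e^{t\norm{A}}\theta(t\norm{A},k)\norm{G}$. Second, for the lower bound on $\norm{\e^{tA}G}_l$ I would combine Lemma~\ref{lem:MatrixLowerBounds}(c) with Lemma~\ref{lem:MatrixLowerBounds}(b): since $\e^{tA}$ is square and invertible and $G$ has full row rank $n$, $\norm{\e^{tA}G}_l \ge \norm{\e^{tA}}_l \norm{G}_l \ge (1/\norm{(\e^{tA})^{-1}})(1/\norm{G^{\dagger}}) = (1/\norm{\e^{-tA}})(1/\norm{G^{\dagger}})$, and then bounding $\norm{\e^{-tA}} \le \e^{t\norm{A}}$ via \eqref{eq:BoundonExpAndL} gives $\norm{\e^{tA}G}_l \ge 1/(\e^{t\norm{A}}\norm{G^{\dagger}})$. (One should note $\e^{tA}$ has full row rank and $\e^{tA}G$ has full row rank $n$ so these lemmas apply.)

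Plugging these into $\alpha_n$: the numerator $\norm{\e^{tA}G}_l - \norm{(\mathcal{L}(t,k)-\e^{tA})c}$ is bounded below by $1/(\e^{t\norm{A}}\norm{G^{\dagger}}) - \e^{t\norm{A}}\theta(t\norm{A},k)\norm{c}$, and the denominator $\norm{\e^{tA}G}_l + \norm{(\mathcal{L}(t,k)-\e^{tA})G}$ is bounded above by $1/(\e^{t\norm{A}}\norm{G^{\dagger}}) + \e^{t\norm{A}}\theta(t\norm{A},k)\norm{G}$; since $\alpha_n$ increases when its numerator grows and its denominator shrinks (both quantities being nonnegative under the standing hypothesis), this gives $\alpha_n \ge \bigl(1/(\e^{t\norm{A}}\norm{G^{\dagger}}) - \e^{t\norm{A}}\theta\norm{c}\bigr)/\bigl(1/(\e^{t\norm{A}}\norm{G^{\dagger}}) + \e^{t\norm{A}}\theta\norm{G}\bigr)$; multiplying numerator and denominator by $\e^{t\norm{A}}\norm{G^{\dagger}}$ turns the right-hand side into exactly $\lambda(t,\Omega,k)$ from \eqref{eq:lambda}. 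It also follows that the hypothesis $\norm{(\mathcal{L}(t,k)-\e^{tA})c} \le \norm{\e^{tA}G}_l$ holds whenever $\lambda(t,\Omega,k) \ge 0$, since $\lambda \ge 0$ forces $\e^{t\norm{A}}\theta(t\norm{A},k)\norm{G^{\dagger}}\norm{c} \le 1$, i.e.\ $\e^{t\norm{A}}\theta(t\norm{A},k)\norm{c} \le 1/(\e^{t\norm{A}}\norm{G^{\dagger}}) \le \norm{\e^{tA}G}_l$, closing the loop.

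The main obstacle I anticipate is the bookkeeping around the matrix-lower-bound inequality $\norm{\e^{tA}G}_l \ge \norm{\e^{tA}}_l\norm{G}_l$: I must make sure the full-row-rank hypotheses of Lemma~\ref{lem:MatrixLowerBounds}(c) are genuinely met (here $\e^{tA}$ is $n\times n$ of rank $n$ and $G$ is $n\times p$ of rank $n$, so the composite $\e^{tA}G$ is $n\times p$ of rank $n$ — all consistent), and that $\norm{\e^{tA}}_l$ is itself bounded below using (b) applied to the invertible matrix $\e^{tA}$, whose Moore–Penrose inverse is just $\e^{-tA}$. Everything else is monotonicity of the rational function $\alpha_n$ in its arguments and the elementary exponential bounds, which are routine; the only care needed is to verify those monotonicity steps are valid, i.e.\ that the relevant numerators stay nonnegative, which is exactly where the hypothesis $\lambda(t,\Omega,k)\ge 0$ gets used.
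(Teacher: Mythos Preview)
Your overall strategy is exactly the paper's: apply Lemma~\ref{lem:UnderApproximatingImageofLinearMap} with $P=\e^{tA}$, $\tilde P=\mathcal{L}(t,k)$, then show $\alpha_m(\Omega,\e^{tA},\mathcal{L}(t,k))\ge\lambda(t,\Omega,k)$ via Lemma~\ref{lem:MatrixLowerBounds}(b),(c). However, two execution errors keep the argument from closing.

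First, your truncation bound carries an extra factor: estimate~\eqref{eq:BoundonApproxError} gives $\norm{\e^{tA}-\mathcal{L}(t,k)}\le\theta(t\norm{A},k)$, not $\e^{t\norm{A}}\theta(t\norm{A},k)$. With your extra $\e^{t\norm{A}}$, after multiplying numerator and denominator by $\e^{t\norm{A}}\norm{G^\dagger}$ you land on
\[
\frac{1-\e^{2t\norm{A}}\norm{G^\dagger}\theta(t\norm{A},k)\norm{c}}{1+\e^{2t\norm{A}}\norm{G^\dagger}\theta(t\norm{A},k)\norm{G}},
\]
which is strictly smaller than $\lambda(t,\Omega,k)$ and does \emph{not} coincide with~\eqref{eq:lambda}. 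The single $\e^{t\norm{A}}$ in $\lambda$ comes solely from $\norm{\e^{-tA}}\le\e^{t\norm{A}}$ in the lower bound for $\norm{\e^{tA}G}_l$.

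Second, your monotonicity step is invalid. You claim the denominator $\norm{\e^{tA}G}_l+\norm{(\mathcal{L}(t,k)-\e^{tA})G}$ is bounded \emph{above} by $1/(\e^{t\norm{A}}\norm{G^\dagger})+\e^{t\norm{A}}\theta\norm{G}$, but $1/(\e^{t\norm{A}}\norm{G^\dagger})$ is a \emph{lower} bound for $\norm{\e^{tA}G}_l$, so no such inequality follows. You cannot bound numerator and denominator separately when the same quantity $L=\norm{\e^{tA}G}_l$ appears in both. The correct route (used in the paper) is to observe that $L\mapsto(L-a)/(L+b)$ is increasing, so replacing $L$ simultaneously in numerator and denominator by its lower bound $L_0$ gives a valid lower bound on the ratio; only then do you replace $a,b$ by upper bounds. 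With both fixes --- drop the spurious $\e^{t\norm{A}}$ in the truncation bound, and argue via monotonicity in $L$ rather than separate numerator/denominator bounds --- your computation recovers $\lambda$ exactly.
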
  

\begin{proof}
The corollary follows from Lemma \ref{lem:UnderApproximatingImageofLinearMap} (with $\tilde{P}=\mathcal{L}(t,k)$ and $P=\exp(t A)$) by verifying that
\begin{align*}
\alpha_{m}(\Omega,\e^{tA},\mathcal{L}(t,k))&=\frac{\norm{\e^{tA}G}_{l}-\norm{(\mathcal{L}(t,k)-\e^{tA})c}}{\norm{\e^{tA}G}_{l}+\norm{(\mathcal{L}(t,k)-\e^{tA})G}}.\\
&\geq \frac{ (\norm{G^{\dagger}}\norm{\e^{-tA}})^{-1}-\norm{(\mathcal{L}(t,k)-\e^{tA})c}}{ (\norm{G^{\dagger}}\norm{\e^{-tA}})^{-1}+\norm{(\mathcal{L}(t,k)-\e^{tA})G}}\\
& \geq \frac{ 1 -\norm{G^{\dagger}}\e^{t\norm{A}}\norm{(\mathcal{L}(t,k)-\e^{tA})c}}{ 1+\norm{G^{\dagger}}\e^{t\norm{A}}\norm{(\mathcal{L}(t,k)-\e^{tA})G}}\\
& \geq  \frac{ 1 -\e^{t\norm{A}}\norm{G^{\dagger}}\theta(t\norm{A},k)\norm{c}}{ 1+\e^{t\norm{A}}\norm{G^{\dagger}}\theta(t\norm{A},k)\norm{G}}\\
&=\lambda(t,\Omega,k)\geq 0,
\end{align*}
where we have used Lemma \ref{lem:MatrixLowerBounds}(b),(c).  
\end{proof}
{
\subsection{Invertible truncated Taylor series of the matrix exponential}
\label{sec:InvertibleTaylorSeries}
In the algorithm presented in Lemma \ref{lem:TheoreticalRecursiveRelation}, images of the matrix exponential (the sets $S_{i}^{N}$ and $V_{i}^{N}$) are computed recursively. As we aim to adopt Corollary \ref{cor:UnderApproximatingRX} to replace these exact images with under-approximations, it is important that
    each under-approximation is in the class $\mathbb{A}_{n}$ (cf. $\Omega$ in Corollary \ref{cor:UnderApproximatingRX}). 
This necessitates that:
 \begin{enumerate}
     \item the set $X_{0}$ is in $\mathbb{A}_{n}$,  which holds by assumption;
     \item the under-approximation of $\mathcal{R}_{u}(\tau)$, which we derive and discuss  in Sections \ref{subsec:UnderApproximatingInputReachSet} and \ref{subsec:InvertibilityOfIntegral}, is in $\mathbb{A}_{n}$; and
     \item in each iteration of the method, when under-approximating $S_{i}^{N}$ and $V_{i}^{N}$ using Corollary \ref{cor:UnderApproximatingRX},  the values of the function $\lambda$ are positive, and  the values of $\mathcal{L}$ are invertible.
 \end{enumerate}
 The positivity of $\lambda$ can be imposed by setting the number of Taylor  terms in $\mathcal{L}$ to be sufficiently large (see Section \ref{subsec:NumTaylorTerms} for details). 
 
 Unfortunately, the invertibility of truncated Taylor expansions of the matrix exponential is not guaranteed in general. For example,  if $A=-\id$,  
 then $\mathcal{L}(1,2)=0$ (not invertible). 
 The following lemma provides an efficient way to check the invertibility of $\mathcal{L}(t,k)$. }
 
{
\begin{lemma}\label{Lem:Kmin}
Let $t\in \mathbb{R}_{+}$ and define 
\begin{equation}\label{eq:Kmin}
k_{\min}(t)\defas \min \Set{k\in \mathbb{N}}{\theta(t\rho(A),k)\e^{t\rho(A)}<1}.
\end{equation}
Then, $\mathcal{L}(t,k)$ is invertible for all $k\in \intco{k_{\min}(t);\infty}$.
\end{lemma}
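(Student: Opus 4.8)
The plan is to show that, for every $k \ge k_{\min}(t)$, the matrix $\mathcal{L}(t,k)$ is close enough to the invertible matrix $\e^{tA}$ that Lemma~\ref{Lem:ContinuityOfMatrixInverse} applies. Concretely, I would take $P = \e^{tA}$ and $\tilde P = \mathcal{L}(t,k)$ in Lemma~\ref{Lem:ContinuityOfMatrixInverse}, so that it suffices to verify the strict inequality
\[
\norm{\e^{tA} - \mathcal{L}(t,k)} \, \norm{(\e^{tA})^{-1}} < 1 .
\]
Here $(\e^{tA})^{-1} = \e^{-tA}$, so the quantity to control is $\norm{\e^{tA} - \mathcal{L}(t,k)}\,\norm{\e^{-tA}}$.

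The key point is that the definition of $k_{\min}(t)$ in \eqref{eq:Kmin} is phrased in terms of the spectral radius $\rho(A)$, whereas the bounds \eqref{eq:BoundonExpAndL}--\eqref{eq:BoundonApproxError} are phrased in terms of a norm $\norm{A}$, and in general $\rho(A) \le \norm{A}$ with strict inequality possible. To bridge this gap I would invoke Lemma~\ref{lem:SpectralRadius}: fix any $\epsilon > 0$ and choose an induced matrix norm $\norm{\cdot}_\epsilon$ with $\norm{A}_\epsilon \le \rho(A) + \epsilon$. Applying \eqref{eq:BoundonApproxError} and \eqref{eq:BoundonExpAndL} with this norm gives
\[
\norm{\e^{tA} - \mathcal{L}(t,k)}_\epsilon \, \norm{\e^{-tA}}_\epsilon
\le \theta(t\norm{A}_\epsilon, k)\, \e^{t\norm{A}_\epsilon}
\le \theta(t(\rho(A)+\epsilon), k)\, \e^{t(\rho(A)+\epsilon)} ,
\]
where the second inequality uses that $\theta(\cdot,k)$ and $\e^{(\cdot)}$ are monotonically increasing in their first arguments (a property of $\theta$ stated right after \eqref{eq:Theta}), together with $t \ge 0$. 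Since for $k \ge k_{\min}(t)$ we have $\theta(t\rho(A),k)\e^{t\rho(A)} < 1$ strictly, by continuity of $r \mapsto \theta(r,k)\e^{r}$ there is a small enough $\epsilon > 0$ for which $\theta(t(\rho(A)+\epsilon),k)\e^{t(\rho(A)+\epsilon)} < 1$ still holds. For that choice of $\epsilon$, the displayed product is $< 1$ in the $\norm{\cdot}_\epsilon$ norm, so Lemma~\ref{lem:ContinuityOfMatrixInverse} (applied with respect to $\norm{\cdot}_\epsilon$) yields invertibility of $\mathcal{L}(t,k)$; invertibility is a norm-independent property, so we are done for this $k$. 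Finally I would note the monotonicity in $k$: since $\theta(r,\cdot)$ is decreasing, $k \ge k_{\min}(t)$ already implies $\theta(t\rho(A),k)\e^{t\rho(A)} < 1$, so the argument covers the whole ray $k \in \intco{k_{\min}(t);\infty}$ at once.

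The only real obstacle is the spectral-radius-versus-norm discrepancy, and the remedy is exactly the $\epsilon$-perturbed norm from Lemma~\ref{lem:SpectralRadius} combined with a continuity/strictness argument to absorb the $\epsilon$; the edge case $\rho(A) = 0$ (nilpotent $A$) is handled the same way and in fact $\theta(0,k)\e^{0} = 0 < 1$ for every $k$, so $k_{\min}(t) = 1$ there. Everything else is a direct substitution into the already-established bounds \eqref{eq:BoundonExpAndL}--\eqref{eq:BoundonApproxError} and an application of Lemma~\ref{lem:ContinuityOfMatrixInverse}.
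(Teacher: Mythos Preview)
Your proposal is correct and follows essentially the same route as the paper's proof: both fix $k\ge k_{\min}(t)$, use continuity of $r\mapsto\theta(tr,k)\e^{tr}$ at $r=\rho(A)$ to find $\varepsilon>0$ with $\theta(t(\rho(A)+\varepsilon),k)\e^{t(\rho(A)+\varepsilon)}<1$, invoke Lemma~\ref{lem:SpectralRadius} to obtain an induced norm with $\norm{A}_\varepsilon\le\rho(A)+\varepsilon$, apply the bounds \eqref{eq:BoundonExpAndL}--\eqref{eq:BoundonApproxError} in that norm, and conclude via Lemma~\ref{Lem:ContinuityOfMatrixInverse}. The paper also explicitly records that $k_{\min}(t)$ is well-defined and that the defining inequality persists for all $k\ge k_{\min}(t)$ by monotonicity of $\theta$ in its second argument, which you mention as well.
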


\begin{proof}
First, we note that $k_{\min}(t)$ is well-defined (finite) as $\lim_{k\rightarrow\infty}\theta(r,k)=0$ for any $r\in \mathbb{R}_{+}$ and that the inequality $\theta(t\rho(A),k)\exp(t\rho(A))<1$ holds for any $k\in \intco{k_{\min}(t);\infty}$ as $\theta$ is monotonically decreasing with respect to its second argument. Fix $k\in \intco{k_{\min}(t);\infty}$.  The continuity of $\theta(t(\cdot),k)\exp(t(\cdot))$  implies the existence of some $\varepsilon>0$ such that 
$
\theta(t(\rho(A)+\varepsilon),k)\exp(t(\rho(A)+\varepsilon))<1.
$
Using Lemma \ref{lem:SpectralRadius}, there exists an induced matrix norm $\norm{\cdot}_{\varepsilon}$ such that $\norm{A}_\varepsilon\leq \rho(A)+\varepsilon$. Hence, using estimates \eqref{eq:BoundonExpAndL}
 and \eqref{eq:BoundonApproxError} and the increasing monotonicity of $\theta(t(\cdot),k)\exp(t(\cdot))$,
 \begin{align*}
 \norm{\e^{tA}-\mathcal{L}(t,k)}_\varepsilon \norm{(\e^{tA})^{-1}}_\varepsilon&= \norm{\e^{tA}-\mathcal{L}(t,k)}_\varepsilon \norm{\e^{-tA}}_\varepsilon\\
&\leq
\theta(t\norm{A}_\varepsilon,k)\e^{t\norm{A}_\varepsilon}\\
&\leq\theta(t(\rho(A)+\varepsilon),k)\e^{t(\rho(A)+\varepsilon)}<1.
\end{align*}
Finally, using Lemma \ref{Lem:ContinuityOfMatrixInverse}, $\mathcal{L}(t,k)$ is invertible.
\end{proof}
}

\subsection{Under-approximating input reachable sets}
\label{subsec:UnderApproximatingInputReachSet}
The second main issue with implementing the Algorithm in Lemma \ref{lem:TheoreticalRecursiveRelation} is the requirement to evaluate the input reachable set $\mathcal{R}_{u}(\tau)$ exactly, which is generally not possible. The following lemma is the starting point to address this issue.

\begin{lemma}
\label{lem:UnderApproximatingSetValuedIntegral}
Let  $I=\intcc{a,b}$ be a compact interval,  $P, \tilde{P}: I\rightarrow \mathbb{R}^{n\times m}$  be  continuous matrix-valued functions,  and $\Omega=c+G\mathbf{B}_{p} \in \mathbb{A}_{m}$.  Assume that $\mathrm{rank}(P(s))=n$ for all $s\in I$. Moreover, assume $\sup_{s\in I}\norm{(\tilde{P}(s)-P(s))c}\leq \inf_{s\in I}\norm{P(s)G}_{l} $. Then,
$
\int_{I}\tilde{P}(s)\mathrm{d}s (c+\alpha G \mathbf{B}_{p})\subseteq  \int_{I}P(s) \Omega \mathrm{d}s
$
for any $\alpha \in \intcc{0,\gamma_{m}(\Omega,t_{1},t_{2},P,\tilde{P})}$, where
\begin{equation}\label{eq:gamma_m}
\gamma_{m}(\Omega,I,P,\tilde{P})\defas \frac{\inf_{s\in I}\norm{P(s)G}_{l}-\sup_{s\in I}\norm{P_{d}(s)c}}{\inf_{s\in I}\norm{P(s)G}_{l}+\sup_{s\in I}\norm{P_{d}(s)G}},
\end{equation}
and $P_{d}(s)\defas\tilde{P}(s)-P(s),~s\in I$.
\end{lemma}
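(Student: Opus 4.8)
The plan is to mirror, in the integral setting, the pointwise argument used to prove Lemma \ref{lem:UnderApproximatingImageofLinearMap}, replacing scalar norm bounds with their suprema/infima over $I$ and pushing the set inclusions through the set-valued integral. First I would fix $\alpha \in \intcc{0,\gamma_{m}(\Omega,I,P,\tilde P)}$ and record the algebraic consequence of the definition \eqref{eq:gamma_m}, namely
\[
\sup_{s\in I}\norm{P_d(s)c} + \alpha \sup_{s\in I}\norm{P_d(s)G} \le (1-\alpha)\inf_{s\in I}\norm{P(s)G}_l,
\]
which is the exact analogue of the key inequality in the proof of Lemma \ref{lem:UnderApproximatingImageofLinearMap}. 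The nonemptiness/well-posedness of the interval $\intcc{0,\gamma_m}$ is guaranteed by the standing assumption $\sup_{s\in I}\norm{(\tilde P(s)-P(s))c}\le \inf_{s\in I}\norm{P(s)G}_l$, so $\gamma_m \ge 0$.

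Next I would work at the level of integrands: for each fixed $s\in I$, write $\tilde P(s) = P(s) + P_d(s)$ and apply Lemma \ref{lem:MatrixLowerBounds}(a) to $P(s)G$ (which has full row rank $n$ since $\mathrm{rank}(P(s))=n$ and $G$ has full row rank, so $\mathrm{rank}(P(s)G)=n$ — this needs a one-line justification). Exactly as in the chain of inclusions in Lemma \ref{lem:UnderApproximatingImageofLinearMap}, this gives, for every $s\in I$,
\[
\tilde P(s)(c+\alpha G\mathbf{B}_p) \subseteq P(s)c + \bigl(\norm{P_d(s)c} + \alpha\norm{P_d(s)G}\bigr)\mathbf{B}_n + \alpha P(s)G\mathbf{B}_p \subseteq P(s)c + (1-\alpha)P(s)G\mathbf{B}_p + \alpha P(s)G\mathbf{B}_p = P(s)\Omega,
\]
where the middle inclusion uses the displayed inequality together with Lemma \ref{lem:MatrixLowerBounds}(a) to absorb $(1-\alpha)\inf_s\norm{P(s)G}_l\mathbf{B}_n$ into $(1-\alpha)P(s)G\mathbf{B}_p$, and monotonicity to replace $s$-dependent infima/suprema by the worst-case constants. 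Thus $\tilde P(s)(c+\alpha G\mathbf{B}_p)\subseteq P(s)\Omega$ pointwise in $s$.

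Finally I would integrate. The left-hand side of the claimed inclusion is $\bigl(\int_I \tilde P(s)\,\mathrm{d}s\bigr)(c+\alpha G\mathbf{B}_p)$; since for any fixed point $z = c + \alpha G v$ with $v\in\mathbf{B}_p$ we have $\bigl(\int_I\tilde P(s)\,\mathrm{d}s\bigr)z = \int_I \tilde P(s)z\,\mathrm{d}s$ and $\tilde P(s)z \in \tilde P(s)(c+\alpha G\mathbf{B}_p) \subseteq P(s)\Omega$ for a.e. $s$, the measurable selection $s\mapsto \tilde P(s)z$ is an admissible integrand witnessing membership in $\int_I P(s)\Omega\,\mathrm{d}s \defas \bigcup_{w\in\Omega^{I}}\int_I P(s)w(s)\,\mathrm{d}s$ (continuity of $\tilde P$ makes $s\mapsto \tilde P(s)z$ continuous, hence measurable, and its values lie in $\Omega$ after writing $\tilde P(s)z = P(s)w(s)$ — but more directly, $w(s)\equiv z \in$ ... wait: one must pick $w(s)\in\Omega$ with $P(s)w(s) = \tilde P(s)z$, which exists by the inclusion $\tilde P(s)z\in P(s)\Omega$; measurability of such a selection follows from a standard measurable selection theorem applied to the closed-valued measurable multifunction $s\mapsto \{w\in\Omega : P(s)w = \tilde P(s)z\}$). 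I expect the main obstacle to be precisely this last measurability point: one must justify choosing, for each $z$, a Lebesgue-measurable $w(\cdot)\in\Omega^{I}$ with $P(s)w(s)=\tilde P(s)z$ a.e., which is where the continuity hypotheses on $P,\tilde P$ and a measurable-selection argument (e.g.\ Filippov's lemma) are genuinely used; the purely algebraic pointwise inclusion is routine given Lemma \ref{lem:UnderApproximatingImageofLinearMap}.
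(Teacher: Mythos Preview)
Your proposal is correct and follows essentially the same route as the paper: the paper fixes $\alpha$, records the inequality $\mu\le(1-\alpha)\inf_{s}\norm{P(s)G}_l$, takes a generic point $y=\int_I\tilde P(s)(c+\alpha Gx)\,\mathrm{d}s$, verifies the pointwise inclusion $\tilde P(s)(c+\alpha Gx)\in P(s)\Omega$ via the same chain you wrote (using Lemma~\ref{lem:MatrixLowerBounds}(a)), and then invokes a measurable selection theorem (specifically \cite[Theorem~8.2.10]{AubinFrankowska00}, which plays the role of the Filippov-type result you anticipated) to obtain $g\in\Omega^{I}$ with $P(s)g(s)=\tilde P(s)(c+\alpha Gx)$ a.e. Your identification of the selection step as the only nontrivial point is exactly right.
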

\begin{proof}
Fix $\alpha \in \intcc{0,\gamma_{m}(\Omega,I,P,\tilde{P})}$ and define 
$$
\mu=\sup_{s\in I}\norm{(\tilde{P}(s)-P(s))c} +\alpha\sup_{s\in I}\norm{(\tilde{P}(s)-P(s))G}.
$$
Then,  $\alpha$ satisfies $ \mu\leq (1-\alpha)\inf_{s\in I}\norm{P(s)G}_{l}.$
Let $y\in \int_{I}\tilde{P}(s)\mathrm{d}s(c+\alpha G \mathbf{B}_{p})$. Then, there exists $x\in \mathbf{B}_{p}$ such that $y=\int_{I}\tilde{P}(s)\mathrm{d}s(c+\alpha G x)=\int_{I}f(s)\mathrm{d}s$, where $f(s)=\tilde{P}(s)(c+\alpha G x)$.  Therefore, using Lemma \ref{lem:MatrixLowerBounds}(a), we have, for all $s\in I$, 
\begin{align*}
f(s)=&  P(s)c+(\tilde{P}(s)- P(s))c\\
&+\alpha(\tilde{P}(s)-P(s))G x+\alpha P(s) G x\\
 \subseteq &   P(s)c+\mu\mathbf{B}_{n}+\alpha P(s) G \mathbf{B}_{p}\\
 \subseteq &  P(s)c+ (1-\alpha) \inf_{z\in I}\norm{P(z)G}_{l}\mathbf{B}_{n}+\alpha P(s) G \mathbf{B}_{p}\\
 \subseteq & P(s)c+ (1-\alpha) \norm{P(s)G}_{l}\mathbf{B}_{n}+\alpha P(s) G \mathbf{B}_{p}\\
 \subseteq & P(s)c+ (1-\alpha) P(s)G\mathbf{B}_{p}+\alpha P(s) G \mathbf{B}_{p}=P(s)\Omega.
\end{align*}
Using \cite[Theorem~8.2.10,~p.~316]{AubinFrankowska00}, there exists  $g \in \Omega^{I}$ such that $f(s)=P(s)g(s)$ for almost all $s\in I.$ Hence,  $y=\int_{I}P(s)g(s)\mathrm{d}s \in \int_{I}P(s)\Omega \mathrm{d}s$,
which completes the proof.
\end{proof}
{ The logic of Lemma \ref{lem:UnderApproximatingSetValuedIntegral} is  similar to that of Lemma \ref{lem:UnderApproximatingImageofLinearMap}, where the set-valued integral $\int_{I}P(s)(c+G\mathbf{B}_{p})\mathrm{d}s$  is under-approximated by a deflated version of the set $\int_{I}\tilde{P}(s)\mathrm{d}s (c+G\mathbf{B}_{p})$, utilizing the approximating matrix function $\tilde{P}$. Note that the set representation of $\int_{I}\tilde{P}(s)\mathrm{d}s (c+G\mathbf{B}_{p})$ (linear transformation of  $(c+G\mathbf{B}_{p})$ under $\int_{I}\tilde{P}(s)\mathrm{d}s$)  is  simpler than that of the original set-valued integral, making it more appealing in set-valued computations. The deflation herein is obtained based on estimates of  approximation errors as seen in the definition of $\gamma_{m}$ given by Equation \eqref{eq:gamma_m}.
} 

Lemma \ref{lem:UnderApproximatingSetValuedIntegral} provides a sufficient tool to under-approximate the input reachable set as seen in the following corollary.
\begin{corollary}\label{cor:UnderApproximatingRU}
Given  $\Omega=c+G\mathbf{B}_{p}\in \mathbb{A}_{n}$, and $t\in \mathbb{R}_{+} $, then for all $k\in \mathbb{N}$ such that $\lambda(t,\Omega,k) \geq 0$,
we have
$
\mathcal{T}(t,k)(c+\lambda(t,\Omega,k) G \mathbf{B}_{p})\subseteq \int_{0}^{t}\exp(sA)\Omega\mathrm{d}s.
$
\end{corollary}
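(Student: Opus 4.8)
The plan is to deduce Corollary~\ref{cor:UnderApproximatingRU} from Lemma~\ref{lem:UnderApproximatingSetValuedIntegral} in exactly the same way that Corollary~\ref{cor:UnderApproximatingRX} was deduced from Lemma~\ref{lem:UnderApproximatingImageofLinearMap}. First I would set $I = \intcc{0,t}$, $P(s) = \e^{sA}$, and $\tilde P(s) = \mathcal{L}(s,k)$, so that $\int_{I}\tilde P(s)\mathrm{d}s = \mathcal{T}(t,k)$ and $\int_{I}P(s)\Omega\,\mathrm{d}s = \int_{0}^{t}\e^{sA}\Omega\,\mathrm{d}s$. Since $\e^{sA}$ is invertible for every $s$, the full row rank hypothesis on $P$ is immediate. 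Then I would show that $\gamma_{n}(\Omega, I, \e^{(\cdot)A}, \mathcal{L}(\cdot,k)) \geq \lambda(t,\Omega,k)$; once that chain of inequalities is in place, Lemma~\ref{lem:UnderApproximatingSetValuedIntegral} applied with $\alpha = \lambda(t,\Omega,k) \in \intcc{0,\gamma_{n}}$ (which is legitimate because $\lambda(t,\Omega,k)\geq 0$ by hypothesis) gives exactly the claimed inclusion, provided the side condition $\sup_{s\in I}\norm{(\mathcal{L}(s,k)-\e^{sA})c}\leq \inf_{s\in I}\norm{\e^{sA}G}_{l}$ holds — and that condition is itself a consequence of $\gamma_{n}\geq 0$, so it comes for free.

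The core estimate is the bound $\gamma_{n} \geq \lambda(t,\Omega,k)$, and it proceeds term by term just as in the proof of Corollary~\ref{cor:UnderApproximatingRX}, with the only new ingredient being that each quantity now appears under a $\sup_{s\in I}$ or $\inf_{s\in I}$. For the denominator term $\inf_{s\in I}\norm{\e^{sA}G}_{l}$ I would use Lemma~\ref{lem:MatrixLowerBounds}(b),(c) to get $\norm{\e^{sA}G}_{l}\geq \norm{\e^{sA}}_{l}\norm{G}_{l}\geq \norm{\e^{-sA}}^{-1}\norm{G^{\dagger}}^{-1}\geq (\e^{s\norm{A}}\norm{G^{\dagger}})^{-1}$, and since $s\mapsto \e^{s\norm{A}}$ is increasing on $\intcc{0,t}$ the worst case is $s=t$, yielding $\inf_{s\in I}\norm{\e^{sA}G}_{l}\geq (\e^{t\norm{A}}\norm{G^{\dagger}})^{-1}$. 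For the numerator error terms $\sup_{s\in I}\norm{(\mathcal{L}(s,k)-\e^{sA})c}$ and $\sup_{s\in I}\norm{(\mathcal{L}(s,k)-\e^{sA})G}$ I would apply estimate~\eqref{eq:BoundonApproxError}, namely $\norm{\e^{sA}-\mathcal{L}(s,k)}\leq \theta(s\norm{A},k)$, together with the monotonicity of $\theta(\cdot,k)\e^{(\cdot)}$ (the function is monotonically increasing in its first argument per the remark following \eqref{eq:Theta}), so again $s=t$ is the extremal value and $\sup_{s\in I}\norm{\e^{sA}-\mathcal{L}(s,k)}\leq \theta(t\norm{A},k)$. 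Substituting these three bounds into \eqref{eq:gamma_m} and using that the map $(x,y)\mapsto (a-x)/(a+y)$ is decreasing in $x$ and in $y$ for the relevant positive ranges, one lands on precisely the expression $\lambda(t,\Omega,k)$ as defined in \eqref{eq:lambda}.

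I do not anticipate a serious obstacle here: the lemma machinery is already built, and the argument is a near-verbatim repetition of Corollary~\ref{cor:UnderApproximatingRX}'s proof with suprema and infima inserted. The one point requiring a moment's care is checking that the side hypothesis of Lemma~\ref{lem:UnderApproximatingSetValuedIntegral}, $\sup_{s\in I}\norm{(\tilde P(s)-P(s))c}\leq \inf_{s\in I}\norm{P(s)G}_{l}$, is actually satisfied — but $\lambda(t,\Omega,k)\geq 0$ forces $\e^{t\norm{A}}\theta(t\norm{A},k)\norm{G^{\dagger}}\norm{c}\leq 1$, and combined with the bounds just derived this yields $\sup_{s\in I}\norm{(\mathcal{L}(s,k)-\e^{sA})c}\leq \theta(t\norm{A},k)\norm{c}\leq (\e^{t\norm{A}}\norm{G^{\dagger}})^{-1}\leq \inf_{s\in I}\norm{\e^{sA}G}_{l}$, so the hypothesis holds and the application of the lemma is valid. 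A second minor bookkeeping item is that $\gamma_{m}$ in \eqref{eq:gamma_m} is written with interval $I$ whereas the inclusion statement in Lemma~\ref{lem:UnderApproximatingSetValuedIntegral} writes $\gamma_{m}(\Omega,t_{1},t_{2},P,\tilde P)$; I would simply use the interval-argument form $\gamma_{n}(\Omega,\intcc{0,t},\e^{(\cdot)A},\mathcal{L}(\cdot,k))$ consistently. With those two checks dispatched, the corollary follows.
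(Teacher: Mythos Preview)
Your proposal is correct and follows exactly the approach the paper takes: the paper's own proof simply states that the corollary follows by verifying $\gamma_{m}(\Omega,I,\e^{(\cdot)A},\mathcal{L}(\cdot,k)) \geq \lambda(t,\Omega,k)\geq 0$ via estimates ``similar to those presented in the proof of Corollary~\ref{cor:UnderApproximatingRX},'' which is precisely the chain of inequalities you spell out. Your write-up in fact supplies the details the paper omits (the $\inf$/$\sup$ bookkeeping and the verification of the side hypothesis of Lemma~\ref{lem:UnderApproximatingSetValuedIntegral}); aside from the harmless typo $\gamma_{n}$ for $\gamma_{m}$, nothing needs fixing.
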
  

\begin{proof}
This follows by verifying that $\gamma_{m}(\Omega,I,\e^{(\cdot) A},\mathcal{L}(\cdot,k)) \geq \lambda(t,\Omega,k)\geq 0 $, where the detailed estimates, which are similar to those presented in the proof of Corollary \ref{cor:UnderApproximatingRX}, are omitted  for brevity. 
\end{proof}  
{
\subsection{Invertibility of the integral of the matrix exponential}
\label{subsec:InvertibilityOfIntegral}
Corollary \ref{cor:UnderApproximatingRU} establishes how the input reachable set $\mathcal{R}_{u}(\tau)$ required in the algorithm in Lemma \ref{lem:TheoreticalRecursiveRelation} can be replaced by an under-approximation that is an affine transformation of a unit ball (not necessarily full-dimensional) as $U\in \mathbb{A}_{n}$ by assumption. A blind implementation of the corollary may, however,  lead to another issue if the under-approximation of $\mathcal{R}_{u}(\tau)$ is not a full dimensional set (i.e., not in $\mathbb{A}_{n}$). As we mentioned in Section \ref{subsec:NumTaylorTerms}, we need the under-approximation of $\mathcal{R}_{u}(\tau)$ to be in  $\mathbb{A}_{n}$. 
This, as can be seen from Corollary \ref{cor:UnderApproximatingRU}, is  attained if the utilized value of $\lambda$ is positive and  the value of the approximating function $\mathcal{T}$  is invertible. The first requirement can be fulfilled by setting the number of Taylor series terms of the approximation $\mathcal{T}$ to be large and we elaborate regarding that in Section \ref{subsec:NumTaylorTerms}. What is left is to ensure the invertibility of the value of  $\mathcal{T}$, which we can ensure if the integral of the matrix exponential itself is invertible.

To further investigate the invertibility requirement,  let us first  introduce  the set $\mathbb{I}$ of positive $t$ values such that $\int_{0}^{t}\exp(sA)\mathrm{d}s$ is invertible, i.e.,
 $$
 \mathbb{I}\defas\Set{t\in \mathbb{R}_{+}}{\int_{0}^{t}\exp(sA)\mathrm{d}s~\textrm{is invertible}}.
 $$
This set can be deduced exactly from the eigenvalues of $A$ as shown in the lemma below (see, e.g.,  \cite[Lemma~3.4.1,~p.~100]{Sontag98}).
\begin{lemma}
\label{lem:IntegralInvertible}
Let $t\in \mathbb{R}_{+}\setminus\{0\}$, then  $\int_{0}^{t}\exp(sA)\mathrm{d}s$ is invertible iff $2\pi z \mathbf{i} /t$ is not an eigenvalue of $A$ for any $z\in \mathbb{Z}\setminus\{0\}$, where $\mathbf{i}=\sqrt{-1}$.
\end{lemma}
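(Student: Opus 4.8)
The plan is to compute $\int_0^t \e^{sA}\mathrm{d}s$ explicitly in terms of the spectral data of $A$ and argue invertibility by checking when the resulting expression has a nontrivial kernel. The key identity is $\frac{\mathrm{d}}{\mathrm{d}s}\e^{sA} = A\e^{sA} = \e^{sA}A$, which gives the relation $A\int_0^t \e^{sA}\mathrm{d}s = \int_0^t A\e^{sA}\mathrm{d}s = \e^{tA} - I$. So if I write $Q \defas \int_0^t \e^{sA}\mathrm{d}s$, then $AQ = QA = \e^{tA} - I$. I would first dispose of the (generalized) eigenspace of $A$ associated with the eigenvalue $0$, if any: on that subspace $A$ is nilpotent plus zero, and a direct computation (or the series $Q = \sum_{j\ge0} t^{j+1}A^j/(j+1)!$, which is exactly $\mathcal{T}(t,\infty)$) shows $Q$ restricted there is $tI + (\text{nilpotent})$, hence invertible; and $0$ being an eigenvalue of $A$ is never of the excluded form $2\pi z\mathbf{i}/t$ with $z\ne0$, so this case is consistent with the claim.

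Next, on the complementary invariant subspace where $A$ is invertible, the identity $AQ = \e^{tA}-I$ gives $Q = A^{-1}(\e^{tA}-I)$; since $A^{-1}$ is invertible there, $Q$ is invertible on that piece iff $\e^{tA} - I$ is invertible on it, i.e. iff $1$ is not an eigenvalue of $\e^{tA}$ restricted there. By the spectral mapping theorem for the matrix exponential, the eigenvalues of $\e^{tA}$ are exactly $\{\e^{t\lambda} : \lambda \text{ eigenvalue of }A\}$ (with matching algebraic multiplicities), so $\e^{tA}$ has eigenvalue $1$ iff $\e^{t\lambda}=1$ for some eigenvalue $\lambda\ne0$ of $A$, which happens iff $t\lambda \in 2\pi\mathbf{i}\mathbb{Z}\setminus\{0\}$, i.e. $\lambda = 2\pi z\mathbf{i}/t$ for some $z\in\mathbb{Z}\setminus\{0\}$. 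Combining the two subspaces via the primary (Jordan) decomposition of $\mathbb{R}^n$ under $A$, $Q$ is invertible iff no eigenvalue of $A$ equals $2\pi z\mathbf{i}/t$ for any $z\in\mathbb{Z}\setminus\{0\}$, which is the claim.

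Alternatively, and perhaps more cleanly for a one-paragraph argument, I would diagonalize or put $A$ in Jordan form over $\mathbb{C}$: if $A = SJS^{-1}$ then $Q = S\big(\int_0^t \e^{sJ}\mathrm{d}s\big)S^{-1}$, so $Q$ is invertible iff $\int_0^t \e^{sJ}\mathrm{d}s$ is, and this reduces to a single Jordan block with eigenvalue $\lambda$. For a Jordan block, $\int_0^t \e^{sJ}\mathrm{d}s$ is upper triangular with diagonal entries all equal to $\int_0^t \e^{s\lambda}\mathrm{d}s$, which equals $(\e^{t\lambda}-1)/\lambda$ if $\lambda\ne0$ and $t$ if $\lambda=0$; the block-integral is invertible iff that diagonal value is nonzero, i.e. iff $\lambda=0$ or $\e^{t\lambda}\ne1$, i.e. iff $\lambda \notin \{2\pi z\mathbf{i}/t : z\in\mathbb{Z}\setminus\{0\}\}$. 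Taking the conjunction over all Jordan blocks (equivalently, all eigenvalues) yields the stated equivalence.

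The main obstacle is mostly bookkeeping rather than conceptual depth: one must correctly handle the eigenvalue-$0$ generalized eigenspace (where $A$ is not invertible, so one cannot simply write $Q = A^{-1}(\e^{tA}-I)$), and one must be careful that the relevant statement is about eigenvalues with geometric/algebraic multiplicities respected — this is automatic in the Jordan-form approach since a nontrivial kernel of $Q$ must intersect some generalized eigenspace. Since Lemma~\ref{lem:IntegralInvertible} is quoted from \cite[Lemma~3.4.1,~p.~100]{Sontag98}, I would in the paper simply cite that reference, but the Jordan-block computation above is the self-contained argument behind it.
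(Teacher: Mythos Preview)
Your proposal is correct: both the decomposition into the generalized $0$-eigenspace (where $Q=tI+\text{nilpotent}$) versus its $A$-invariant complement (where $Q=A^{-1}(\e^{tA}-I)$), and the Jordan-block computation, are valid self-contained arguments. The paper itself gives no proof and simply cites \cite[Lemma~3.4.1,~p.~100]{Sontag98}, exactly as you anticipate in your final paragraph, so there is nothing to compare against.
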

We can see from Lemma \ref{lem:IntegralInvertible} that  the invertibility of $\int_{0}^{t}\exp(sA)\mathrm{d}s$ fails at only a countable number of values of $t$. Hence, in practice, the invertibility is likely always fulfilled. Furthermore, we can use Lemma \ref{lem:IntegralInvertible} to show that $\int_{0}^{t}\exp(sA)\mathrm{d}s$ is always invertible for $t$ sufficiently small, but nonzero (which is the case as we use it with $t=\tau$).  For a given system matrix $A$, the following lemma derives a fixed open interval on which the matrix exponential is guaranteed to be invertible.
\begin{lemma}\label{lem:InvertibilityInterval}
Let $E=\{ m\in \mathbb{R}_{+}~|~m=\abs{\lambda},~Av=\lambda  v,~\lambda\in \mathbb{C},~v\in \mathbb{C}^{n},~\Re(\lambda)=0\}\setminus \{0\}$ be the set of absolute values of the purely imaginary (nonzero) eigenvalues of $A$.  If $E$ is non-empty,  set $t_{\max}=2\pi/\max{E}$, otherwise  $t_{\max}=\infty$. Then, $\int_{0}^{t}\exp(sA)\mathrm{d}s$ is invertible for all $t\in \intoo{0,t_{\max}}$.
\end{lemma}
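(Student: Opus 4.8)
The plan is to derive the result directly from Lemma \ref{lem:IntegralInvertible}, which characterizes invertibility of $\int_{0}^{t}\exp(sA)\mathrm{d}s$ in terms of the eigenvalues of $A$: for $t>0$, the integral is invertible if and only if $2\pi z\mathbf{i}/t$ is not an eigenvalue of $A$ for any nonzero integer $z$. So it suffices to show that for every $t\in\intoo{0,t_{\max}}$ and every $z\in\mathbb{Z}\setminus\{0\}$, the number $2\pi z\mathbf{i}/t$ fails to be an eigenvalue of $A$.

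I would argue by contradiction. Suppose that for some $t\in\intoo{0,t_{\max}}$ and some $z\in\mathbb{Z}\setminus\{0\}$, the value $\lambda=2\pi z\mathbf{i}/t$ is an eigenvalue of $A$. Since $z\neq 0$ and $t>0$, this $\lambda$ is a purely imaginary nonzero eigenvalue, hence $\abs{\lambda}=2\pi\abs{z}/t\in E$; in particular $E$ is non-empty, so by definition $t_{\max}=2\pi/\max E$ is finite. From $\abs{\lambda}\le\max E$ we get $2\pi\abs{z}/t\le\max E$, i.e.\ $t\ge 2\pi\abs{z}/\max E$. Since $\abs{z}\ge 1$, this yields $t\ge 2\pi/\max E=t_{\max}$, contradicting $t<t_{\max}$.

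Therefore no such eigenvalue exists, and Lemma \ref{lem:IntegralInvertible} gives invertibility of $\int_{0}^{t}\exp(sA)\mathrm{d}s$ for all $t\in\intoo{0,t_{\max}}$. (When $E=\varnothing$, the same contradiction argument applies trivially: the existence of $2\pi z\mathbf{i}/t$ as an eigenvalue would force $E$ to be non-empty, so no such eigenvalue exists and the integral is invertible for every $t>0$, consistent with $t_{\max}=\infty$.) I do not expect any real obstacle here — the only thing to be careful about is the bookkeeping that $\abs{z}\ge 1$ is exactly what lets one pass from ``$2\pi\abs{z}/t\le\max E$'' to the clean bound ``$t\ge t_{\max}$,'' and that the purely-imaginary-and-nonzero status of the putative eigenvalue is what places its modulus in $E$ rather than merely in the spectrum.
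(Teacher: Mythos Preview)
Your argument is correct and is exactly the intended derivation: the paper does not spell out a proof of this lemma, treating it as an immediate consequence of Lemma~\ref{lem:IntegralInvertible}, and your contradiction bookkeeping (purely imaginary nonzero eigenvalue forces $|\lambda|\in E$, then $|z|\ge 1$ yields $t\ge t_{\max}$) is precisely that consequence made explicit.
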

{ As can be seen from Lemma \ref{lem:InvertibilityInterval}, we can always find, based on the eigenvalues of the system matrix $A$, which is \textit{fixed} and known for a given system, a non-empty open interval with zero as a left limit point on which the integral of the matrix exponential is guaranteed to be invertible.}
In our proposed method, 
we are interested in an under-approximation of $\mathcal{R}_{u}(\tau)$, where $\tau$ is typically small as it corresponds to the time step size of the method ($\tau=T/N$). Therefore, the invertibility of $\int_{0}^{\tau}\exp(sA)\mathrm{d}s$ can always be fulfilled if we set the time discretization parameter $N$ to be sufficiently large (but still  finite), depending on the eigenvalues of $A$.

If the invertibility of the integral of the matrix exponential is fulfilled ($t\in \mathbb{I}$), we can obtain a finite $k\in \mathbb{N}$ such that the truncated Taylor series of  $\int_{0}^{t}\exp(sA)\mathrm{d}s$, $\mathcal{T}(t,k)$, is invertible.   The existence of a finite $k$, such that  $\mathcal{T}(t,k)$ is invertible, follows from the invertibility of $\int_{0}^{t}\exp(sA)\mathrm{d}s$ ($t\in \mathbb{I}$), the fact that $\lim_{k\rightarrow \infty}\mathcal{T}(t,k)= \int_{0}^{t}\exp(sA)\mathrm{d}s$, and the continuity of the matrix inverse (Lemma \ref{Lem:ContinuityOfMatrixInverse}).
}

\subsection{Determining the number of Taylor series terms}
\label{subsec:NumTaylorTerms}
Next, we aim to determine the number of Taylor series terms used in the approximations  $\mathcal{L}$ and $\mathcal{T}$. First, note that the under-approximations introduced in Corollaries \ref{cor:UnderApproximatingRX} and \ref{cor:UnderApproximatingRU} depend on the deflation function $\lambda$.  In this work, we propose  simple criteria, that  determine the number of Taylor series terms, which  aim to  maximize the values of $\lambda$, while incorporating the full-dimensionality considerations discussed in Sections \ref{subsec:NumTaylorTerms} and \ref{subsec:InvertibilityOfIntegral}. Notice that for any given $t\in \mathbb{R}_{+}$ and $\Omega=c+G\mathbf{B}_{p}\in \mathbb{A}_{n}$,  the function 
$\lambda(t,\Omega,\cdot)$ is  bounded above with the least upper bound of
$1$.  Our goal is to choose the minimum  value of $k$ such that the deflation coefficient is larger than some design parameter $\epsilon \in \intco{0,1}$, while ensuring the invertibility of the approximations of the matrix exponential and its integral. Therefore, we introduce the following  parameters associated  with the number of Taylor series terms used in under-approximating homogeneous and input reachable sets.

\begin{definition}
Given  $\Omega=c+G\mathbf{B}_{p}\in \mathbb{A}_{n}$, $t\in \mathbb{R}_{+}$, $\bar{t}\in \mathbb{I}$, and $\epsilon\in \intco{0,1}$.  $\kappa(t,\Omega,\epsilon)$ is defined as 
 \begin{equation}\label{eq:Kappa}
\kappa(t,\Omega,\epsilon)\defas\min \left( \Set{k\in \intco{k_{\min}(t);\infty}}{\lambda(t,\Omega,k)> \epsilon}\cap \intco{2;\infty}\right).
\end{equation}
Moreover, $\eta(\bar{t},\Omega,{\epsilon})$ is defined as
 \begin{equation}\label{eq:Eta}
\eta(\bar{t},\Omega,{\epsilon})\defas\min \{k\in \mathbb{N}~|~\lambda(\bar{t},\Omega,k)> {\epsilon},\mathcal{T}(\bar{t},k)~\textrm{is invertible}\}.  
\end{equation} 
\end{definition}
As shown in  the proof of Lemma \ref{Lem:Kmin},  $k_{\min}(t)$ is well-defined. Since 
$\lambda(t,\Omega,k)\rightarrow 1$, as $k\rightarrow \infty$,   $\kappa(t,\Omega,\epsilon)$ is also  well-defined for any $\epsilon\in \intco{0,1}$. Note that, the lower bound  of 2 used in the definition of $\kappa(t,\Omega,{\epsilon})$  is  of importance only when deducing the convergence guarantees in Section \ref{sec:Convergence}. The well-definiteness  of  $\eta(\bar{t},\Omega, \epsilon)$ follows from the fact that $\lim_{k\rightarrow \infty}\lambda(\bar{t},\Omega,k)= 1$ and the invertibility argument at the end of Section \ref{subsec:InvertibilityOfIntegral}.

\subsection{Under-approximations of reachable sets and tubes}

The previous sections have established the  tools necessary for the proposed method. Next, we introduce the  operators, $\mathcal{H}$ and $\mathcal{I}$, which are designed based on Corollaries \ref{cor:UnderApproximatingRX} and \ref{cor:UnderApproximatingRU}, in addition to the criteria introduced in \eqref{eq:Kappa} and \eqref{eq:Eta}, to obtain full-dimensional under-approximations of homogeneous and input reachable sets, using approximations of the matrix exponential and its integral.
\begin{definition}
We define the homogeneous and input under-approximation operators $\mathcal{H}\colon \mathbb{R}_{+}\times \mathbb{A}_{n}\times \intoc{0,1}\rightarrow \mathbb{A}_{n}$ and $\mathcal{I}\colon \mathbb{I}\times \mathbb{A}_{n}\times \intoc{0,1}\rightarrow \mathbb{A}_{n}$ as follows:
\begin{equation}\label{eq:H}
\mathcal{H}(t,\Omega,\epsilon)\defas\mathcal{L}(t,\kappa(t,\Omega,\epsilon))[c+\lambda(t,\Omega, \kappa(t,\Omega,\epsilon))G \mathbf{B}_{p}],
\end{equation}
\begin{equation}\label{eq:I}
\mathcal{I}(\bar{t},\Omega,\epsilon)\defas\mathcal{T}(\bar{t},\eta(\bar{t},\Omega,\epsilon))[c+ \lambda(\bar{t},\Omega, \eta(\bar{t},\Omega,{\epsilon}))G \mathbf{B}_{p}],
\end{equation}
 where $t\in \mathbb{R}_{+}$, $\bar{t}\in \mathbb{I}$, $\Omega=c+G \mathbf{B}_{p}\in \mathbb{A}_{n}$, and $\epsilon\in \intco{0,1}$.
\end{definition}

Now, we are ready to introduce the proposed method.

\begin{theorem}\label{thm:ProposedMethod}
Given $N\in \mathbb{N}$ and $\epsilon_{h},{\epsilon}_{u}\in \intco{0,1}$, define $\tau=T/N$, and assume $\tau \in  \mathbb{I}$. Moreover, define   
\begin{subequations}
\label{eq:ProposedMethod}
\begin{align}\label{eq:MathcalS}
\mathcal{S}_{0}^{N}&=X_{0},&\mathcal{S}_{i}^{N}&=\mathcal{H}(\tau, S_{i-1}^{N},\epsilon_{h}),~i\in \intcc{1;N},\\\label{eq:MathcalV}
\mathcal{V}_{0}^{N}&=\mathcal{I}(\tau,U,{\epsilon}_{u}),&\mathcal{V}_{i}^{N}&=\mathcal{H}(\tau,\mathcal{V}_{i-1}^{N},\epsilon_{h}),~i\in \intcc{1;N},\\\label{eq:MathcalW}
 \mathcal{W}_{0}^{N}&=\{0\},&\mathcal{W}_{i}^{N}&=\mathcal{W}_{i-1}^{N}+\mathcal{V}_{i-1}^{N},~i\in \intcc{1;N},\\\label{eq:LambdaUA}
&&\Lambda_{i}^{N}&=\mathcal{S}_{i}^{N}+\mathcal{W}_{i}^{N},~i\in \intcc{0;N}.
\end{align}
\end{subequations}
Recall the definitions of $\{S_{i}^{N}\}_{i=0}^{N}$, $\{V_{i}^{N}\}_{i=0}^{N}$, $\{W_{i}^{N}\}_{i=0}^{N}$ and $\{\Gamma_{i}^{N}\}_{i=0}^{N}$ given in Lemma \ref{lem:TheoreticalRecursiveRelation}. Then,  $\mathcal{S}_{i}^{N}\subseteq S_{i}^{N}=\mathcal{R}_{h}(i\tau)$,  $\mathcal{V}_{i}^{N}\subseteq V_{i}^{N}$, $\mathcal{W}_{i}^{N}\subseteq W_{i}^{N}=\mathcal{R}_{u}(i\tau)$, and 
$
\Lambda_{i}^{N} \subseteq \Gamma_{i}^{N}= \mathcal{R}(i\tau)
$
for all $i \in \intcc{0;N}$. Furthermore, $\bigcup_{i=0}^{N} \Lambda_{i}^{N}\subseteq \mathcal{R}(\intcc{0,T})$.
\end{theorem}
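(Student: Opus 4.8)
The plan is to establish the four inclusion chains simultaneously by induction on $i\in\intcc{0;N}$, transferring at each step from the ``practical'' recursion $(\mathcal{S}_{i}^{N},\mathcal{V}_{i}^{N},\mathcal{W}_{i}^{N},\Lambda_{i}^{N})$ to the ``theoretical'' one $(S_{i}^{N},V_{i}^{N},W_{i}^{N},\Gamma_{i}^{N})$ of Lemma \ref{lem:TheoreticalRecursiveRelation} via Corollaries \ref{cor:UnderApproximatingRX} and \ref{cor:UnderApproximatingRU}, and then identifying the theoretical iterates with the true reachable sets using that same lemma. It is convenient to carry, as part of the induction hypothesis, not only the inclusions $\mathcal{S}_{i}^{N}\subseteq S_{i}^{N}$ and $\mathcal{V}_{i}^{N}\subseteq V_{i}^{N}$ but also the full-dimensionality facts $\mathcal{S}_{i}^{N},\mathcal{V}_{i}^{N}\in\mathbb{A}_{n}$, since the latter is exactly what permits the corollaries to be reapplied in the next iteration.

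For the homogeneous chain, the base case $\mathcal{S}_{0}^{N}=X_{0}=S_{0}^{N}$ is trivial and $X_{0}\in\mathbb{A}_{n}$ by assumption. For the step, $\mathcal{S}_{i-1}^{N}\in\mathbb{A}_{n}$ places it in the domain of $\mathcal{H}(\tau,\cdot,\epsilon_{h})$, so $\mathcal{S}_{i}^{N}\in\mathbb{A}_{n}$ because $\mathcal{H}$ is $\mathbb{A}_{n}$-valued; writing $\mathcal{S}_{i-1}^{N}=c+G\mathbf{B}_{p}$, the defining property of $\kappa(\tau,\mathcal{S}_{i-1}^{N},\epsilon_{h})$ forces the corresponding value of $\lambda$ to exceed $\epsilon_{h}\geq 0$, so Corollary \ref{cor:UnderApproximatingRX} yields $\mathcal{S}_{i}^{N}=\mathcal{H}(\tau,\mathcal{S}_{i-1}^{N},\epsilon_{h})\subseteq\exp(\tau A)\mathcal{S}_{i-1}^{N}$; combining with the induction hypothesis $\mathcal{S}_{i-1}^{N}\subseteq S_{i-1}^{N}$ and monotonicity of the linear map $\exp(\tau A)$ under inclusion gives $\mathcal{S}_{i}^{N}\subseteq\exp(\tau A)S_{i-1}^{N}=S_{i}^{N}$. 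By Lemma \ref{lem:TheoreticalRecursiveRelation}, $S_{i}^{N}=\mathcal{R}_{h}(i\tau)$.

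For the input chain, the base case instead uses Corollary \ref{cor:UnderApproximatingRU}: since $\tau\in\mathbb{I}$ and $U\in\mathbb{A}_{n}$, the set $\mathcal{V}_{0}^{N}=\mathcal{I}(\tau,U,\epsilon_{u})\in\mathbb{A}_{n}$ is well defined, and the choice of $\eta(\tau,U,\epsilon_{u})$ makes the relevant $\lambda$ exceed $\epsilon_{u}\geq 0$, so $\mathcal{V}_{0}^{N}\subseteq\int_{0}^{\tau}\exp(sA)U\,\mathrm{d}s=\mathcal{R}_{u}(\tau)=V_{0}^{N}$. The step for $\mathcal{V}_{i}^{N}=\mathcal{H}(\tau,\mathcal{V}_{i-1}^{N},\epsilon_{h})$ is verbatim the homogeneous step, giving $\mathcal{V}_{i}^{N}\subseteq\exp(\tau A)\mathcal{V}_{i-1}^{N}\subseteq\exp(\tau A)V_{i-1}^{N}=V_{i}^{N}$. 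A trivial induction from $\mathcal{W}_{0}^{N}=\{0\}=W_{0}^{N}$ using monotonicity of the Minkowski sum together with $\mathcal{V}_{i-1}^{N}\subseteq V_{i-1}^{N}$ gives $\mathcal{W}_{i}^{N}\subseteq W_{i}^{N}=\mathcal{R}_{u}(i\tau)$ (the equality by Lemma \ref{lem:TheoreticalRecursiveRelation}), and then $\Lambda_{i}^{N}=\mathcal{S}_{i}^{N}+\mathcal{W}_{i}^{N}\subseteq S_{i}^{N}+W_{i}^{N}=\Gamma_{i}^{N}=\mathcal{R}(i\tau)$, again by monotonicity of the Minkowski sum. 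Finally $\bigcup_{i=0}^{N}\Lambda_{i}^{N}\subseteq\bigcup_{i=0}^{N}\Gamma_{i}^{N}\subseteq\mathcal{R}(\intcc{0,T})$ by the last assertion of Lemma \ref{lem:TheoreticalRecursiveRelation}.

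The only step requiring genuine care --- the main obstacle --- is confirming that the hypotheses of Corollaries \ref{cor:UnderApproximatingRX} and \ref{cor:UnderApproximatingRU} are actually met at \emph{every} recursion step: each set fed into $\mathcal{H}$ or $\mathcal{I}$ must be full-dimensional, and the deflation coefficient used must be nonnegative. Both hold by construction: the $\mathbb{A}_{n}$-valuedness of $\mathcal{H}$ and $\mathcal{I}$ is precisely what Sections \ref{sec:InvertibleTaylorSeries}--\ref{subsec:NumTaylorTerms} secure (well-definedness of $k_{\min}$, $\kappa$, $\eta$, invertibility of the relevant values of $\mathcal{L}$ and $\mathcal{T}$, and the standing assumption $\tau\in\mathbb{I}$), and the nonnegativity of $\lambda$ at the chosen truncation orders follows from the defining inequalities $\lambda>\epsilon_{h}$ and $\lambda>\epsilon_{u}$ with $\epsilon_{h},\epsilon_{u}\geq 0$ --- which is exactly why ``$\in\mathbb{A}_{n}$'' must be carried along in the induction hypothesis. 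Everything else reduces to the elementary monotonicity of linear images and Minkowski sums with respect to set inclusion.
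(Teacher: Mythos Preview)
Your proof is correct and follows essentially the same inductive route as the paper's own proof: both apply Corollary~\ref{cor:UnderApproximatingRX} to obtain $\mathcal{H}(\tau,\mathcal{S}_{i-1}^{N},\epsilon_{h})\subseteq\exp(\tau A)\mathcal{S}_{i-1}^{N}$ (and likewise for $\mathcal{V}$ via Corollary~\ref{cor:UnderApproximatingRU} at the base step), then propagate the inclusion through the linear map $\exp(\tau A)$ and the Minkowski sums. Your version is in fact a bit more scrupulous than the paper's, since you explicitly carry $\mathcal{S}_{i}^{N},\mathcal{V}_{i}^{N}\in\mathbb{A}_{n}$ in the induction hypothesis and spell out why the corollaries' hypotheses (full-dimensionality, $\lambda\geq 0$) are met at every step; the paper leaves this implicit in the declared codomain of $\mathcal{H}$ and $\mathcal{I}$.
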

\begin{proof}
We have $S_{0}^{N}=\mathcal{S}_{0}^{N}=X_{0}$. Assume that $\mathcal{S}_{i}^{N}\subseteq S_{i}^{n}$ for some $i\in \intcc{0;N-1}$, then, using Corollary \ref{cor:UnderApproximatingRX},
$
\mathcal{S}_{i+1}^{N}=\mathcal{H}(\tau,\mathcal{S}_{i}^{N},\epsilon_{h}) \subseteq \exp(\tau A) \mathcal{S}_{i}^{N}\subseteq  \exp(\tau A) S_{i}^{N}=S_{i+1}^{N}.
$
Therefore,  using induction,  $\mathcal{S}_{i}^{N}\subseteq S_{i}^{N}$ for all $i\in \intcc{0;N}$. Similarly, and using Corollary \ref{cor:UnderApproximatingRU}, we have $\mathcal{V}_{i}^{N}\subseteq V_{i}^{N}$ for all $i\in \intcc{0;N}$. Hence, for all $i \in \intcc{0;N}$,  $ \mathcal{W}_{i}^{N}\subseteq W_{i}^{N}$. Moreover, for all  $i \in \intcc{0;N}$,
$\Lambda_{i}^{N}=\mathcal{S}_{i}^{N}+\mathcal{W}_{i}^{N}
 \subseteq {S}_{i}^{N}+{W}_{i}^{N}=\Gamma_{i}^{N}=\mathcal{R}(i\tau)$. Finally, $\bigcup_{i=0}^{N}\Lambda_{i}^{N}\subseteq\bigcup_{i\in \intcc{0;N}}\mathcal{R}(i\tau)\subseteq \bigcup_{t\in \intcc{0,T}}\mathcal{R}(t)= \mathcal{R}(\intcc{0,T})$.
\end{proof}

\begin{remark}[Assumptions on initial and input sets]
\label{rem:Assumptions}
In Section \ref{sec:SystemDescription}, it was assumed that both the initial and input sets are in $\mathbb{A}_{n}$. This assumption can be slightly relaxed if one of these sets is strictly equal to $\{0\}$. For example, if $X_{0}=\{0\}$, then $\mathcal{R}(t)=\mathcal{R}_{u}(t),~t\in \intcc{0,T}$. The proposed method can be implemented in this case by considering computing the sets $\mathcal{W}_{i}^{N},~i\in \intcc{0;N}$, which are independent of $X_{0}$, and omitting the computations of $\mathcal{S}_{i}^{N},~i\in \intcc{0;N}$. Similarly, if $U=\{0\}$, we have $\mathcal{R}(t)=\mathcal{R}_{h}(t),~t\in \intcc{0,T}$, and the proposed method can be implemented by considering the computations of $\mathcal{S}_{i}^{N},~i\in \intcc{0;N}$, only, which are independent of $U$. 
\end{remark}

\section{Implementation using zonotopes and memory complexity}
\label{sec:Implementation}
Fix $N\in \mathbb{N}$ and recall the definitions of $\{\mathcal{S}_{i}^{N}\}_{i=0}^{N}$, $ \{\mathcal{V}_{i}^{N}\}_{i=0}^{N}$, $\{\mathcal{W}_{i}^{N}\}_{i=0}^{N}$ and $\{\Lambda_{i}^{N}\}_{i=0}^{N}$  in Theorem \ref{thm:ProposedMethod}. The computations of $\mathcal{S}_{i}^{N},~\mathcal{V}_{i}^{N},~i\in\intcc{0;N}$ are straightforward for any arbitrary norm on $\mathbb{R}^{n}$  since these sets are simply full-dimensional affine transformations of unit balls. However, the computations of $\{\mathcal{W}_{i}^{N}\}_{i=0}^{N}$ and $\{\Lambda_{i}^{N}\}_{i=0}^{N}$ involve Minkowski sums, whose explicit expressions are generally unknown. If the embedded norm is the maximum norm, then  Minkowski sums can be computed explicitly. Hence, we implement the proposed method using zonotopes, i.e., affine transformations of closed unit balls w.r.t. the maximum norm. 

Given $c\in \mathbb{R}^{n}$, $G\in \mathbb{R}^{n\times p}$, a zonotope $\Zonotope{c}{G}\subseteq \mathbb{R}^{n}$  is defined by  $c+G\mathbf{B}_{p}^{\infty}$, where $\mathbf{B}_{p}^{\infty}$ denotes the $p$-dimensional closed unit ball  w.r.t. the maximum norm. The columns of $G$ are  referred to as the generators of $\Zonotope{c}{G}$ and the ratio $p/n$ is referred to as the order of $\Zonotope{c}{G}$ and is denoted by $o(\Zonotope{c}{G})$ (e.g., the order of $\mathbf{B}_{n}^{\infty}$ is one). Herein,  the number of generators of $\Zonotope{c}{G}$ is denoted by $\Gen{\Zonotope{c}{G}}$.  For any two zonotopes $\Zonotope{c}{G}, \Zonotope{\tilde{c}}{\tilde{G}}\subseteq \mathbb{R}^{n}$ and any linear transformation $L\in \mathbb{R}^{m\times n}$, 
$
\Zonotope{c}{G}+\Zonotope{\tilde{c}}{\tilde{G}}=\Zonotope{c+\tilde{c}}{[G,\tilde{G}]}$ and $
L\Zonotope{c}{G}=\Zonotope{Lc}{LG}.
$ 

Let us analyze the memory complexity of the proposed method implemented with  zonotopes. We have $\Gen{X_{0}}=o(X_{0})n$ and $\Gen{U}=o(U)n$. As affinely transforming zonotopes preserves their orders, we have $\Gen{\mathcal{S}_{i}^{N}}=o(X_{0})n,~i\in \intcc{0;N}$, and $\Gen{\mathcal{V}_{i}^{N}}=o(U)n,~i\in \intcc{0;N}$. The sequence $\{\mathcal{W}_{i}^{N}\}_{i=0}^{N}$ is computed as 
$
\mathcal{W}_{i}^{N}=\sum_{j=0}^{i-1}\mathcal{V}_{j}^{N},~i\in \intcc{0;N}.
$
Hence, 
$
\Gen{\mathcal{W}_{i}^{N}}=i  o(U)n, ~i \in \intcc{0;N}.
$
Consequently, as $\Lambda_{i}^{N}=\mathcal{S}_{i}^{N}+\mathcal{W}_{i}^{N},~i\in \intcc{0;N}$,
$
\Gen{\Lambda_{i}^{N}}=o(X_{0})n+i o(U)n, ~i \in \intcc{0;N}.
$
Finally, the total number of generators from the sequence $\{\Lambda_{i}^{N}\}_{i=0}^{N}$ is $\sum_{i=0}^{N}\Gen{\Lambda_{i}^{N}}=(N+1)o(X_{0})n+N(N+1)o(U)n/2$. This shows that the total number of generators stored is of order $N^2n$ and that gives a  space complexity of order $N^2n^2${ (second order in the argument $N$)}, which is identical, e.g., to the space complexity of the over-approximation method in \cite{SerryReissig21}. {However,  if we  store only the sets $\mathcal{S}_{i}^{N},~\mathcal{V}_{i}^{N}$,~ $i\in \intcc{0;N}$ (the sets prior to Minkowski sum computations), the space complexity is reduced to be of order $Nn^2$ (first order in the argument $N$), where the sets $\mathcal{W}_{i}^{N},~\Lambda_{i}^{N},~i\in \intcc{0;N}$, can be computed afterwards when needed by means of Minkowski sums, which are computationally inexpensive  \cite{AlthoffFrehse16}. This approach  was proposed in \cite{GirardLeGuernicMaler06}  in order to lower memory complexity, also   resulting in a first order memory complexity with respect to the argument $N$}. In Section \ref{sec:RandomlyGenerated}, we  explore empirically the time efficiency  of the proposed method via means of numerical simulations.

As shown above, Minkowski sums of zonotopes increase the number of generators to be stored. This may limit the applicability of our zonotopic implementation  to cases when the discretization parameter $N$ is not significantly large as the proposed method incorporates iterative Minkowski sums. This issue can be overcome by implementing zonotopic order reductions that replace a given zonotope with a zonotopic   under-approximation with  less generators, which lessen the memory requirement, with the price of  reduced accuracy. { For example, a given zonotope can be first inscribed by another zonotope, with a less number of generators, using one of the standard over-approximating order reduction techniques (see, e.g., \cite{KopetzkiSchuermannAlthoff17}), and then   the over-approximating zonotope can be scaled down to be contained in the original zonotope by utilizing linear programming (see \cite{SadraddiniTedrake19}). Furthermore, a zonotope can be under-approximated by  choosing a subset of its generators,  based on some specified criteria, and replacing the chosen generators by their  sum (or difference) \cite{YangOzay21, RaghuramanKoeln22}. In fact, these two approaches have been implemented in the 2021 version of the reachability software CORA \cite{Althoff15}. In Section \ref{sec:NumericalExample}, we will explore the influence of order reduction techniques on the under-approximations obtained using our proposed method.  
}

\section{Convergence}
\label{sec:Convergence}
In Section \ref{sec:ProposedMethod}, we have introduced the proposed method and proved its under-approximate capability.  The proposed method requires to assign values to the parameters $\epsilon_{h}$ and ${\epsilon}_{u}$ in the interval $\intco{0,1}$. The closer the values of $\epsilon_{h}$ and ${\epsilon}_{u}$ to one, the higher the approximation accuracy.  Herein, we show that if we choose $\epsilon_{h}=1-1/N^2$ and $\epsilon_{u}=1-1/N$, then the proposed method  generates convergent under-approximations, as $N$ approaches $\infty$, 
with first-order convergence guarantees, in the sense of the Hausdorff distance, and that is the second main result of this work. 
{
\begin{theorem}\label{thm:Convergence}
Let $N\in \mathbb{N}$, $\tau=T/N$, $\epsilon_{h}=1-1/N^{2}$, and $\epsilon_{u}=1-1/N$. Assume $\tau \in \mathbb{I}$ and that $\tau \norm{A}\leq 1$. Define $\{\mathcal{S}_{i}^{N}\}_{i=0}^{N}$, $ \{\mathcal{V}_{i}^{N}\}_{i=0}^{N}$, $\{\mathcal{W}_{i}^{N}\}_{i=0}^{N}$, and $\{\Lambda_{i}^{N}\}_{i=0}^{N}$ as in Theorem \ref{thm:ProposedMethod}. Then, there exist constants $D_{1}, D_{2},D_{3},D_{4}\in \mathbb{R}_{+}$ that are independent of $N$, such that,  for all $i\in \intcc{0;N}$,
$\mathfrak{d}(\mathcal{S}_{i}^{N},\mathcal{R}_{h}(i\tau))\leq D_{1}\tau$, $\mathfrak{d}(\mathcal{W}_{i}^{N},\mathcal{R}_{u}(i\tau))\leq D_{2}\tau$, 
$
\mathfrak{d}(\Lambda_{i}^{N},\mathcal{R}(i\tau))\leq D_{3}\tau,
$
and
$
\mathfrak{d}(\bigcup_{i=0}^{N}\Lambda_{i}^{N},\mathcal{R}(\intcc{0,T}))\leq D_{4}\tau.
$
\end{theorem}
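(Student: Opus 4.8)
The plan is to establish the four estimates in order, the first by a discrete Gr\"onwall argument on the homogeneous iterates and the others by bootstrapping off it. A preliminary reduction: for every admissible $N$ all the sets in the statement are bounded, so each distance is finite and the claimed $O(\tau)$ bound holds trivially (with an $N$-dependent constant) for $N$ below any fixed threshold $N_0$; since $\tau=T/N$, it therefore suffices to prove the estimates for $N>N_0$, which I fix large enough for every smallness condition below. The real content is to show that the generated sets $\mathcal{S}_i^N$ and $\mathcal{V}_i^N$ do not degenerate as $N\to\infty$.

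\emph{Uniform shape bounds.} Write $\mathcal{S}_i^N=c_i+G_i\mathbf{B}_{p_x}$ and $\mathcal{V}_i^N=\tilde{c}_i+\tilde{G}_i\mathbf{B}_{p_u}$. Unwinding \eqref{eq:ProposedMethod} with \eqref{eq:H}, \eqref{eq:I}, each generator matrix is a product $G_i=\big(\prod_{j}\lambda_j\big)\mathcal{L}(\tau,\kappa_i)\cdots\mathcal{L}(\tau,\kappa_1)G_x$, and likewise $\tilde{G}_i$ with an additional leftmost factor $\mathcal{T}(\tau,\eta)$ from the $\mathcal{I}$-step, where each scalar $\lambda_j$ exceeds $\epsilon_h=1-1/N^2$ (and the single factor from $\mathcal{I}$ exceeds $\epsilon_u=1-1/N$). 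From $\norm{\mathcal{L}(\tau,k)}\le\e^{\tau\norm{A}}$ (see \eqref{eq:BoundonExpAndL}) and $\tau\norm{A}\le1$ I get $\norm{c_i},\norm{G_i}=O(1)$, while $\norm{\tilde{c}_i},\norm{\tilde{G}_i}=O(\tau)$ because $\mathcal{V}_i^N\subseteq V_i^N$ and $\norm{\mathcal{R}_u(\tau)}=O(\tau)$. For the pseudo-inverses, $\kappa_j\ge2$ (imposed in \eqref{eq:Kappa}) forces $\norm{\e^{\tau A}-\mathcal{L}(\tau,\kappa_j)}\le\theta(\tau\norm{A},2)=O(\tau^2)$ (see \eqref{eq:BoundonApproxError}), so the quantitative perturbation estimate underlying Lemma~\ref{Lem:ContinuityOfMatrixInverse} gives $\norm{\mathcal{L}(\tau,\kappa_j)^{-1}}\le\e^{\tau\norm{A}}/(1-\beta_\tau)$ with $\beta_\tau=O(\tau^2)$; telescoping the at most $N$ factors and using $(1-\beta_\tau)^{-N}=(1-O(1/N^2))^{-N}\to1$, $(\epsilon_h)^{-N}\to1$, $(\epsilon_u)^{-1}\to1$, yields $\norm{(G_i)^\dagger}=O(1)$ uniformly in $i\le N$ and $N$; the analogous bound for $\tilde{G}_i$ carries the extra factor $\norm{\mathcal{T}(\tau,\eta)^{-1}}=O(1/\tau)$ (valid since $\tau\in\mathbb{I}$ and $\mathcal{T}(\tau,\eta)$ is an $O(\tau^2)$ perturbation of $\int_0^\tau\e^{sA}\,\mathrm{d}s=\tau I+O(\tau^2)$), so $\norm{(\tilde{G}_i)^\dagger}=O(1/\tau)$ uniformly.

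\emph{Homogeneous iterates, then the rest.} Set $\delta_i=\mathfrak{d}(\mathcal{S}_i^N,\mathcal{R}_h(i\tau))$. Using $\mathcal{R}_h(i\tau)=\e^{\tau A}\mathcal{R}_h((i-1)\tau)$, the triangle inequality and Lemma~\ref{lem:HausdorffDistance}(b), $\delta_i\le\mathfrak{d}\big(\mathcal{H}(\tau,\mathcal{S}_{i-1}^N,\epsilon_h),\e^{\tau A}\mathcal{S}_{i-1}^N\big)+\e^{\tau\norm{A}}\delta_{i-1}$. Splitting $\mathcal{H}(\tau,\mathcal{S}_{i-1}^N,\epsilon_h)=\mathcal{L}(\tau,\kappa)[c_{i-1}+\lambda G_{i-1}\mathbf{B}_{p_x}]$ into a Taylor-truncation error and a deflation error and bounding them with Lemma~\ref{lem:HausdorffDistance}(a),(c), the inequality $1-\lambda(\tau,\Omega,\kappa)\le\e^{\tau\norm{A}}\theta(\tau\norm{A},\kappa)\norm{G^\dagger}(\norm{G}+\norm{c})$ (read off \eqref{eq:lambda}), $\theta(\tau\norm{A},\kappa)=O(\tau^2)$, and the shape bounds, both errors are $O(\tau^2)$, so $\delta_i\le\e^{\tau\norm{A}}\delta_{i-1}+C\tau^2$; since $\delta_0=0$, the discrete Gr\"onwall inequality and $\e^{\tau\norm{A}}-1\ge\tau\norm{A}$ give $\delta_i\le D_1\tau$ (the case $A=0$ is trivial). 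For the inputs I first verify $\mathfrak{d}(\mathcal{V}_0^N,\mathcal{R}_u(\tau))=\mathfrak{d}\big(\mathcal{I}(\tau,U,\epsilon_u),\int_0^\tau\e^{sA}U\,\mathrm{d}s\big)=O(\tau^2)$ by the same splitting, using $\theta(\tau\norm{A},\eta)\le\theta(\tau\norm{A},1)=O(\tau)$, $1-\lambda(\tau,U,\eta)\le1-\epsilon_u=O(\tau)$, $\norm{\mathcal{T}(\tau,\eta)}=O(\tau)$, and $\norm{\mathcal{T}(\tau,\eta)-\int_0^\tau\e^{sA}\,\mathrm{d}s}=O(\tau^2)$; then I run the same recursion for $\mathcal{V}_i^N=\mathcal{H}(\tau,\mathcal{V}_{i-1}^N,\epsilon_h)$ against $V_i^N=\e^{\tau A}V_{i-1}^N$, where, because $\norm{\mathcal{V}_{i-1}^N}=O(\tau)$, the per-step error improves to $O(\tau^3)$, so $\mathfrak{d}(\mathcal{V}_i^N,V_i^N)=O(\tau^2)$ for all $i\le N$. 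Lemma~\ref{lem:HausdorffDistance}(a) applied to $\mathcal{W}_i^N=\sum_{j<i}\mathcal{V}_j^N$, $\mathcal{R}_u(i\tau)=\sum_{j<i}V_j^N$ then gives $\mathfrak{d}(\mathcal{W}_i^N,\mathcal{R}_u(i\tau))\le\sum_{j<i}\mathfrak{d}(\mathcal{V}_j^N,V_j^N)=N\cdot O(\tau^2)=D_2\tau$, and a further application to $\Lambda_i^N=\mathcal{S}_i^N+\mathcal{W}_i^N$, $\mathcal{R}(i\tau)=\mathcal{R}_h(i\tau)+\mathcal{R}_u(i\tau)$ gives $\mathfrak{d}(\Lambda_i^N,\mathcal{R}(i\tau))\le D_3\tau$. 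Finally, $t\mapsto\mathcal{R}(t)$ is Lipschitz in $\mathfrak{d}$ on $\intcc{0,T}$ (from Lipschitzness of $\e^{tA}$ and $\mathfrak{d}(\mathcal{R}_u(t),\mathcal{R}_u(t'))\le\norm{\int_{t'}^{t}\e^{sA}U\,\mathrm{d}s}=O(\abs{t-t'})$), so picking $i(t)$ with $\abs{t-i(t)\tau}\le\tau$ and using Lemma~\ref{lem:HausdorffDistance}(d) twice yields $\mathfrak{d}\big(\bigcup_i\mathcal{R}(i\tau),\mathcal{R}(\intcc{0,T})\big)=O(\tau)$ and $\mathfrak{d}\big(\bigcup_i\Lambda_i^N,\bigcup_i\mathcal{R}(i\tau)\big)\le\sup_i\mathfrak{d}(\Lambda_i^N,\mathcal{R}(i\tau))\le D_3\tau$; combined with $\bigcup_i\Lambda_i^N\subseteq\mathcal{R}(\intcc{0,T})$ (Theorem~\ref{thm:ProposedMethod}), the triangle inequality closes the bound $\mathfrak{d}\big(\bigcup_i\Lambda_i^N,\mathcal{R}(\intcc{0,T})\big)\le D_4\tau$.

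I expect the main obstacle to be the uniform shape bounds, and in particular the uniform-in-$N$ control of $\norm{(G_i)^\dagger}$ and $\norm{(\tilde{G}_i)^\dagger}$: a naive bound on the product of $N$ matrix inverses grows, so one must exploit the telescoping $\mathcal{L}(\tau,\kappa_i)^{-1}\cdots\mathcal{L}(\tau,\kappa_1)^{-1}\approx\e^{-i\tau A}$ and show the accumulated multiplicative perturbation of the $N=T/\tau$ factors, each $1+O(\tau^2)$, stays bounded. This is exactly where the prescribed rates are forced: with a slower $\epsilon_h$ the per-step deflation error would be only $\Theta(1-\epsilon_h)$, and with a slower $\epsilon_u$ it would be $\Theta((1-\epsilon_u)\tau)$, so that summing $N$ of them in $\mathcal{W}_i^N$ would fail to vanish as $N\to\infty$. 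The remaining pieces --- the error splitting, the discrete Gr\"onwall step, and the Lipschitz estimate for the tube --- are routine given the lemmas already in place.
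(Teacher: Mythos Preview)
Your argument is correct, but it takes a harder road than the paper's. The substantive difference is in how the per-step deflation error for $\mathcal{H}$ is bounded. You estimate $1-\lambda(\tau,\mathcal{S}_{i-1}^N,\kappa)$ from the explicit formula \eqref{eq:lambda}, which forces you to control $\norm{G_{i-1}^{\dagger}}$ uniformly in $i\le N$ and $N$; hence the whole ``uniform shape bounds'' section and the telescoping product of $\mathcal{L}(\tau,\kappa_j)^{-1}$. The paper bypasses this entirely by observing that the very definition \eqref{eq:Kappa} of $\kappa$ guarantees $\lambda(\tau,\Omega,\kappa(\tau,\Omega,\epsilon_h))>\epsilon_h$ for \emph{every} $\Omega\in\mathbb{A}_n$, so $1-\lambda<1-\epsilon_h=\tau^2/T^2$ regardless of how ill-conditioned $G_{i-1}$ might be. (You already use this trick for the $\mathcal{I}$-step with $\epsilon_u$; applying it uniformly yields the paper's Lemmas~\ref{lem:ErrorHomogeneous} and~\ref{lem:ErrorInput}.) After that, the only a~priori bound needed is on $\norm{\mathcal{S}_{i-1}^N}$ and $\norm{\mathcal{V}_{i-1}^N}$, which follows immediately from the inclusions $\mathcal{S}_i^N\subseteq S_i^N$, $\mathcal{V}_i^N\subseteq V_i^N$ (Theorem~\ref{thm:ProposedMethod}) and Lemma~\ref{lem:BoundingSandV}. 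So what you flag as the ``main obstacle'' --- non-degeneracy of the generated sets --- is in fact a non-issue in the paper's route; non-degeneracy becomes a \emph{consequence} of the convergence rather than a prerequisite. Your approach still works and gives the same order, but at the cost of the product-of-inverses bookkeeping and the preliminary reduction to $N>N_0$, neither of which the paper needs.

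One small omission: in your $\mathcal{I}$ estimate you split into a Taylor piece $\norm{\mathcal{T}(\tau,\eta)-\int_0^\tau \e^{sA}\,\mathrm{d}s}$ and a deflation piece, but the target is the set-valued integral $\int_0^\tau \e^{sA}U\,\mathrm{d}s$, not the linear image $\big(\int_0^\tau \e^{sA}\,\mathrm{d}s\big)U$. The missing term $\mathfrak{d}\big(\big(\int_0^\tau \e^{sA}\,\mathrm{d}s\big)U,\int_0^\tau \e^{sA}U\,\mathrm{d}s\big)$ is also $O(\tau^2)$ (a standard quadrature estimate; the paper handles it via an Ostrowski-type inequality in the proof of Lemma~\ref{lem:ErrorInput}), so your conclusion stands once this is inserted.
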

}
Next, we state  some  technical results that are necessary in the proof of  Theorem \ref{thm:Convergence}.  The proofs of Lemmas \ref{lem:ErrorHomogeneous},  \ref{lem:ErrorInput}, and \ref{lem:BoundingSandV} are given in the Appendix.
\begin{lemma}[Semigroup property of reachable sets \cite{Chernousko94}]\label{lem:SemiGroup}
 Given $0\leq a\leq b \leq T$,
 $
 \mathcal{R}(b)=\exp((b-a)A)\mathcal{R}(a)+\int_{0}^{(b-a)}\exp(sA)U\mathrm{d}s.
 $
 \end{lemma}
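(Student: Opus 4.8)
The plan is to verify the identity by direct computation, expanding the right-hand side with the definition $\mathcal{R}(t)=\e^{tA}X_{0}+\int_{0}^{t}\e^{sA}U\,\mathrm{d}s$ and reducing everything back to the definition of $\mathcal{R}(b)$. First I would substitute the expression for $\mathcal{R}(a)$ into the right-hand side, obtaining
\[
\e^{(b-a)A}\mathcal{R}(a)+\int_{0}^{b-a}\e^{sA}U\,\mathrm{d}s
=\e^{(b-a)A}\e^{aA}X_{0}
+\e^{(b-a)A}\int_{0}^{a}\e^{sA}U\,\mathrm{d}s
+\int_{0}^{b-a}\e^{sA}U\,\mathrm{d}s .
\]
The homogeneous term collapses immediately via the matrix-exponential semigroup property $\e^{(b-a)A}\e^{aA}=\e^{bA}$, yielding $\e^{bA}X_{0}$, which is exactly the homogeneous part of $\mathcal{R}(b)$.

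The substance lies in the input terms. I would first pull the constant matrix $\e^{(b-a)A}$ inside the set-valued integral: by the definition $\int_{0}^{a}\e^{sA}U\,\mathrm{d}s=\bigcup_{u\in U^{\intcc{0,a}}}\int_{0}^{a}\e^{sA}u(s)\,\mathrm{d}s$ and linearity of the Bochner integral under left-multiplication by a fixed matrix, one gets $\e^{(b-a)A}\int_{0}^{a}\e^{sA}U\,\mathrm{d}s=\int_{0}^{a}\e^{(b-a+s)A}U\,\mathrm{d}s$. A change of variables $r=s+(b-a)$, combined with the time-shift bijection $u(\cdot)\mapsto u(\cdot-(b-a))$ between the selector spaces $U^{\intcc{0,a}}$ and $U^{\intcc{b-a,b}}$ (valid because $U$ is constant in time), then rewrites this as $\int_{b-a}^{b}\e^{rA}U\,\mathrm{d}r$. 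Hence the two input contributions become $\int_{0}^{b-a}\e^{sA}U\,\mathrm{d}s+\int_{b-a}^{b}\e^{rA}U\,\mathrm{d}r$.

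Finally I would invoke additivity of the set-valued integral over the adjacent intervals $\intcc{0,b-a}$ and $\intcc{b-a,b}$: concatenating any pair of selectors $u_{1}\in U^{\intcc{0,b-a}}$ and $u_{2}\in U^{\intcc{b-a,b}}$ produces a selector in $U^{\intcc{0,b}}$, and conversely every selector on $\intcc{0,b}$ restricts to such a pair, so the Minkowski sum of the two integrals equals $\int_{0}^{b}\e^{sA}U\,\mathrm{d}s$. Together with the homogeneous term this gives $\e^{bA}X_{0}+\int_{0}^{b}\e^{sA}U\,\mathrm{d}s=\mathcal{R}(b)$, as desired.

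I expect the main obstacle to be the rigorous justification of the three set-valued integral manipulations: that left-multiplication by a constant matrix commutes with the set-valued integral, that the time shift induces a genuine bijection of the measurable-selector spaces, and that concatenation and restriction give the interval additivity used in the last step. These are measure-theoretic bookkeeping facts rather than deep points, and they hold cleanly here precisely because $U$ is a fixed, time-independent compact set, so the shift and concatenation preserve measurability and $U$-membership of selectors without any further argument; the algebraic core of the identity (the matrix semigroup property plus the change of variables) is entirely routine.
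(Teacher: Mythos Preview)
Your argument is correct and is the standard direct verification of the semigroup identity. Note, however, that the paper does not actually prove this lemma: it is stated with a citation to Chernousko and no proof is given, so there is nothing in the paper to compare your approach against. Your write-up is self-contained and the three measure-theoretic points you flag (commuting a constant matrix through the set-valued integral, the time-shift bijection of selector spaces, and interval additivity via concatenation/restriction of selectors) are indeed the only places requiring care, and they go through exactly as you say because $U$ is time-independent and compact.
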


\begin{lemma}\label{lem:ErrorHomogeneous}
Let $\Omega=c+G\mathbf{B}_{p}\in  \mathbb{A}_{n}$, $\epsilon\in \intco{0,1}$,  and  $t\in \intcc{0,T}$. Assume $t\norm{A}\leq 1$.  Then, 
$$
\mathfrak{d}(\mathcal{H}(t,\Omega,\epsilon),\e^{t A}\Omega)\leq (2(1-\epsilon) +(t\norm{A})^2)\e^{T\norm{A}}\norm{\Omega}.
$$
\end{lemma}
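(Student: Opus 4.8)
The plan is to bound the Hausdorff distance $\mathfrak{d}(\mathcal{H}(t,\Omega,\epsilon),\e^{tA}\Omega)$ by splitting it via the triangle inequality through the intermediate set $\mathcal{L}(t,k)\Omega$, where $k=\kappa(t,\Omega,\epsilon)$ is the number of Taylor terms selected by the operator $\mathcal{H}$. Writing $\Omega=c+G\mathbf{B}_p$ and $\lambda=\lambda(t,\Omega,k)$, we have $\mathcal{H}(t,\Omega,\epsilon)=\mathcal{L}(t,k)(c+\lambda G\mathbf{B}_p)$, so
$$
\mathfrak{d}(\mathcal{H}(t,\Omega,\epsilon),\e^{tA}\Omega)\le \mathfrak{d}\bigl(\mathcal{L}(t,k)(c+\lambda G\mathbf{B}_p),\mathcal{L}(t,k)(c+G\mathbf{B}_p)\bigr)+\mathfrak{d}\bigl(\mathcal{L}(t,k)\Omega,\e^{tA}\Omega\bigr).
$$
For the second term I would use Lemma \ref{lem:HausdorffDistance}(c) together with the approximation-error bound \eqref{eq:BoundonApproxError}: it is at most $\norm{\e^{tA}-\mathcal{L}(t,k)}\norm{\Omega}\le \theta(t\norm{A},k)\norm{\Omega}\le \tfrac{(t\norm{A})^k}{k!}\e^{t\norm{A}}\norm{\Omega}$. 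Since $k=\kappa(t,\Omega,\epsilon)\ge 2$ by definition \eqref{eq:Kappa}, and $t\norm{A}\le 1$, this is bounded by $\tfrac{(t\norm{A})^2}{2}\e^{t\norm{A}}\norm{\Omega}\le (t\norm{A})^2\e^{T\norm{A}}\norm{\Omega}$ (using $t\le T$ and $1/2<1$), which already accounts for the $(t\norm{A})^2$ summand in the claimed bound.

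For the first term, the two sets differ only in the scaling of the generator part: $\mathcal{L}(t,k)(c+G\mathbf{B}_p)=\mathcal{L}(t,k)c+\mathcal{L}(t,k)G\mathbf{B}_p$ and $\mathcal{L}(t,k)(c+\lambda G\mathbf{B}_p)=\mathcal{L}(t,k)c+\lambda\mathcal{L}(t,k)G\mathbf{B}_p$. Using translation-invariance of $\mathfrak{d}$ and then Lemma \ref{lem:HausdorffDistance}(c) (or a direct argument comparing $\lambda \mathcal{L}(t,k)G\mathbf{B}_p$ and $\mathcal{L}(t,k)G\mathbf{B}_p$), this distance is at most $(1-\lambda)\norm{\mathcal{L}(t,k)G\mathbf{B}_p}\le (1-\lambda)\norm{\mathcal{L}(t,k)}\norm{G\mathbf{B}_p}\le (1-\lambda)\e^{t\norm{A}}\norm{G\mathbf{B}_p}$, using \eqref{eq:BoundonExpAndL}. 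Here I note $\norm{G\mathbf{B}_p}\le \norm{\Omega}+\norm{c}$ is one option, but cleaner is to bound $\norm{\mathcal{L}(t,k)G\mathbf{B}_p}$ in terms of $\norm{\e^{tA}G\mathbf{B}_p}$ plus error; since $\e^{tA}G\mathbf{B}_p\subseteq \e^{tA}\Omega - \e^{tA}c$, one gets $\norm{\e^{tA}G\mathbf{B}_p}\le \norm{\e^{tA}\Omega}+\norm{\e^{tA}c}$. Actually the simplest route is to absorb everything into $\e^{T\norm{A}}\norm{\Omega}$ by a slightly careful estimate — I would verify that $\norm{\mathcal{L}(t,k)G\mathbf{B}_p}\le \e^{T\norm{A}}\norm{\Omega}$ or, failing a clean bound, keep a factor like $2\e^{T\norm A}\norm\Omega$ which the statement's coefficient $2(1-\epsilon)$ is designed to accommodate.

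The remaining point is to control $1-\lambda$ by $1-\epsilon$. By the definition \eqref{eq:Kappa} of $\kappa(t,\Omega,\epsilon)$, we have $\lambda(t,\Omega,\kappa(t,\Omega,\epsilon))>\epsilon$, hence $1-\lambda<1-\epsilon$. Combining, the first term is at most $(1-\epsilon)\cdot(\text{something}\le 2\e^{T\norm A}\norm\Omega)$, and adding the two contributions yields $\mathfrak{d}(\mathcal{H}(t,\Omega,\epsilon),\e^{tA}\Omega)\le \bigl(2(1-\epsilon)+(t\norm A)^2\bigr)\e^{T\norm A}\norm\Omega$, as required. The main obstacle I anticipate is the bookkeeping on the first term: getting the generator-scaling distance expressed purely in terms of $\norm{\Omega}$ (rather than $\norm{c}$ and $\norm{G\mathbf{B}_p}$ separately) with a constant no worse than $2$, so that the coefficient matches exactly; this likely uses that $\mathcal{L}(t,k)G\mathbf{B}_p$ is close to $\e^{tA}G\mathbf{B}_p$ which in turn is a centered copy of a subset of $\e^{tA}\Omega$, together with $t\norm A\le1$ to bound the approximation error contribution by a constant times $\norm\Omega$.
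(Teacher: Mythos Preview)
Your approach is correct and essentially identical to the paper's: split via the triangle inequality through $\mathcal{L}(t,k)\Omega$, bound the approximation term by $(t\norm{A})^2\e^{T\norm{A}}\norm{\Omega}$ using $k\ge 2$ and $t\norm{A}\le 1$, and bound the deflation term by $(1-\lambda)\norm{\mathcal{L}(t,k)}\norm{G\mathbf{B}_p}$ with $1-\lambda<1-\epsilon$.

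The ``obstacle'' you anticipate in the last paragraph is not one: the very first option you wrote down, $\norm{G\mathbf{B}_p}\le \norm{\Omega}+\norm{c}$, already finishes the job, because $c\in\Omega$ gives $\norm{c}\le\norm{\Omega}$ and hence $\norm{G\mathbf{B}_p}\le 2\norm{\Omega}$ (this is exactly the paper's Lemma~\ref{lem:BoundingGwithOmega}). There is no need to route through $\e^{tA}G\mathbf{B}_p$ or to invoke $t\norm{A}\le 1$ for this term.
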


{
\begin{lemma}\label{lem:ErrorInput}
Let $\Omega=c+G\mathbf{B}_{p}\in \mathbb{A}_{n}$, $\epsilon\in \intco{0,1}$, and  $t\in \mathbb{I}\cap \intcc{0,T}$. Assume $t\norm{A}\leq 1$. Then, 
$$
\mathfrak{d}(\mathcal{I}(t,\Omega,{\epsilon}),\int_{0}^{t}\e^{sA}\Omega\mathrm{d}s) \leq 2((1-\epsilon)t+t^2\norm{A})\e^{T\norm{A}}\norm{\Omega}.
$$
\end{lemma}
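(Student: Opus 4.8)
The plan is to bound the Hausdorff distance between $\mathcal{I}(t,\Omega,\epsilon)$ and $\int_0^t \e^{sA}\Omega\,\mathrm{d}s$ by splitting it through an intermediate set, mirroring the structure of the proof of Lemma~\ref{lem:ErrorHomogeneous}. First I would recall that $\mathcal{I}(t,\Omega,\epsilon) = \mathcal{T}(t,\eta(t,\Omega,\epsilon))[c + \lambda(t,\Omega,\eta(t,\Omega,\epsilon))G\mathbf{B}_p]$, and introduce the shorthand $k = \eta(t,\Omega,\epsilon)$, $\lambda = \lambda(t,\Omega,k)$. The natural intermediate set is the ``undeflated'' approximation $\mathcal{T}(t,k)\,\Omega = \mathcal{T}(t,k)(c + G\mathbf{B}_p)$, so by the triangle inequality
\begin{equation*}
\mathfrak{d}\!\left(\mathcal{I}(t,\Omega,\epsilon),\int_0^t \e^{sA}\Omega\,\mathrm{d}s\right) \le \mathfrak{d}\!\left(\mathcal{I}(t,\Omega,\epsilon),\mathcal{T}(t,k)\Omega\right) + \mathfrak{d}\!\left(\mathcal{T}(t,k)\Omega,\int_0^t \e^{sA}\Omega\,\mathrm{d}s\right).
\end{equation*}

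For the second term, I would use that $\int_0^t \e^{sA}\Omega\,\mathrm{d}s \supseteq$ (or is comparable to) $\left(\int_0^t \e^{sA}\,\mathrm{d}s\right)\Omega$ only in one direction, so more carefully I would bound $\mathfrak{d}(\mathcal{T}(t,k)\Omega, \int_0^t\e^{sA}\Omega\,\mathrm{d}s)$ by observing $\int_0^t\e^{sA}\Omega\,\mathrm{d}s = \int_0^t \e^{sA}(c+G\mathbf{B}_p)\,\mathrm{d}s \subseteq \left(\int_0^t\e^{sA}\,\mathrm{d}s\right)c + \left(\int_0^t\e^{sA}G\mathbf{B}_p\,\mathrm{d}s\right)$, and handle each piece; the error between $\mathcal{T}(t,k) = \int_0^t\mathcal{L}(s,k)\,\mathrm{d}s$ and $\int_0^t\e^{sA}\,\mathrm{d}s$ is controlled by $\int_0^t\norm{\e^{sA}-\mathcal{L}(s,k)}\,\mathrm{d}s \le \int_0^t\theta(s\norm{A},k)\,\mathrm{d}s \le t\,\theta(t\norm{A},k)$ using \eqref{eq:BoundonApproxError} and monotonicity of $\theta$. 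Then via Lemma~\ref{lem:HausdorffDistance}(a),(c) this term is at most roughly $t\,\theta(t\norm{A},k)\norm{\Omega}$ (the subadditivity handling both the $c$ and $G\mathbf{B}_p$ parts together since $\norm{\Omega}=\sup_{x\in\Omega}\norm{x}$ dominates). For the first term, the deflation replaces $G$ by $\lambda G$, so $\mathfrak{d}(\mathcal{T}(t,k)[c+\lambda G\mathbf{B}_p], \mathcal{T}(t,k)[c+G\mathbf{B}_p]) = \mathfrak{d}(\mathcal{T}(t,k)G\mathbf{B}_p, \lambda\mathcal{T}(t,k)G\mathbf{B}_p) \le (1-\lambda)\norm{\mathcal{T}(t,k)G}\cdot(\text{const})$, and $\norm{\mathcal{T}(t,k)} \le \int_0^t\norm{\mathcal{L}(s,k)}\,\mathrm{d}s \le \int_0^t\e^{s\norm{A}}\,\mathrm{d}s \le t\e^{t\norm{A}}$ by \eqref{eq:BoundonExpAndL}.

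The crux is then converting $1-\lambda$ and $\theta(t\norm{A},k)$ into the explicit bound $2((1-\epsilon)t + t^2\norm{A})\e^{T\norm{A}}\norm{\Omega}$. Since $k = \eta(t,\Omega,\epsilon)$ is chosen so that $\lambda(t,\Omega,k) > \epsilon$, we get $1-\lambda < 1-\epsilon$ immediately, contributing a term of order $(1-\epsilon)\,t\,\e^{t\norm{A}}\norm{\Omega}$. For the $\theta$ term I would use the definition \eqref{eq:lambda} of $\lambda$: from $\lambda(t,\Omega,k) > \epsilon$ one can extract an upper bound on $\e^{t\norm{A}}\theta(t\norm{A},k)\norm{G^\dagger}$ in terms of $1-\epsilon$ and $\norm{\Omega}$-type quantities, but more directly, since $k \ge 2$ (as $\eta \in \mathbb{N}$ and in fact the relevant regime forces $k\ge 2$), we have $\theta(t\norm{A},k) \le \frac{(t\norm{A})^k}{k!}\e^{t\norm{A}} \le \frac{(t\norm{A})^2}{2}\e^{t\norm{A}}$ when $t\norm{A}\le 1$, giving a term of order $t\cdot t^2\norm{A}^2\cdot\e^{\cdots}$ — but this is actually $t^3\norm{A}^2$, stronger than needed; I should double-check whether instead only $k\ge 1$ is guaranteed, in which case $\theta(t\norm{A},1) = \e^{t\norm{A}}-1 \le t\norm{A}\e^{t\norm{A}}$ gives the term $t\cdot t\norm{A}\e^{\cdots}\norm{\Omega} = t^2\norm{A}\e^{T\norm{A}}\norm{\Omega}$, which matches the claimed bound. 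The main obstacle will be this bookkeeping: carefully tracking which minimum number of Taylor terms is actually in force (the definition \eqref{eq:Eta} of $\eta$ does not impose $k\ge 2$, unlike $\kappa$), bounding $\theta(t\norm{A},k)$ from above appropriately under the hypothesis $t\norm{A}\le 1$, and combining the two triangle-inequality pieces so that, after using $\e^{t\norm{A}}\le\e^{T\norm{A}}$ and absorbing constants, everything collapses into exactly $2((1-\epsilon)t + t^2\norm{A})\e^{T\norm{A}}\norm{\Omega}$. The factor of $2$ presumably comes from summing the deflation contribution and the truncation contribution, each bounded by one of the two summands. I would finish by invoking Lemma~\ref{lem:HausdorffDistance}(a),(b),(c) at each splitting step and monotonicity of $\theta$ in both arguments to keep all error terms in closed form.
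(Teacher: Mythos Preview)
Your decomposition handles two of the three pieces correctly but misses a genuinely nontrivial one. The set $\mathcal{T}(t,k)\Omega$ is a single linear image of $\Omega$; the target $\int_0^t \e^{sA}\Omega\,\mathrm{d}s$ is a set-valued integral. These differ even after you send $k\to\infty$: in general $\bigl(\int_0^t \e^{sA}\,\mathrm{d}s\bigr)\Omega \subsetneq \int_0^t \e^{sA}\Omega\,\mathrm{d}s$, and the Hausdorff gap between them is of order $t^2\norm{A}\norm{\Omega}$, \emph{independent of $k$}. Your ``second term'' estimate only controls $\mathfrak{d}\bigl(\mathcal{T}(t,k)\Omega,\ (\int_0^t \e^{sA}\,\mathrm{d}s)\Omega\bigr)$ via Lemma~\ref{lem:HausdorffDistance}(c), not $\mathfrak{d}\bigl(\mathcal{T}(t,k)\Omega,\ \int_0^t \e^{sA}\Omega\,\mathrm{d}s\bigr)$. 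The hand-wave that ``subadditivity'' takes care of the $G\mathbf{B}_p$ part does not work: the obstruction is that a time-varying selection $w(s)\in\Omega$ can reach points a constant selection cannot.

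The paper's proof therefore inserts a further intermediate set $I_2=\bigl(\int_0^t \e^{sA}\,\mathrm{d}s\bigr)\Omega$ and bounds $\mathfrak{d}(I_2,\int_0^t \e^{sA}\Omega\,\mathrm{d}s)$ separately: it rewrites $I_2=\int_0^t B\,\Omega\,\mathrm{d}s$ with $B=\tfrac{1}{t}\int_0^t \e^{rA}\,\mathrm{d}r$ (using the Aubin--Cellina averaging identity) and then applies an Ostrowski-type inequality to get $\norm{B-\e^{sA}}\le t\norm{A}\e^{t\norm{A}}$, yielding the missing $t^2\norm{A}\e^{t\norm{A}}\norm{\Omega}$. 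This is exactly where the second copy of $t^2\norm{A}$ (and hence the factor $2$ in front of it) comes from; the $2$ on the $(1-\epsilon)t$ term comes from $\norm{G\mathbf{B}_p}\le 2\norm{\Omega}$, not from summing two equal contributions as you guessed. Your deflation estimate and your truncation estimate $\int_0^t\norm{\e^{sA}-\mathcal{L}(s,k)}\,\mathrm{d}s\le t^2\norm{A}\e^{t\norm{A}}$ (using $\eta\ge 1$, $t\norm{A}\le 1$) are fine; you just need the third piece.
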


\begin{lemma}\label{lem:BoundingSandV}
Let $N\in \mathbb{N}$, $\tau=T/N$, and $\{S_{i}^{N}\}_{i=0}^{N}$ and $\{V_{i}^{N}\}_{i=0}^{N}$ be defined as in Equations \eqref{eq:S} and \eqref{eq:V}, respectively. Then,
$
\norm{S_{i}^{N}}\leq \exp(T\norm{A})\norm{X_{0}}$ and
$\norm{V_{i}^{N}}\leq \tau\exp(2T\norm{A})\norm{U}
$
for all $i\in \intcc{0;N}$.
\end{lemma}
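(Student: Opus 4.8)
The plan is to unfold the two recursions into closed form using Lemma \ref{lem:TheoreticalRecursiveRelation}, and then bound each resulting set in norm by pushing the matrix norm onto the generating data via the exponential estimate \eqref{eq:BoundonExpAndL}. For the homogeneous sequence, iterating \eqref{eq:S} gives $S_{i}^{N}=\e^{i\tau A}X_{0}$, which is precisely the identification $S_{i}^{N}=\mathcal{R}_{h}(i\tau)$ already recorded in Lemma \ref{lem:TheoreticalRecursiveRelation}. Using the elementary submultiplicativity of the set norm under a linear map, namely $\norm{MX}\leq \norm{M}\norm{X}$ for a matrix $M$ and a bounded nonempty set $X$, we get $\norm{S_{i}^{N}}=\norm{\e^{i\tau A}X_{0}}\leq \norm{\e^{i\tau A}}\norm{X_{0}}$, and \eqref{eq:BoundonExpAndL} yields $\norm{\e^{i\tau A}}\leq \e^{i\tau\norm{A}}$. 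Because $i\leq N$ and $\tau=T/N$, we have $i\tau\leq N\tau=T$, so $\e^{i\tau\norm{A}}\leq \e^{T\norm{A}}$, giving the first bound $\norm{S_{i}^{N}}\leq \e^{T\norm{A}}\norm{X_{0}}$.

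For the input sequence, iterating \eqref{eq:V} gives $V_{i}^{N}=\e^{i\tau A}V_{0}^{N}=\e^{i\tau A}\mathcal{R}_{u}(\tau)$, so the same submultiplicativity and exponential estimate give $\norm{V_{i}^{N}}\leq \e^{T\norm{A}}\norm{\mathcal{R}_{u}(\tau)}$. It then remains to bound the norm of the set-valued integral $\mathcal{R}_{u}(\tau)=\int_{0}^{\tau}\e^{sA}U\,\mathrm{d}s$. I would take an arbitrary $y\in \mathcal{R}_{u}(\tau)$, write it (by the definition of the set-valued integral as a union over measurable selections) as $y=\int_{0}^{\tau}\e^{sA}u(s)\,\mathrm{d}s$ for some measurable $u(\cdot)$ with values in $U$, and pass the norm inside the Bochner integral to get $\norm{y}\leq \int_{0}^{\tau}\norm{\e^{sA}}\norm{u(s)}\,\mathrm{d}s$. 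Using $\norm{u(s)}\leq \norm{U}$ and $\norm{\e^{sA}}\leq \e^{s\norm{A}}\leq \e^{\tau\norm{A}}\leq \e^{T\norm{A}}$ for $s\in \intcc{0,\tau}$ (again since $\tau\leq T$), the integrand is bounded by the constant $\e^{T\norm{A}}\norm{U}$, so $\norm{y}\leq \tau\e^{T\norm{A}}\norm{U}$. Taking the supremum over $y$ gives $\norm{\mathcal{R}_{u}(\tau)}\leq \tau\e^{T\norm{A}}\norm{U}$, and combining with the previous estimate yields the second bound $\norm{V_{i}^{N}}\leq \tau\e^{2T\norm{A}}\norm{U}$.

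The only step requiring care is the norm bound on the set-valued integral: one must invoke the definition of $\int_{0}^{\tau}\e^{sA}U\,\mathrm{d}s$ as a union over measurable selections and the triangle inequality for the Bochner integral ($\norm{\int f}\leq \int\norm{f}$, valid since integration is understood in the Bochner sense) in order to move the norm inside. Everything else is routine bookkeeping with the submultiplicative estimate $\norm{MX}\leq \norm{M}\norm{X}$, the exponential bound \eqref{eq:BoundonExpAndL}, and the elementary inequality $i\tau\leq N\tau=T$.
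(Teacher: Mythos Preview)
Your proof is correct and follows essentially the same approach as the paper: both unfold the recursions to $S_{i}^{N}=\e^{i\tau A}X_{0}$ and $V_{i}^{N}=\e^{i\tau A}\mathcal{R}_{u}(\tau)$, then apply the submultiplicative bound and \eqref{eq:BoundonExpAndL} together with $i\tau\leq T$. The only difference is that you spell out the selection/Bochner argument for $\norm{\mathcal{R}_{u}(\tau)}\leq \tau\e^{T\norm{A}}\norm{U}$, while the paper simply writes $\norm{V_{i}^{N}}\leq \e^{i\tau\norm{A}}\int_{0}^{\tau}\e^{s\norm{A}}\mathrm{d}s\,\norm{U}$ without further comment.
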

}
Now, we are ready to prove Theorem \ref{thm:Convergence}.

\begin{proof}[Proof of Theorem \ref{thm:Convergence}]
Recall the definitions of $\{S_{i}^{N}\}_{i=0}^{N}$, $\{V_{i}^{N}\}_{i=0}^{N}$, $\{W_{i}^{N}\}_{i=0}^{N}$ and $\{\Gamma_{i}^{N}\}_{i=0}^{N}$ in Lemma \ref{lem:TheoreticalRecursiveRelation}. Note that according to Lemma \ref{lem:TheoreticalRecursiveRelation}, $S_{i}^{N}=\mathcal{R}_{h}(i\tau),~W_{i}^{N}=\mathcal{R}_{u}(i\tau)$ and $\Gamma_{i}^{N}=\mathcal{R}(i\tau)$ for all $i\in \intcc{0;N}$. Assume without loss of generality that $A\neq 0$ (the case when $A=0$ is trivial).  
Let $p_{i}=\mathfrak{d}(S_{i}^{N},\mathcal{S}_{i}^{N}),~i\in \intcc{0;N}$. We have $p_{0}=0$ as $S_{0}^{N}=\mathcal{S}_{0}^{N}=X_{0}$. For $i\in \intcc{1;N}$, we have, using the definitions of $S_{i}^{N}$ and $\mathcal{S}_{i}^{N}$ in Equations \eqref{eq:S} and  \eqref{eq:MathcalS}, respectively, the  triangle inequality, and Lemma \ref{lem:HausdorffDistance}(b),
\begin{align*}
p_{i}&\leq \mathfrak{d}(\e^{\tau A}S_{i-1}^{N},\e^{\tau A}\mathcal{S}_{i-1}^{N})+\mathfrak{d}(\e^{\tau A}\mathcal{S}_{i-1}^{N},\mathcal{H}(\tau,\mathcal{S}_{i-1}^{N},\epsilon_{h}))\\
&\leq \e^{\tau \norm{A}}p_{i-1}+\mathfrak{d}(\e^{\tau A}\mathcal{S}_{i-1}^{N},\mathcal{H}(\tau,\mathcal{S}_{i-1}^{N},\epsilon_{h})).
\end{align*}
Using Lemma \ref{lem:ErrorHomogeneous} and the fact that $\epsilon_{h}=1-(\tau/T)^2$, the term $\mathfrak{d}(\exp(\tau A)\mathcal{S}_{i-1}^{N},\mathcal{H}(\tau,\mathcal{S}_{i-1}^{N},\epsilon_{h}))$ is bounded  above by
$C_{1} \norm{\mathcal{S}_{i-1}^{N}} \tau^{2},
$
where
$
C_{1}=({2}/{T^{2}}+\norm{A}^{2})\exp(T \norm{A}).
$
Moreover, using the fact that $\mathcal{S}_{i-1}^{N}\subseteq S_{i-1}^{N},~i\in \intcc{1;N}$, as shown in  Theorem \ref{thm:ProposedMethod}, and  Lemma \ref{lem:BoundingSandV}, we have 
$
\norm{\mathcal{S}_{i-1}^{N}}\leq \norm{{S}_{i-1}^{N}}\leq M=\exp(T\norm{A})\norm{X_{0}},~i\in \intcc{1;N+1}.
$
Therefore,
$
p_{i}\leq \exp(\tau \norm{A})p_{i-1}+C_{1}M\tau^{2},~i\in \intcc{1;N}.
$
Using induction, we have, for all $i\in \intcc{0;N}$,
\begin{equation}\label{eq:p}
p_{i}\leq \frac{\e^{i\tau \norm{A}}-1}{\e^{\tau \norm{A}}-1}C_{1}M\tau^{2}\leq \frac{\e^{T \norm{A}}-1}{\tau \norm{A}}C_{1} M\tau^{2}=D_{1}\tau,
\end{equation}
where
$
D_{1}= (\exp(T \norm{A})-1)C_{1}M/\norm{A}.
$
Similarly, let $q_{i}=\mathfrak{d}(V_{i}^{N},\mathcal{V}_{i}^{N}),~i\in \intcc{0;N}$. We have, using Lemma \ref{lem:ErrorInput} and the fact that ${\epsilon}_{u}=1-\tau/T$, 
$
q_{0}=\mathfrak{d}(\int_{0}^{\tau}\exp(sA)U\mathrm{d}s, \mathcal{I}(\tau,U,{\epsilon}_{u}))\leq C_{2}\tau^2, 
$
where
$
C_{2}=2(1/T+\norm{A})\exp(T\norm{A})\norm{U}.
$
The remaining terms,  $q_{i},~i\in \intcc{1;N}$, can be bounded using the triangular inequality and Lemma \ref{lem:HausdorffDistance}(b), where we deduce the recursive inequality 
$
q_{i}\leq \e^{\tau \norm{A}}q_{i-1}+\mathfrak{d}(\exp(\tau A)\mathcal{V}_{i-1}^{N},\mathcal{H}(\tau,\mathcal{V}_{i-1}^{N},\epsilon_{h})),~i\in \intcc{1;N}.
$
Now, for the term $\mathfrak{d}(\exp(\tau A)\mathcal{V}_{i-1}^{N},\mathcal{H}(\tau,\mathcal{V}_{i-1}^{N},\epsilon_{h}))$, we use Lemma \ref{lem:ErrorHomogeneous}, which results in the inequality  
$
\mathfrak{d}(\exp(\tau A)\mathcal{V}_{i-1}^{N},\mathcal{H}(\tau,\mathcal{V}_{i-1}^{N},\epsilon_{h})) \leq C_{1} \norm{\mathcal{V}_{i-1}^{N}} \tau^{2}.
$
 Using $\mathcal{V}_{i-1}^{N}\subseteq V_{i-1}^{N},~i\in \intcc{1;N}$ from Theorem \ref{thm:ProposedMethod}, and  Lemma \ref{lem:BoundingSandV},  the sequence $\{\norm{\mathcal{V}_{i-1}^{N}}\}_{i=1}^{N+1}$ is bounded above by $\tau \tilde{M}$, where 
$
\tilde{M}= \exp(2T\norm{A})\norm{U}.
$
Hence,
$
q_{i}\leq \exp(\tau \norm{A})q_{i-1}+C_{1} \tilde{M} \tau^{3},~i\in\intcc{1;N},
$
and by induction, we obtain, for all $i\in \intcc{0;N}$,
\begin{equation}\label{eq:q}
\begin{split}
q_{i}&\leq \e^{i\tau \norm{A}}C_{2}\tau^2+ \frac{\e^{i\tau \norm{A}}-1}{\e^{\tau \norm{A}}-1}C_{1}\tilde{M}\tau^{3}\\
& \leq \e^{i\tau \norm{A}}C_{2}\tau^2+ \frac{\e^{i\tau \norm{A}}}{\tau \norm{A}}C_{1}\tilde{M}\tau^{3}=\e^{i\tau \norm{A}}C_{3}\tau^{2},
\end{split}
\end{equation}
where
$
C_{3}=C_{2}+{C_{1}\tilde{M}}/{\norm{A}}.
$
Next, define $r_{i}=\mathfrak{d}(W_{i}^{N},\mathcal{W}_{i}^{N}),~i \in \intcc{0;N}$. Then, $r_{0}=0$ as $W_{0}^{N}=\mathcal{W}_{0}^{N}=\{0\}$ and, for $i\in \intcc{1;N}$, 
$
r_{i}\leq r_{i-1}+q_{i-1},
$
where we have utilized Lemma \ref{lem:HausdorffDistance}(a). 
Using induction,  the sequence $\{r_{i}\}_{i=0}^{N}$  is bounded above as follows:
$
r_{i}\leq \sum_{j=0}^{i-1}q_{j},~i\in \intcc{0;N},
$
where $\sum_{j=0}^{-1}(\cdot)=0$. 
Hence, using  estimate \eqref{eq:q},
\begin{equation}\label{eq:r}
\begin{split}
 r_{i}&\leq \sum_{j=0}^{i-1}\e^{j\tau \norm{A}}C_{3}\tau^2
 \leq \frac{\e^{i\tau \norm{A}}-1}{\e^{\tau \norm{A}}-1} C_{3}\tau^{2}\\
 &\leq  \frac{\e^{T \norm{A}}-1}{\tau \norm{A}} C_{3}\tau^{2}= D_{2}\tau,
 \end{split}
 \end{equation}
  where 
  $
 D_{2}= (\exp(T \norm{A})-1)C_{3}/\norm{A}.
  $
Let $s_{i}=\mathfrak{d}(\Gamma_{i}^{N},\Lambda_{i}^{N}),~i\in \intcc{0;N}$. Using Lemma \ref{lem:HausdorffDistance}(a), we have, for all $i\in \intcc{0;N}$, 
$
s_{i}\leq p_{i}+r_{i}.
$
By incorporating the bounds \eqref{eq:p}  and \eqref{eq:r}, we get, for $i\in \intcc{0;N}$, 
\begin{equation}\label{eq:s}
s_{i}\leq  D_{1}\tau+D_{2}\tau  = D_{3} \tau,
\end{equation}
where $D_{3}={D_{1}} +D_{2}$.

Now, we prove the last claim of the theorem. Note that, using the definition of reachable sets given in Equation \eqref{eq:R(t)}, 
\begin{equation}\label{eq:K}
\norm{\mathcal{R}(t)}\leq K\defas \e^{T\norm{A}}\left(\norm{X_{0}}+T\norm{U} \right),~t\in \intcc{0,T}.
\end{equation}
Define $\tilde{\mathcal{R}}_{N}(t)=\Lambda_{\tilde{i}(t)}^{N}$,  where $\tilde{i}(t)=\lfloor t/\tau \rfloor$ (floor of $t/\tau $). Note that $\bigcup_{i=0}^{N} \Lambda_{i}^{N}=\bigcup_{t\in \intcc{0,T}}\tilde{\mathcal{R}}_{N}(t)$ and that, using Theorem \ref{thm:ProposedMethod}, $\tilde{\mathcal{R}}_{N}(t)\subseteq \mathcal{R}(t),~t \in \intcc{0,T}.$  Using Lemma \ref{lem:HausdorffDistance}(d), the Hausdorff distance between $\bigcup_{i=0}^{N} \Lambda_{i}^{N}$ and $\mathcal{R}(\intcc{0,T})$ satisfies the inequality 
\begin{equation}\label{eq:HDReachTube}
\mathfrak{d}(\bigcup_{i=0}^{N} \Lambda_{i}^{N}, \mathcal{R}(\intcc{0,T}))\leq \sup_{t\in \intcc{0,T}} \mathfrak{d}(\tilde{\mathcal{R}}_{N}(t),\mathcal{R}(t)).
\end{equation}
Let  $t\in \intcc{0,T}$ and set $i=\lfloor t / \tau\rfloor $. Then,  $\tilde{\mathcal{R}}_{N}(t)=\Lambda_{i}^{N}$. Using the triangular inequality,
$$
\mathfrak{d}(\mathcal{R}(t),\tilde{\mathcal{R}}_{N}(t))
\leq \mathfrak{d}(\mathcal{R}(t),\mathcal{R}(i\tau))+\mathfrak{d}(\mathcal{R}(i\tau),\Lambda_{i}^{N}).
$$ 
Let us estimate $\mathfrak{d}(\mathcal{R}(t),\mathcal{R}(i\tau))$. Note that $0\leq t-i\tau \leq \tau\leq T$.  Using  Lemma \ref{lem:SemiGroup}, $\mathcal{R}(t)$ can be written as
$
\mathcal{R}(t)= \exp((t-i\tau)A)\mathcal{R}(i\tau)+ \int_{0}^{(t-i\tau)}\exp(sA)U\mathrm{d}s.
$
Hence, using Lemma \ref{lem:HausdorffDistance}(a),(c) and estimates \eqref{eq:BoundonExpAndL} and \eqref{eq:K} (below, $\Delta$ denotes $t-i\tau$),
\begin{align*}
\mathfrak{d}(\mathcal{R}(t),\mathcal{R}(i\tau))
&\leq \mathfrak{d}(\e^{\Delta A}\mathcal{R}(i\tau),\mathcal{R}(i\tau))+\mathfrak{d}(\int_{0}^{\Delta}\e^{sA}U\mathrm{d}s,0)\\
&\leq
\norm{\e^{\Delta A}-\id}\norm{\mathcal{R}(i\tau)}
+
\norm{U}\int_{0}^{\Delta}\e^{s \norm{A}}\mathrm{d}s\\
&\leq 
K\Delta\norm{A}\e^{\Delta\norm{A}}+\norm{U} \Delta\e^{T \norm{A}}\leq C_{4}\tau,
\end{align*}
where $C_{4}=(K\norm{A}+\norm{U})\exp(T \norm{A})$. Moreover, using \eqref{eq:p}, we have, $\mathfrak{d}(\mathcal{R}(i\tau),\Lambda_{i}^{N})\leq C_{4}\tau$.  Therefore,  
$
\mathfrak{d}(\mathcal{R}(t),\tilde{\mathcal{R}}_{N}(t))\leq D_{4} \tau,
$
where $D_{4}=C_{4}+D_{3}$. As the choice of  $t\in \intcc{0,T}$ is arbitrary and in view of \eqref{eq:HDReachTube}, the proof is complete.
\end{proof}

\section{Numerical Examples}
\label{sec:NumericalExample}
In this section, we illustrate  the proposed method through three numerical examples. The proposed method is implemented using zonotopes in MATLAB (2019a) and run on an AMD Ryzen 5 2500U/2GHz processor. { Plots of zonotopes, scaling-based under-approximations \cite{KochdumperAlthoff20}, and reduced-order zonotopic under-approximations are produced using the software CORA (2021 version) \cite{Althoff15}}. For all the considered linear systems  and  values of the time discretization parameter $N$,  $\int_{0}^{T/N}\exp(sA)\mathrm{d}s$ is invertible. The optimization problems associated with evaluating the parameters $k_{\min}$, $\kappa$, and $\eta$, given in Equations \eqref{eq:Kmin}, \eqref{eq:Kappa}, and \eqref{eq:Eta}, respectively, are solved via brute force. The invertibility condition used in the definition of $\eta$ in \eqref{eq:Eta} is checked using the $\mathtt{rank}$ function in MATLAB.
\subsection{2-D system with a closed-form reachable set}
\label{sec:2D}
Consider an instance of system \eqref{eq:LinearSystem} (perturbed double integrator system), with
$
A=\big(\begin{smallmatrix}
  0 & 0\\
  1 & 0
\end{smallmatrix}\big),~
 U=\intcc{0,1}\times \intcc{0,1}, ~X_{0}=\{(0,0)^{\intercal}\},
$
and 
 $T=1$.
Note that $\mathcal{R}(T)=\mathcal{R}_{u}(T)=\int_{0}^{T}\exp(sA) U_{1}\mathrm{d}s+\int_{0}^{T}U_{2}\mathrm{d}s, 
$
where $U_{1}=\intcc{0,1}\times \{0\}$, and $U_{2}=\{0\}\times \intcc{0,1}$. The set $\int_{0}^{T}\exp(sA) U_{1}\mathrm{d}s$ is given explicitly as  $\int_{0}^{T}\exp(sA) U_{1}\mathrm{d}s=\{(x,y)^{\intercal}\in \mathbb{R}^{2},~x^2/2\leq y \leq x-x^2/2,~x\in \intcc{0,1}\}$ (see \cite[the formula of $M_2$, p.~363]{Ferretti97}), whereas, using \cite[Theorem~3,~p.~21]{AubinCellina84}, $\int_{0}^{T}U_{2}\mathrm{d}s=\{0\}\times\intcc{0,1}$. Hence, 
$
\mathcal{R}(T)=\Set{(x,y)^{\intercal}\in \mathbb{R}^{2}}{x^2/2\leq y \leq x-x^2/2+1,~x\in \intcc{0,1}}$. In view of Remark \ref{rem:Assumptions}, we aim to compute under-approximations of $\mathcal{R}(T)$ using the proposed method. Herein, we consider different values of the discretization parameters $N$  and set $\epsilon_{h}=1-1/N^{2}$ and $\epsilon_{u}=1-1/N$.  
\begin{figure}
    \centering
    \includegraphics[width=0.9\columnwidth]{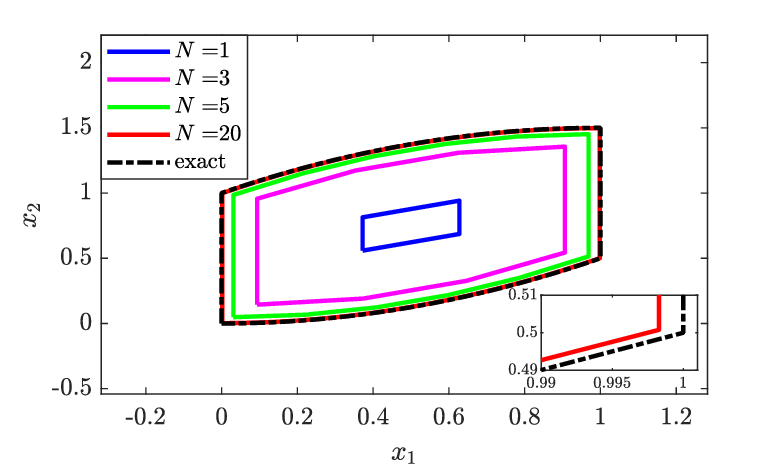}
    \caption{Under approximations of the reachable set of the double integrator system in Section \ref{sec:2D} with different values of the discretization parameter $N$. }
    \label{fig:2D}
\end{figure}

Figure \ref{fig:2D} displays several under-approximations of $\mathcal{R}(T)$ with $N\in \{1,3,5,20\}$. The mentioned figure shows how the obtained approximations are indeed enclosed by the exact reachable set, in agreement with Theorem \ref{thm:ProposedMethod}. Moreover, the mentioned Figure exhibits how the approximation accuracy of the proposed method  increases as $N$ increases, which further supports the convergence result in Theorem \ref{thm:Convergence}. { The computational time associated with evaluating the under-approximation with $N=20$ is less than 0.003 seconds.}
\subsection{5-D system}
\label{sec:5D}
{ Herein, we adopt a five dimensional instance of system \eqref{eq:LinearSystem} from the literature, where matrix $A$ and the sets $X_{0}$ and $U$ are given  in \cite[Equation~3.11,~p.~39]{Althoff10}, and set  $T=1$. We aim to under-approximate the reachable tube $\mathcal{R}(\intcc{0,T})$ using the proposed method with $N\in \{10,100\}$, where we set $\epsilon_{h}=1-1/N^{2}$ and $\epsilon_{u}=1-1/N$, and compute the sets $\Lambda_{i}^{N},~i\in \intcc{1;N}$ ($\cup_{i=0}^{N}\Lambda_{i}^{N}$ is the desired under-approximation  herein)}. As the exact reachable tube is not known, we use the  convergent over-approximation method of  Serry and Reissig \cite{SerryReissig21}, with a refined time discretization (200 steps) and an accurate approximation of the matrix exponential, $\mathcal{L}(\cdot,10)$, to produce an accurate representation of the reachable tube and use it as the basis of comparison. { Furthermore, we additionally compute, for the case $N=10$, a modified version of the  under-approximation from the proposed method, where each set $\Lambda_{i}^{N}$ is under-approximated using order reduction based on summing generators  (method \texttt{sum} in CORA). The zonotopes  resulting from order reduction   are at most of order 2.} 

\begin{figure}
 \centering
 \includegraphics[width=\columnwidth]{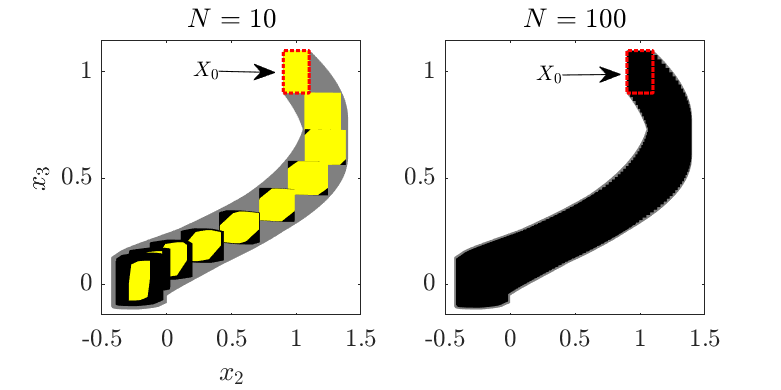}
  \includegraphics[width=\columnwidth]{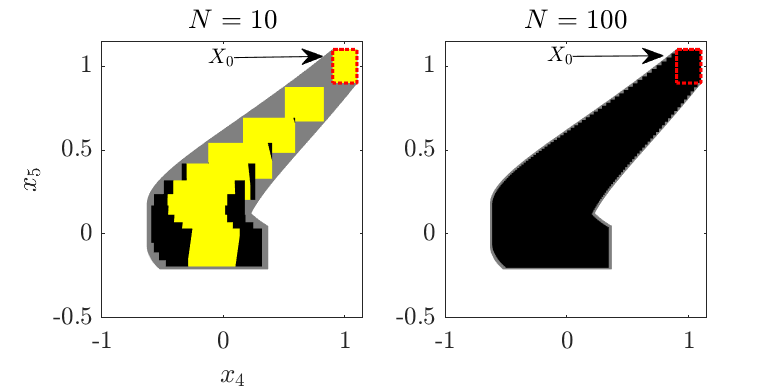}
    \caption{{ $x_2-x_3$ (top) and $x_4-x_5$ (bottom) projections of the over-approximation  (grey) of the reachable tube $\mathcal{R}(\intcc{0,T})$ of the 5-D system  in Section \ref{sec:5D} using the method in \cite{SerryReissig21}, the under-approximations from the proposed method given by the sets $\Lambda_{i}^{N},~i\in \intcc{0;N}$ (black),  with $N=10$ (left) and $N=100$ (right),  and reduced-order under-approximations (yellow) of the sets $\Lambda_{i}^{N},~i\in \intcc{0;N}$ in the case $N=10$ .}} 
    \label{fig:5D}
\end{figure}
Figure \ref{fig:5D} displays two projections of the over-approximation and the under-approximations. As seen in the mentioned figure, the over-approximation (grey area) encloses the under-approximations (black areas) from the proposed method.  { Moreover, the reduced-order zonotopes (yellow areas) are enclosed by  the sets $\Lambda_{i}^{N},~i\in \intcc{0;N}$ from the proposed method (without the order reduction). We observe that   for the case $N=10$, the sets $\Lambda_{i}^{N},~i\in \intcc{0;N}$  and their corresponding reduced-order  under-approximations are almost over-lapping for the first few iterations  ($i$ is small); however, the accuracy of the reduced-order under-approximations decays with each iteration due to their constrained order (2 in this case). This highlights the trade-off between accuracy and memory reduction when it comes to incorporating order reduction techniques in computing under-approximations.}    We easily observe that the under-approximation when $N=100$ resembles the over-approximation more accurately, relative to the case when $N=10$. 
This implies that  the under-approximation with $N=100$ also resembles the actual reachable tube more accurately, and this is a consequence of the convergence guarantees  of  Theorem \ref{thm:Convergence}.  { The computational times associated with evaluating the under-approximations from the proposed method (without order reduction) with $N=10$ and $N=100$ are  0.0199 and 0.0279 seconds, respectively. }  
\subsection{Randomly generated systems}
\label{sec:RandomlyGenerated}
{ In this section, we study the performance of the proposed method on randomly generated linear systems, where the matrix exponentials associated with the generated systems are not known exactly}.
{
\subsubsection{Homogeneous linear systems}
\label{sec:Comparison}
To compare the performance of the proposed method with the scaling method \cite{KochdumperAlthoff20}, we consider homogeneous linear systems ($\dot{x}=Ax$), with randomly generated instances of matrix $A$,  using the MATLAB command \texttt{rand}, where $n\in \{2,4,6,8,10\}$,  $X_{0}=\mathbf{B}_{n}^{\infty}$, and $T=1$. For each instance of $A$, we compute under-approximations of $\mathcal{R}(T)=\mathcal{R}_{h}(T)$ using both the proposed method (see Remark \ref{rem:Assumptions}) and the scaling method. The scaling method is implemented in the 2021 version of CORA, with settings tuned and approved by the first author of \cite{KochdumperAlthoff20}. 
For our proposed method, we use $N=100$, and set $\epsilon_{h}=1-1/N^{2}$. The  computational times and the volumes of the under-approximations from the two methods  are listed in Table \ref{tab:ComparisonCora}. 
\begin{table}
    \scriptsize
    \centering
        \caption{Computational times and  volumes of  under-approximations  from the proposed  (with subscript ${p}$) and scaling  \cite{KochdumperAlthoff20} (with subscript ${s}$) methods  for randomly generated homogeneous linear systems.}
    \label{tab:ComparisonCora}
    \begin{tabular}{|c|c|c|c|c|c|}
     \hline 
        $n$ &2  & 4 & 6  & 8 & 10\\
             \hline 
 $t_{\mathrm{c},p}$ [s]& 
    0.0150 & 0.0156 & 0.0138 & 0.0120 & 0.0179\\
         \hline 

    $t_{\mathrm{c},s}$ [s]& 
   1.5294 & 5.3289 & 16.7495 & 43.5934 & 96.3664\\
         \hline 
$\mathrm{vol}_{p}$&
   13.7889 & 94.3212 & 748.5102 & 5.5201e+03 & 1.0776e+05\\
    \hline
$\mathrm{vol}_{s}$&  
    13.5197 & 91.1578 & 715.6002 & 5.1814e+03 & 9.9622e+04 \\
    \hline
    \end{tabular}
\end{table} 
The table  exhibits that the proposed method performs marginally better than the scaling method in terms of accuracy, with slightly larger volumes for the computed under-approximations. Most importantly, Table \ref{tab:ComparisonCora} displays how the proposed method outperforms the scaling method in terms of computational time while having better/comparable accuracy. This may be due to the fact that the proposed method obtains under-approximations by scaling intermediate sets, using simple optimization problems (Equations \eqref{eq:Kappa} and \eqref{eq:Eta}), without the need to solve optimization problems that utilize enclosures of  boundaries of reachable sets as in the scaling method.

\subsubsection{Linear systems with input}
Herein, we study  empirically the time efficiency  associated with obtaining
the under-approximations $\Lambda_{i}^{N},~ i\in \intcc{1;N},$ from the proposed
method. We
 consider instances of system \eqref{eq:LinearSystem}, with $n$ ranging between
10 and 200, where  $X_{0} =U= \mathbf{B}_{n}^{\infty}$
, and
$T = 1$. For each instance of $n$, we randomly generate  a corresponding matrix $A$ using the MATLAB command \texttt{rand}. The
eigenvalues of each generated $A$ are checked not to be purely imaginary
to ensure the invertibility of
$\int_{0}^{T/N}
 \exp(sA)\mathrm{d}s$ (see Lemma
\ref{lem:IntegralInvertible}). For every $n$, we estimate the  computational time associated
with implementing the proposed method, 
where we set $N=100$ and $\epsilon_{h}=\epsilon_{u}=0.8$.  

Through our empirical exploration of the method performance, we noticed that for the randomly generated systems and set time interval,   the  reachable sets can be ``narrow'' especially when the dimension is high. Subsequently, the  under-approximations obtained from the proposed method are nearly degenerate (ill-conditioned). This  negatively influences the computations of the deflation parameter as the term $\norm{G}\norm{G^{\dagger}}$ (condition number) in Equation \eqref{eq:lambda} becomes  significantly large.   Consequently, the order of the approximation $\mathcal{L}$, which is determined using the definition of $\kappa$ in Equation \eqref{eq:Kappa},  may have to be substantially large in order for the deflation parameter values to be larger than the specified design parameter $\varepsilon_{h}$. To avoid these degenerate cases, we restrict our investigation herein to normalized system matrices, where each  generated  matrix $A$ is divided by its maximum norm.   We note that  normalized system matrices were considered in previous studies that investigated the performance of  over-approximation methods (see, e.g., \cite{Girard05}).  
\begin{figure}
     \centering
     \includegraphics[width=0.9\columnwidth]{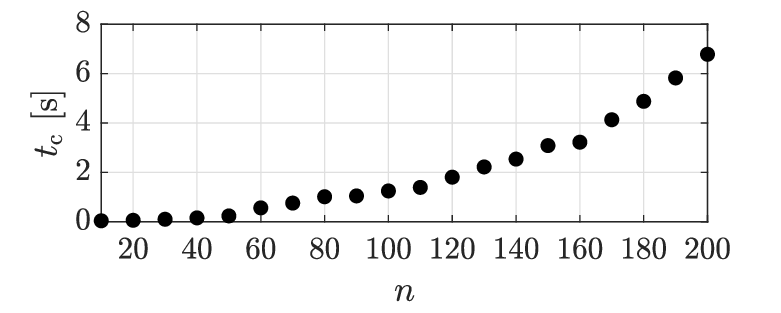}
     \caption{{ Computational time associated with evaluating $\{\Lambda_{i}\}_{i=1}^{N}$  as a  function of $n$ for (normalized) randomly generated instances of system \eqref{eq:LinearSystem}.}}
    \label{fig:tcpu}
 \end{figure}

Figure \ref{fig:tcpu} plots the recorded  computational time, associated with the computations from the proposed method, as a function of the dimension $n$.
{ The mentioned figure exhibits the efficiency of the computations, for the considered randomly generated systems, as the computational time is less than 0.25 seconds for $n\leq 50$, less than 1.25 seconds   for $50 \leq n\leq 100$, and  less than 7 seconds when $n = 200$. This numerical example highlights the potential role of the proposed method in real-time computations in different applications, such as falsification and control synthesis as highlighted in Remark \ref{rem:Applications}, due to the relatively fast computations especially for moderate dimensions ($n\leq50$).}

}

\section{Conclusion}
\label{sec:Conclusion}
In this paper, we proposed a novel convergent method to under-approximate finite-time forward reachable sets and tubes of a class of continuous-time linear uncertain systems, where approximations of the matrix exponential and its integral are utilized. In future work, we aim to explore  extensions and modifications of the proposed  method to cover wider classes of systems, reduce computational cost, and increase accuracy. Furthermore, we seek to address how to obtain under-approximations   in cases when the reachable sets are ``almost'' degenerate.

\section*{Aknowledgement}
The authors thank Niklas Kochdumper (Stony Brook University, USA) for his useful guidance and for tuning the settings used for the scaling method implementation in CORA.

\begin{appendix}
We will need the following lemma in the proofs of Lemmas \ref{lem:ErrorHomogeneous} and \ref{lem:ErrorInput}.
\begin{lemma}\label{lem:BoundingGwithOmega}
Given $\Omega=c+G\mathbf{B}_{p}\subseteq \mathbb{R}^{n}$, where $c\in \mathbb{R}^{n}$ and $G\in \mathbb{R}^{n\times p}$, we have
$\norm{c}\leq \norm{\Omega}$ and $\norm{G}=\norm{G\mathbf{B}_{p}}\leq 2\norm{\Omega}$.
\end{lemma}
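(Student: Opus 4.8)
The plan is to unpack the definition $\|\Omega\| = \sup_{y\in\Omega}\|y\| = \sup_{x\in\mathbf{B}_p}\|c + Gx\|$ and extract the two bounds by choosing convenient elements of $\Omega$. First I would prove $\|c\|\le\|\Omega\|$: since $0\in\mathbf{B}_p$, the point $c = c + G\cdot 0$ lies in $\Omega$, so $\|c\|\le\sup_{y\in\Omega}\|y\| = \|\Omega\|$. This is immediate and uses only that $\mathbf{B}_p$ contains the origin.

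Next I would handle the identity $\|G\| = \|G\mathbf{B}_p\|$. This is purely a restatement of the definition of the induced matrix norm together with the definition of the norm of a set: $\|G\| = \sup_{\|x\|\le 1}\|Gx\| = \sup_{x\in\mathbf{B}_p}\|Gx\| = \|\{Gx : x\in\mathbf{B}_p\}\| = \|G\mathbf{B}_p\|$. No inequality is involved here.

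Finally, for the estimate $\|G\mathbf{B}_p\|\le 2\|\Omega\|$, I would fix an arbitrary $x\in\mathbf{B}_p$ and write $Gx = (c + Gx) - c$, so that by the triangle inequality $\|Gx\|\le\|c+Gx\| + \|c\|$. The first term is bounded by $\|\Omega\|$ because $c+Gx\in\Omega$, and the second term is bounded by $\|\Omega\|$ by the first part already established; hence $\|Gx\|\le 2\|\Omega\|$, and taking the supremum over $x\in\mathbf{B}_p$ gives $\|G\mathbf{B}_p\|\le 2\|\Omega\|$. (One could even sharpen this to $\|G\mathbf{B}_p\|\le\|\Omega\|$ by exploiting the symmetry $-x\in\mathbf{B}_p$ and writing $Gx = \tfrac12\big((c+Gx) - (c-Gx)\big)$, but the factor $2$ stated here is all that is needed downstream.)

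There is no real obstacle in this lemma; it is entirely elementary. The only point requiring a moment's care is the order of the argument — the bound on $\|G\mathbf{B}_p\|$ should be derived \emph{after} the bound on $\|c\|$, since it reuses it — and the observation that $0\in\mathbf{B}_p$, which is what makes $c$ itself a point of $\Omega$.
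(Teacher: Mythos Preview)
Your proposal is correct and matches the paper's proof essentially line for line: the paper also notes $c\in\Omega$ for the first bound, and for the second uses the triangle inequality $\norm{Gb}\le\norm{Gb+c}+\norm{c}$, takes the supremum, and invokes the first bound to get $\norm{\Omega}+\norm{c}\le 2\norm{\Omega}$.
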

\begin{proof}
The first inequality follows from the fact that $c\in \Omega$. The second inequality is deduced as follows:
$
\norm{G\mathbf{B}_{p}}=\sup_{b\in \mathbf{B}_{p}}\norm{Gb}\leq \sup_{b\in \mathbf{B}_{p}}\norm{Gb+c}+\norm{c}=\norm{\Omega}+\norm{c}\leq 2\norm{\Omega}.
$
\end{proof}

\begin{proof}[Proof of Lemma \ref{lem:ErrorHomogeneous}]
For convenience, define $H_{1}=\mathcal{H}(t,\Omega,\epsilon), H_{2}=\mathcal{L}(t,\kappa(t,\Omega,\epsilon))\Omega$, and $H_{3}=\exp(t A)\Omega$. 
Using the definition of $\mathcal{H}$ in Equation \eqref{eq:H}, the triangle inequality,
\begin{equation}\label{eq:dH1H3TriangleInequality}
\mathfrak{d}(H_{1},H_{3}) \leq \mathfrak{d}(H_{1},H_{2}) +\mathfrak{d}(H_{2},H_{3}).
\end{equation}
Using Lemma \ref{lem:HausdorffDistance}(a),(c), we have
\begin{align*}
\mathfrak{d}(H_{1},H_{2}) \leq
 \abs{1-\lambda(t,\Omega, \kappa(t,\Omega,\epsilon_{h}))}\norm{\mathcal{L}(t,\kappa(t,\Omega,\epsilon))}\norm{G\mathbf{B}_{p}}.
\end{align*}
Note that, by the definition of $\kappa$ given in Equation \eqref{eq:Kappa}, $\epsilon<\lambda(t,\Omega, \kappa(t,\Omega,\epsilon))< 1$, which  indicates that $\abs{1-\lambda(t,\Omega, \kappa(t,\Omega,\epsilon))} \leq 1-\epsilon.
$
Furthermore, using \eqref{eq:BoundonExpAndL}, $\norm{\mathcal{L}(t,\kappa(t,\Omega,\epsilon_{h}))}\leq \exp(t\norm{A})$.  Moreover, we have, using Lemma \ref{lem:BoundingGwithOmega}, $\norm{G\mathbf{B}_{p}}\leq 2 \norm{\Omega}$. Hence,
\begin{equation}\label{eq:dH1H2}
\mathfrak{d}(H_{1},H_{2})\leq 2(1-\epsilon)\e^{t\norm{A}}\norm{\Omega}.
\end{equation}
Next, we estimate $\mathfrak{d}(H_{2}, H_{3})$, which, using Lemma \ref{lem:HausdorffDistance}(c), satisfies the inequality
$$
\mathfrak{d}(H_{2},H_{3})\leq \norm{\e^{t A}-\mathcal{L}(t,\kappa(t,\Omega,\epsilon))}\norm{\Omega}.
$$
As $\mathcal{L}(t,\kappa(t,\Omega,\epsilon))$ is a Taylor approximation of $\exp(t A)$ of, at least,  first order  ($\kappa(t,\Omega,\epsilon)\geq 2$), and that $t\norm{A}\leq 1$, we have, using the bound \eqref{eq:BoundonApproxError}, where the arguments of $\kappa$ are dropped for convenience,
\begin{align*}
\norm{\e^{t A}-\mathcal{L}(t,\kappa)}& \leq {(t\norm{A})^\kappa}\frac{\e^{t\norm{A}}}{\kappa!}\\
&\leq {(t\norm{A})^2}\frac{\e^{t\norm{A}}}{2!}\\
&\leq (t\norm{A})^2\e^{t\norm{A}}.
\end{align*}
 Consequently, 
\begin{equation}\label{eq:dH2H3}
 \mathfrak{d}(H_{2},H_{3})\leq (t\norm{A})^2\e^{t\norm{A}}\norm{\Omega}. 
\end{equation}
Combining estimates \eqref{eq:dH1H3TriangleInequality}, \eqref{eq:dH1H2}, and \eqref{eq:dH2H3} yields 
\begin{align*}
\mathfrak{d}(\mathcal{H}(t,\Omega,\epsilon),\e^{t A}\Omega) 
\leq& (2(1-\epsilon) +(t\norm{A})^2)\e^{t\norm{A}}\norm{\Omega}\\
\leq & (2(1-\epsilon) +(t\norm{A})^2)\e^{T\norm{A}}\norm{\Omega}.
\end{align*}
\end{proof}

\begin{proof}[Proof of Lemma \ref{lem:ErrorInput}]
For convenience, define $I_{1}=\mathcal{I}(t,\Omega,{\epsilon}),~I_{2}=\int_{0}^{t}\exp(sA)\mathrm{d}s\Omega$, and $I_{3}=\int_{0}^{t}\exp(sA)\Omega\mathrm{d}s$. Using the triangular inequality, 
\begin{equation}\label{eq:dI1I3TriangleInequality}
\mathfrak{d}(I_{1},I_{3}) \leq \mathfrak{d}\left(I_{1},I_{2}\right)+\mathfrak{d}\left(I_{2}, I_{3}\right).
\end{equation}
The term
$\mathfrak{d}\left(I_{1}, I_{2}\right)$ can be bounded above using the triangular inequality, the definition of $\mathcal{I}$ given in Equation \eqref{eq:I}, and Lemma \ref{lem:HausdorffDistance}(a),(c), as follows:
\begin{align*}
\mathfrak{d}(I_{1}, I_{2}) \leq& \mathfrak{d}(I_{1},\mathcal{T}(t,\eta(t,\Omega,{\epsilon}))\Omega)+\mathfrak{d}(\mathcal{T}(t,\eta(t,\Omega,{\epsilon}))\Omega, I_{2})\\
 \leq&  \abs{1-\lambda(t,\Omega, \eta(t,\Omega,{\epsilon}))}\norm{\mathcal{T}(t,\eta(t,\Omega,{\epsilon}))}\norm{G\mathbf{B}_{p}}\\
 &+ \norm{\int_{0}^{t}\e^{sA}\mathrm{d}s-\mathcal{T}(t,\eta(t,\Omega,{\epsilon}))}\norm{\Omega}.
\end{align*}
Using estimate \eqref{eq:BoundonExpAndL}, we obtain the bound
\begin{align*}
\norm{\mathcal{T}(t,\eta(t,\Omega,{\epsilon}_{u}))}&\leq \int_{0}^{t}\norm{\mathcal{L}(s,\eta(t,\Omega,{\epsilon}_{u}))}\mathrm{d}s\\
&\leq \int_{0}^{t}\e^{s\norm{A}}\mathrm{d}s\leq t \e^{t\norm{A}}.
\end{align*}
Moreover, using the definition of $\eta$ in Equation \eqref{eq:Eta}, we have
$
\abs{1-\lambda(t,\Omega, \eta(t,\Omega,{\epsilon}))}\leq 1-\epsilon.
$
Also, using Lemma \ref{lem:BoundingGwithOmega}, we have $\norm{G\mathbf{B}_{p}}\leq 2\norm{\Omega}$. Besides that, using estimate \eqref{eq:BoundonApproxError} and the definition of $\eta$, we have (the arguments of $\eta$ are dropped for convenience)
\begin{align*}
\norm{\int_{0}^{t}\e^{sA}\mathrm{d}s-\mathcal{T}(t,\eta)}& \leq \int_{0}^{t}\norm{\e^{sA}-\mathcal{L}(s,\eta)}\mathrm{d}s\\
&\leq\int_{0}^{t}\frac{(s\norm{A})^{\eta}}{\eta!}\e^{s\norm{A}}\mathrm{d}s\\
&\leq  \frac{t\norm{A}}{\eta!}\e^{t\norm{A}} \int_{0}^{t}\mathrm{d}s\\
&\leq t^{2}\norm{A}\e^{t\norm{A}},
\end{align*}
where we have used the facts that $\eta\geq 1$ and $s\norm{A}\leq t\norm{A}\leq 1,~s\in \intcc{0,t}$. Therefore,
\begin{equation}\label{eq:dI1I2}
\begin{split}
\mathfrak{d}(I_{1}, I_{2})&\leq  2(1-\epsilon)t{\e^{t\norm{A}}}\norm{\Omega}+ t^2\norm{A}\e^{t\norm{A}}\norm{\Omega}\\
 &= (2(1-\epsilon)t+t^2\norm{A})\e^{t\norm{A}}\norm{\Omega}.  
\end{split}
\end{equation}
Now, we estimate $\mathfrak{d}\left(I_{2}, I_{3}\right)$. 
Using \cite[Theorem~3,~p.~21]{AubinCellina84}, $I_{2}$ can be  rewritten as
\begin{align*}
I_{2}&=(\frac{1}{t}\int_{0}^{t}\e^{sA}\mathrm{d}s) (t\Omega)=
(\frac{1}{t}\int_{0}^{t}\e^{sA}\mathrm{d}s) \int_{0}^{t}\Omega \mathrm{d}s
\\
&=\int_{0}^{t}B\Omega \mathrm{d}s,
\end{align*}
where 
$
B=({1}/{t})\int_{0}^{t}\exp(sA)\mathrm{d}s.
$
Then, the Hausdorff distance between $I_{2}$ and $I_{3}$ can be estimated, using Lemma \ref{lem:HausdorffDistance}(c), as
$$
\mathfrak{d}(I_{2},I_{3})\leq\norm{\Omega}\int_{0}^{t}\norm{B-\e^{sA}}\mathrm{d}s.
$$
Moreover, using the continuous differentiability of $\exp((\cdot)A)$,  the Ostrowski inequality in \cite[Theorem~1]{BarnettBuseCeroneDragomir02}, and  estimate \eqref{eq:BoundonExpAndL}, we have
$$\norm{B-\e^{sA}}\leq {t\norm{A}\e^{t\norm{A}}},~s\in \intcc{0,t}.
$$
Hence,
\begin{equation}\label{eq:dI2I3}
\mathfrak{d}(I_{2},I_{3})
 \leq  \norm{\Omega}\int_{0}^{t}{t\norm{A}\e^{t\norm{A}}} \mathrm{d}s\leq t^2\norm{\Omega}\norm{A}\e^{t\norm{A}}.
\end{equation}
By combining the bounds \eqref{eq:dI1I3TriangleInequality}, \eqref{eq:dI1I2}, and \eqref{eq:dI2I3},  we get 
\begin{align*}
\mathfrak{d}(I_{1},I_{3})&\leq  2((1-\epsilon)t+t^2\norm{A})\e^{t\norm{A}}\norm{\Omega}\\
&\leq 2((1-\epsilon)t+t^2\norm{A})\e^{T\norm{A}}\norm{\Omega}.
\end{align*}
\end{proof}

\begin{proof}[Proof of Lemma \ref{lem:BoundingSandV}]
It can be shown using induction that $S_{i}^{N}=\exp(i\tau A)X_{0}$ and $V_{i}^{N}=\exp(i\tau A)\int_{0}^{\tau}\exp(sA)U\mathrm{d}s$ for all $i\in \intcc{0;N}$. Hence,
$$
\norm{S_{i}^{N}}\leq \e^{i\tau \norm{A}}\norm{X_{0}} \leq \e^{T\norm{A}}\norm{X_{0}}
$$
and
$$
\norm{V_{i}^{N}}\leq \e^{i\tau \norm{A}}\int_{0}^{\tau}\e^{s\norm{A}} \mathrm{d}s \norm{U} \leq 
\tau\e^{2T\norm{A}}\norm{U}
$$
for all $i\in \intcc{0;N}$.
\end{proof}
\end{appendix}
\bibliographystyle{IEEEtran}

\end{document}